\def\dd{\mathinner{.\,.}}
\newcommand{\cO}{\mathcal{O}}
\newcommand{\cOtilde}{\tilde{\cO}}
\newcommand{\eps}{\varepsilon}
\newcommand{\PT}{\mathsf{PT}}
\newcommand{\CUp}{\mathsf{C_{up}}}
\newcommand{\CDown}{\mathsf{C_{down}}}
\newcommand{\C}{\mathsf{C}}
\newcommand{\dUp}{\mathsf{d_{up}}}
\newcommand{\dDown}{\mathsf{d_{down}}}
\newcommand{\dUpDown}{\mathsf{d}}
\newcommand{\JUp}{\mathsf{J_{up}}}
\newcommand{\JDown}{\mathsf{J_{down}}}
\newcommand{\YL}{\mathsf{Y_{\ell}}}
\newcommand{\YR}{\mathsf{Y_{r}}}
\newcommand{\ZL}{\mathsf{Z_{\ell}}}
\newcommand{\ZR}{\mathsf{Z_{r}}}
\newcommand{\gen}{\textsf{gen}}
\newcommand{\val}{\textsf{val}}
\newcommand{\pathtree}{\textsf{path}}
\def\poly{\operatorname{poly}}
\newcommand{\lev}{\textsf{lev}}
\newcommand{\Lv}{\mathcal{L}}
\newcommand{\makestring}{\textsf{makestring}}
\newcommand{\lcp}{\textsf{lcp}}
\newcommand{\decompose}{\textsc{decompose}}
\theoremstyle{plain}
  \newtheorem{theorem}{Theorem}[section]
  \newtheorem{lemma}[theorem]{Lemma}  
  \newtheorem{corollary}[theorem]{Corollary} 
  \newtheorem{proposition}[theorem]{Proposition}
  \theoremstyle{definition}
  \newtheorem{definition}[theorem]{Definition}
  \newtheorem{example}[theorem]{Example}
  \newtheorem{remark}[theorem]{Remark}
  \newtheorem*{claim}{Claim}
\title{Dynamic Longest Common Substring in Polylogarithmic Time}
\author[1]{Panagiotis Charalampopoulos}
\author[2]{Pawe{\l} Gawrychowski}
\author[2]{Karol Pokorski}
\affil[1]{King’s College London, London, UK\\
    \texttt{p.charalampopoulos@kcl.ac.uk}}
\affil[2]{Institute of Computer Science, University of Wrocław, Poland\\
	\texttt{$\{$gawry,pokorski$\}$@cs.uni.wroc.pl}}
\date{\vspace{-5ex}}
\newcommand{\FIGURE}[4]{
\begin{figure}[#1]
\begin{centering}
\includegraphics[width={#2}\textwidth]{#3.pdf}
\caption{#4}
\label{fig:#3}
\end{centering}
\end{figure}
}
\newcommand{\genproblem}[3]{
\begin{framed}
  \noindent
  \textbf{Problem:} \textsc{#1}

  \noindent
  \textbf{Input:} #2

  \noindent
  \textbf{Output:} #3
\end{framed}
}
\newcommand{\DSproblem}[3]{
\begin{framed}
  \noindent
  \textbf{Problem:} \textsc{#1}

  \noindent
  \textbf{Input:} #2

  \noindent
  \textbf{Query:} #3
\end{framed}
}
\newcommand{\Dynproblem}[4]{
\begin{framed}
  \noindent
  \textbf{Problem:} \textsc{#1}

  \noindent
  \textbf{Input:} #2

  \noindent
  \textbf{Update:} #3
  
  \noindent
  \textbf{Query:} #4
\end{framed}
}
\newcommand{\T}{\mathcal{T}}
\begin{document}
\maketitle

\thispagestyle{empty}

\begin{abstract}
The longest common substring problem consists in finding a longest string that appears as a (contiguous) substring of
two input strings. We consider the dynamic variant of this problem, in which we are to maintain two dynamic strings $S$
and $T$, each of length at most~$n$, that undergo edit operations, i.e., substitutions, insertions, and deletions of letters, in order to be able to return a longest common
substring after each edit.
Amir, Charalampopoulos, Pissis, and Radoszewski [Algorithmica 2020] presented a solution for this problem that requires $\tilde{\mathcal{O}}(n^{2/3})$ time per update.
This brought the challenge of determining whether there exists a solution
with polylogarithmic update time or we should expect a polynomial (conditional) lower bound.
We answer this question by designing an exponentially faster algorithm that processes each edit operation in amortized $\cO(\log^7 n)$ 
time with high probability.
Our solution relies on exploiting the local consistency of the parsing of a collection of dynamic strings due to
Gawrychowski, Karczmarz, Kociumaka, Łącki, and Sankowski [SODA 2018], and on maintaining two dynamic trees with labeled bicolored leaves.
We complement our solution with a lower bound of $\Omega(\log n/ \log\log n)$ for the update time of any
polynomial-size data structure that maintains an LCS of two dynamic strings,
and the same lower bound for the update time of any $\tilde{\mathcal{O}}(n)$-size data structure that maintains the LCS of a static and
a dynamic string. Both lower bounds hold even allowing amortization and randomization.
This requires extending Pătraşcu's reduction from the 
lopsided set disjointness problem to the butterfly reachability problem [SICOMP 2011].
\end{abstract}

\clearpage
\setcounter{page}{1}

\section{Introduction}

The well-known longest common substring (LCS) problem, formally stated below, was conjectured by Knuth to require
$\Omega(n \log n)$ time.  However, in his seminal paper that introduced suffix trees, Weiner showed how to solve it in
linear time (for constant alphabets)~\cite{DBLP:conf/focs/Weiner73}. 
For polynomially-bounded integer alphabets, the linear-time construction of suffix trees by Farach yielded an $\cO(n)$-time algorithm~\cite{DBLP:conf/focs/Farach97}.
A faster algorithm can be achieved in the word RAM model of computation when the alphabet is small~\cite{DBLP:conf/esa/Charalampopoulos21}.
Recently, the quantum computational complexity of the LCS problem was settled to be at most a polylogarithmic factor away from $n^{2/3}$~\cite{DBLP:conf/innovations/GallS22,DBLP:conf/soda/AkmalJ22,DBLP:journals/corr/abs-2211-15945}. 
Further, many different versions of this classical question have been 
considered, such as obtaining a tradeoff between the time and the working
space~\cite{DBLP:conf/cpm/StarikovskayaV13,DBLP:conf/esa/KociumakaSV14,DBLP:conf/cpm/Nun0KK20,DBLP:conf/cpm/BathieCS24}, or computing an approximate LCS under either
the Hamming or the edit distance (see~\cite{DBLP:journals/jcb/ThankachanAA16,DBLP:conf/cpm/Charalampopoulos18,DBLP:journals/corr/abs-1712-08573,DBLP:conf/esa/Charalampopoulos21} and references therein).

\genproblem{\textsc{Longest Common Substring}}
{Two strings $S$ and $T$ of length at most $n$ over an alphabet $\Sigma$.}
{A longest substring $X$ of $S$ that is a substring of $T$.}

We consider the dynamic version of the LCS problem, where the strings are updated and we are to report an LCS after each update.
That is, we return the length of an LCS and a pair of starting positions of its occurrences in the strings.
The allowed update operations are edit operations, i.e., substitutions, insertions, and deletions of single letters in either $S$ or $T$.
For a variant that we consider, we actually show that if only substitution operations are allowed, the problem admits more efficient solutions.

Dynamic problems on strings are of wide interest. Maybe the most basic question in this direction
is that of maintaining a dynamic text while enabling efficient pattern matching queries.
This is clearly motivated by the possible application in a text editor, where
the text is dynamic and the user may issue pattern matching queries.
A considerable amount of work has been carried out on this problem~\cite{Gu94,Fer97,FG98}.
The first structure achieving
polylogarithmic update time and optimal query time for this problem was designed by Sahinalp and Vishkin~\cite{SahinalpVishkin}.
Later, the update time was improved to $\cO(\log^2 n \log \log n \log^* n)$ at the cost of $\cO(\log n \log \log n)$
additional time per query by Alstrup, Brodal, and Rauhe~\cite{Alstrup00patternmatching}.
Recently, Gawrychowski et al.~\cite{ods} presented a data structure that requires $\cO(\log^2 n)$ time per
update and allows for time-optimal queries.
Another problem that has obtained significant attention is that of maintaining a dynamic collection of strings
so that equality queries or longest common prefix queries can be answered efficiently~\cite{st94,ksu97,Alstrup00patternmatching,ods,DBLP:conf/sosa/CardinalI21,DBLP:conf/sosa/AxiotisBBJNTW21,DBLP:conf/esa/LiptakM024}.
Other problems on strings that have been studied in the dynamic setting include
string alignment and (weighted) edit distance~\cite{DBLP:journals/jda/HyyroNI15,DBLP:conf/cpm/Charalampopoulos20a,DBLP:journals/corr/abs-2404-06401,DBLP:conf/focs/CassisKW23,DBLP:conf/focs/Kociumaka0S23,DBLP:conf/esa/BonehGK25},
longest increasing subsequence~\cite{DBLP:conf/stoc/MitzenmacherS20,DBLP:journals/corr/abs-2011-09761,DBLP:journals/corr/abs-2011-10874,DBLP:journals/corr/abs-2102-11797},
time warping distance~\cite{DBLP:conf/spire/NishiNIBT20,DBLP:conf/soda/BringmannFHKKR24},
approximate pattern matching~\cite{CGLS18,unified,DBLP:conf/cpm/CliffordGK0U22},
internal pattern matching~\cite{unified,DBLP:conf/stoc/KempaK22,DBLP:conf/spire/DuysterK24},
closest string~\cite{DBLP:journals/corr/abs-2205-00441},
maintaining the suffix array~\cite{DBLP:conf/isaac/AmirB20,DBLP:conf/stoc/KempaK22,DBLP:journals/corr/abs-2404-07510} and the LZ77 factorization~\cite{DBLP:conf/soda/Boneh0K26},
as well as maintaining repetitions~\cite{amir_et_al:LIPIcs:2019:11126} and
a longest palindromic substring~\cite{amir19,DBLP:journals/corr/abs-1906-09732}.
Additionally, dynamic (approximate) pattern matching for 2D-strings has been studied~\cite{DBLP:conf/spire/CliffordFSV16}.

As for the LCS problem itself, Amir, Charalampopoulos, Iliopoulos, Pissis, and Radoszewski \cite{amir17} initiated the study of this question in the dynamic
setting by considering the problem of constructing a data structure over two strings that returns the LCS after
a single edit operation in one of the strings. However, in their solution, after each edit operation, the string
is immediately reverted to its original version. Abedin, Hooshmand, Ganguly, and Thankachan~\cite{DBLP:conf/cpm/AbedinH0T18} improved the 
tradeoffs for this problem by designing a more efficient solution for the so-called heaviest induced ancestors problem.
Amir and Boneh~\cite{DBLP:conf/cpm/AmirB18} investigated some special cases of the \emph{partially dynamic LCS} problem
(in which one of the strings is assumed to be static); namely, the case where the static string is periodic and the case
where the substitutions in the dynamic string are substitutions with some letter $\# \not\in \Sigma$.
Finally, Amir et al.~\cite{amir19} presented the first algorithm for the \emph{fully dynamic LCS} problem (in
which both strings are subject to updates) that needs only sublinear time per edit operation,
namely $\cOtilde(n^{2/3})$ time.
As a stepping stone towards this result, they designed an algorithm for the partially dynamic LCS problem
that processes each edit operation in $\cOtilde(\sqrt{n})$ time.

For some natural dynamic problems, the best known bounds on the query and the update time are of the form
$\cO(n^{\alpha})$, where $n$ is the size of the input and $\alpha$ is some constant. Henzinger, Krinninger, Nanogkai, and Saranurak~\cite{HenzingerKNS15}
introduced the online Boolean matrix-vector multiplication conjecture that can be used to provide some justification for
the polynomial-time hardness of many such dynamic problems in a unified manner. This brings the question
of determining whether the bound on the update time in the dynamic LCS problem should be polynomial or subpolynomial.

We answer this question by significantly improving on the bounds presented by Amir et al.~\cite{amir19} and presenting
a solution for the fully dynamic LCS problem that handles each update in amortized $\cO(\log^7 n)$ time with
high probability, using $\cOtilde(n)$ space. As a warm-up, we present a (relatively simple) deterministic solution for the partially dynamic LCS problem that
handles each update in $\cO(\log n / \log\log n)$ time, using $\cOtilde(n)$ space.

After having determined that the complexity of fully dynamic LCS is polylogarithmic, the next natural question is
whether we can further improve the bound to polyloglogarithmic. By now, we have techniques that can be used to
not only distinguish between these two situations but (in some cases) also provide tight bounds. As a prime example,
static predecessor for a set of $n$ numbers from $[n^{2}]$ requires $\Omega(\log\log n)$ time for structures
of size $\cOtilde(n)$~\cite{patrascu06pred}, and dynamic connectivity for forests requires $\Omega(\log n)$
time~\cite{patrascu06loglb}, with both bounds being asymptotically tight. In some cases, seemingly similar problems
might have different complexities, as in the orthogonal range emptiness problem: Nekrich~\cite{DBLP:conf/compgeom/Nekrich07}
showed a data structure of size $\cO(n\log^{4}n)$ with $\cO(\log^2 \log n)$ query time for 3 dimensions, while 
for the same problem in 4 dimensions P{\v a}tra{\c s}cu showed that any polynomial-size data structure requires
$\Omega(\log n / \log \log n)$ query time~\cite{DBLP:journals/siamcomp/Patrascu11}.

For the case where the allowed updates are edit operations, 
we show that the partially dynamic LCS problem on strings of length $n$ is already hard:
we provide an unconditional $\Omega(\log n/ \log \log n)$
lower bound for the update time of any polynomial-size data structure.
Then, we focus on the more challenging case where only substitutions are allowed.
For the partially dynamic LCS problem on strings of length $n$, we provide an unconditional $\Omega(\log n/ \log \log n)$
lower bound for the update time of any data structure of size $\cOtilde(n)$.
For the fully dynamic LCS problem, we are able to show an unconditional $\Omega(\log n/ \log \log n)$ lower
bound for the update time of any polynomial-size data structure.
The latter two lower bounds hold even when both amortization and 
Las Vegas randomization are allowed.

Finally, we demonstrate that the difference in the allowed space between the above lower bounds is indeed necessary.
To this end, we show that when we only allow substitutions, partially dynamic LCS admits an $\cO(n^{1+\epsilon})$-space, $\cO(\log\log n)$-update time solution, for any constant $\epsilon>0$.

See \cref{tab:results} for a comprehensive overview of our results.

\renewcommand{\arraystretch}{1.7}
\begin{table}[htpb!]
\begin{centering}
\begin{tabular}{|c|c|c|c|c|}
\hline
\textbf{Variant} & \textbf{Operations} & \textbf{Space} & \textbf{Update time} &  \\
\hline
\multirow{7}{*}{Partially dynamic} & \multirow{3}{*}{Substitutions} & $\cOtilde(n)$ & $\Omega\left(\frac{\log n}{\log \log n}\right)$ & \cref{thm:lbpartially} \\
 & & $\cOtilde(n)$ & $\cO(\frac{\log n}{\log \log n})$ & \cref{thm:onesided}\\
 & & $\cO(n^{1+\epsilon})$ & $\cO(\log \log n)$ & \cref{cor:onesided}\\
\cline{2-5}
 & \multirow{2}{*}{Edit operations} & $\cO(\poly n)$ & $\Omega\left(\frac{\log n}{\log \log n}\right)$ & \cref{lb:indels} \\
 & & $\cOtilde(n)$ & $\cO\left(\frac{\log n}{\log \log n}\right)$ & \cref{thm:onesided_edit}\\
\hline
\multirow{2}{*}{Fully dynamic} & Substitutions & $\cO(\poly n)$ & $\Omega\left(\frac{\log n}{\log \log n}\right)$ & \cref{thm:lb}\\
\cline{2-5}
& Edit operations & $\cOtilde(n)$ & $\cO(\log^7 n)$ & \cref{thm:fully}\\
\hline
\end{tabular}
\caption{A summary of update time complexities for variants of the dynamic LCS problem. 
Parameter $\epsilon$ is a predefined arbitrarily small positive constant.
Note that several of the stated upper and lower bounds allow for amortization and/or randomization; for each of the bounds, this is explicitly stated in the corresponding section.\label{tab:results}}
\end{centering}
\end{table}

\paragraph{Techniques and roadmap.}

We first consider the partially dynamic version of the problem where updates are only allowed in one of the strings, say $S$,
in Section~\ref{sec:partially}.
This problem is easier as we can use the static string $T$ as a reference point.
We maintain a partition of $S$ into blocks (i.e., substrings of $S$
whose concatenation equals $S$), such that each block is a substring
of $T$, but the concatenation of any two consecutive blocks is not.
This is similar to the approach of~\cite{DBLP:journals/talg/AmirLLS07} and other works that consider one dynamic and one static string.
The improvement upon the $\cOtilde(\sqrt{n})$-time algorithm presented in~\cite{amir19} comes exactly from imposing the aforementioned maximality property, which guarantees that the sought LCS is a substring of the concatenation of at most three consecutive blocks and contains the first letter of one of these blocks.
The latter property allows us to anchor the LCS in $S$.
Upon an update, we can maintain the block decomposition, by updating a constant number of blocks.
It then suffices to show how to efficiently compute the longest substring of $T$ that contains the first letter of a given block.
We reduce this problem to answering a \emph{heaviest induced ancestors} (HIA) query.
This reduction was also presented in~\cite{amir17,DBLP:conf/cpm/AbedinH0T18}, but we describe the details to help the
reader develop intuition towards understanding the
more involved solution of fully dynamic LCS.

In Section~\ref{sec:fully}, we move to the fully dynamic LCS problem. We try to anchor the LCS in both strings as follows. 
For each of the strings $S$ and $T$ we show how to maintain, in $\log^{\cO(1)} n$ time, a collection of pairs of adjacent fragments (e.g.~$(S[i \dd j-1],S[j \dd k])$), denoted by $J_S$ for $S$ and $J_T$ for $T$, with the following property. For any common substring $X$ of $S$ and $T$ there exists a partition $X=X_{\ell}X_r$ for which there exists a pair $(U_{\ell},U_r) \in J_S$ and a pair $(V_{\ell},V_r) \in J_T$ such that $X_{\ell}$ is a suffix of both $U_{\ell}$ and $V_{\ell}$, while $X_r$ is a prefix of both $U_r$ and $V_r$.
We can maintain this collection by exploiting the properties of the locally consistent parsing previously used for maintaining a dynamic collection of strings~\cite{ods}.
We maintain tries for fragments in the collections $J_S$ and $J_T$, and reduce the dynamic LCS problem to a problem on dynamic bicolored trees, which we solve by using dynamic heavy-light decompositions and 2D range trees.

The lower bound for the case where edit operations are allowed follows from a variant of the dynamic partial sums problem~\cite{10.1145/73007.73040,DBLP:journals/siamcomp/HusfeldtR03}.
The lower bounds for the case where only substitution operations are allowed are much more involved.
For those, we first show a reduction from the problem of answering reachability queries in butterfly graphs that was considered in the seminal paper of Pătraşcu~\cite{DBLP:journals/siamcomp/Patrascu11} to the HIA problem.
However, both of these are data structure problems.
For the lower bound for the fully dynamic LCS problem, we overcome this by using as an intermediate step a reduction to a (restricted) dynamic variant of the HIA problem.
In fact, in order to show that these lower bounds hold even when Las Vegas randomization is allowed,
we slightly generalize Pătraşcu's reduction from the lopsided set disjointness problem to the butterfly reachability problem.
See~\cref{sec:lowerbound}.
 
\section{Preliminaries}

We use $[n]$ to denote the set $\{1, 2, \dots, n\}$.
Let $S=S[1]S[2]\cdots S[n]$ be a \emph{string} of length $|S|=n$ over an integer alphabet $\Sigma$. For two positions $i$ and $j$ on $S$, we denote by $S[i\dd j]=S[i]\cdots S[j]$ the \emph{fragment} of $S$ that starts at position $i$ and ends at position $j$ (it is the empty string $\varepsilon$ if $j<i$).
A string $Y$, of length $m$ with $0<m\leq n$, is a \emph{substring} of $S$ if there exists a position $i$ in $S$ such that $Y=S[i \dd i+m-1]$.
The prefix of $S$ ending at the $i$-th letter of $S$ is denoted by $S[\dd i]$ and the suffix of $S$ starting at the $i$-th letter of $S$ is denoted by $S[i \dd]$.
The reverse string of $S$ is denoted by $S^R$.
The concatenation of strings $S$ and $T$ is denoted by $ST$, and the concatenation of $k$ copies of string $S$ is denoted by $S^k$.
By $\lcp(S, T)$ we denote the length of the longest common prefix of
strings $S$ and $T$.

The \emph{trie} of a collection $C=\{S_1, S_2, \dots, S_k\}$ of strings is
the smallest rooted tree with edges labeled by single letters,
such that
every string $S$ that is a prefix of some string in $C$ is represented by some node $v$,
where the concatenation of the labels of the edges of the root-to-$v$ path,
the \emph{path-label} of $v$, is equal to $S$.
The compacted trie of $C$ is obtained by contracting maximal paths consisting of nodes with one child to an edge labeled by the concatenation of the labels of the edges of the path.
Usually, the label of the new edge is stored as the start/end indices of the corresponding fragment of some $S_{i}$.
The \emph{suffix tree} of a string $T$ is the compacted trie of all suffixes of $T \$$ where $\$$ is a letter smaller than all letters of the alphabet $\Sigma$.
It can be constructed in $\cO(|T|)$ time for linear-time sortable alphabets~\cite{DBLP:conf/focs/Farach97}.
For a node $u$ in a (compacted) trie, we define its \emph{depth} as the number
of edges on the path from the root to $u$.
Analogously, we define the \emph{string-depth} of $u$ as the total length of labels
along the path from the root to $u$.

We say that a tree is weighted if there is a weight $w(u)$ associated with each node $u$ of the tree, such that weights along the root-to-leaf paths are increasing, i.e., for any node $u$ other than the root, $w(u) > w(\operatorname{parent}(u))$. Further, we say that a tree is labeled if each of its leaves is given a distinct label.

\begin{definition}
  For rooted, weighted, labeled trees $\T_1$ and $\T_2$, two nodes $u \in \T_1$ and $v \in \T_2$,
   are \emph{induced} (by a label $\ell$) if and only if there are
  leaves $x$ and $y$ with the same label $\ell$, such that $x$ is a descendant of $u$
  and $y$ is a descendant of $v$.
\end{definition}

\DSproblem{\textsc{Heaviest Induced Ancestors}}
{Two rooted, weighted, labeled trees $\T_1$ and $\T_2$ of total size $n$.}
{Given a pair of nodes $u \in \T_1$ and $v \in \T_2$, return a pair of nodes $u', v'$ such that $u'$ is ancestor of $u$, $v'$ is ancestor of
$v$, $u'$ and  $v'$ are induced and they have the largest total combined weight $w(u') + w(v')$.}

This problem was introduced in~\cite{DBLP:conf/cccg/GagieGN13}, with the last advances made in~\cite{charalampopoulos2023optimal}.
The next lemma encapsulates one of the known trade-offs.

\begin{lemma}[\cite{charalampopoulos2023optimal}]\label{lem:hia}
There is a data structure for the \textsc{Heaviest Induced Ancestors} problem, that can be built in $\cOtilde(n)$ time and answers queries in $\cO(\frac{\log n}{\log\log n})$ time.
\end{lemma}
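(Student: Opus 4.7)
My plan is to reduce the HIA problem to a collection of two-dimensional orthogonal range-maximum subproblems via a heavy-path decomposition of both $\T_1$ and $\T_2$, in the style of Gagie, Gawrychowski, and Nekrich. After computing the two decompositions, observe that any root-to-node path in a tree of size $n$ meets $\cO(\log n)$ heavy paths; consequently, the ancestors of a query node $u\in\T_1$ lie on $\cO(\log n)$ heavy paths $P_1^{(1)},\dots,P_1^{(\cO(\log n))}$, and symmetrically for $v\in\T_2$. It therefore suffices to build, for every pair $(P_1,P_2)$ of heavy paths (one from each tree), a substructure that answers the restricted HIA query in which we additionally require $u'\in P_1$ and $v'\in P_2$; the final answer is then the maximum over these subqueries.

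For a fixed pair $(P_1,P_2)$, I would represent each label $\ell$ whose two leaves have ancestors on both paths by a single point $(x_\ell,y_\ell)$, where $x_\ell$ is the depth on $P_1$ of the deepest ancestor on $P_1$ of the $\T_1$-leaf labeled $\ell$, and $y_\ell$ is analogously defined for $P_2$. Because weights strictly increase along root-to-leaf paths, the heaviest valid $u'$ for label $\ell$ is the node of $P_1$ at depth $\min(d(u),x_\ell)$, and similarly for $v'$ on $P_2$. A case split on the signs of $d(u)-x_\ell$ and $d(v)-y_\ell$ turns each restricted query into $\cO(1)$ orthogonal range-maximum queries over the point set $\{(x_\ell,y_\ell)\}$ with appropriate additive weight functions, each of which a standard two-dimensional range tree handles in $\cO(\log n)$ time.

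For the preprocessing, I would charge points to labels: every label contributes to at most $\cO(\log^2 n)$ pairs of heavy paths, so the total number of points across all range trees is $\cO(n\log^2 n)$, and pre-sorting once and inserting into the substructures in batch yields $\cO(n\log^2 n)$ aggregate construction time. The step I expect to be most delicate is matching the $\cO(\log^2 n)$ query bound without blowup: a naive analysis of $\cO(\log^2 n)$ pair-of-paths subqueries at $\cO(\log n)$ each would give $\cO(\log^3 n)$ per query, so the argument must either batch the inner binary searches via fractional cascading along one coordinate or confine the heavy-path decomposition to a single tree and handle the other tree globally inside the substructures, bringing the effective number of non-trivial subqueries back down so that the total remains $\cO(\log^2 n)$ as claimed.
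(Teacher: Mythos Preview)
The paper does not prove this lemma; it quotes it as a black box from Gagie, Gawrychowski, and Nekrich. Your sketch is a faithful outline of the approach in that reference: heavy-path decomposition of both trees, one point per label for each pair of heavy paths it touches (giving $\cO(n\log^2 n)$ points in total, which is exactly the source of the preprocessing bound), and a four-way case split on the signs of $d(u)-x_\ell$ and $d(v)-y_\ell$ for the pair-restricted subquery.

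The step you flag as delicate is indeed the crux, and your two proposed fixes (fractional cascading, or decomposing only one tree) are not the route taken. The resolution in the cited work is instead to observe that the pair-restricted subquery is far simpler than a general 2D range-maximum: since the objective $w_1(\min(x_\ell,d(u)))+w_2(\min(y_\ell,d(v)))$ is monotone in both $x_\ell$ and $y_\ell$, only Pareto-maximal points can be optimal, and on that staircase your four cases collapse to a predecessor lookup on each coordinate plus an $\cO(1)$ range-maximum on a precomputed array of staircase weights. The present paper alludes to precisely this in the proof of \cref{lem:hia2}, noting that the two-paths subproblem becomes constant-time once one ``preprocess[es] the predecessor for every possible depth on both paths.'' So your plan is correct in outline; what is missing is replacing the generic 2D range tree inside each pair by this lighter staircase-plus-predecessor structure, which brings the per-pair cost from $\cO(\log n)$ down to $\cO(1)$ and the overall query time to the stated $\cO(\log^2 n)$.
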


\section{Partially Dynamic LCS}
\label{sec:partially}

In this section, we describe an algorithm for solving the partially dynamic
variant of the LCS problem, where updates are only allowed on one of the strings, say $S$, while
$T$ is given in advance and is not subject to change.
We first consider the case where only substitution operations are allowed and then
extend our solution for the case where insertions and deletions are also allowed.

For simplicity, we assume that $S$ is initially equal to
$\texttt{\$}^{|S|}$, for $\texttt{\$} \not \in \Sigma$. We can always obtain any
other initial $S$ by performing an appropriate sequence of updates in the beginning.
After these initial updates, let us assume that all the letters of $S$ throughout
the execution of the algorithm occur at least once in $T$;
we waive this assumption later.

\begin{definition}\label{def:blockdec}
  A \emph{block decomposition} of string $S$ with respect to string $T$ is
  a sequence of strings $(s_1, s_2, \dots, s_k)$ such that
  $S = s_1s_2 \dots s_k$ and every $s_i$ is a fragment of $T$.
  An element of the sequence is called a \emph{block} of the decomposition.
  A decomposition is \emph{maximal} if and only if
  $s_is_{i+1}$ is not a substring of $T$ for every $i \in [k-1]$.
 \end{definition}

Maximal block decompositions are not necessarily
unique and may have different lengths, but all admit the following useful property.

\begin{lemma}
  For any maximal block decomposition of $S$ with respect to $T$, any fragment of $S$ that
  occurs in $T$ is contained in at most three consecutive blocks.
  Furthermore, any occurrence of an LCS of $S$ and $T$ in $S$ must contain the first letter of some block.
  \label{lem:lcs_three_blocks}
\end{lemma}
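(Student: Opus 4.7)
The plan is to prove both parts by contradiction, relying only on the two clauses of~\cref{def:blockdec}: each $s_i$ is itself a substring of $T$, while $s_is_{i+1}$ is not a substring of $T$ for any $i$. No other machinery should be needed.

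For the first part, I would assume for contradiction that some fragment $F$ of $S$ that occurs in $T$ overlaps $\ell \geq 4$ consecutive blocks $s_a, s_{a+1}, \ldots, s_{a+\ell-1}$, that is, $F$ starts inside $s_a$ and ends inside $s_{a+\ell-1}$. Since $\ell - 1 \geq 3$, the two middle blocks $s_{a+1}$ and $s_{a+2}$ lie strictly between the start and the end positions of $F$ and are therefore contained in $F$ in full. Hence $s_{a+1}s_{a+2}$ is a substring of $F$, and since $F$ is a substring of $T$, so is $s_{a+1}s_{a+2}$, violating maximality.

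For the second part, I would suppose that an occurrence $S[i\dd j]$ of an LCS of $S$ and $T$ contains the first letter of no block. Let $s_k$ be the block containing position $i$. The assumption forces $i$ to be strictly past the starting position of $s_k$, and $j$ to be strictly before the starting position of $s_{k+1}$ (otherwise $S[i\dd j]$ would include the first letter of $s_{k+1}$). Therefore $S[i\dd j]$ is a proper substring of $s_k$, and in particular $|s_k| > j - i + 1$. But $s_k$ is a substring of both $T$ (by the definition of a block) and $S$, so it is a common substring strictly longer than the LCS --- a contradiction.

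I do not anticipate any real obstacle here: both arguments are purely a matter of carefully unpacking the definitions. The only step requiring a moment of care is the boundary handling in the second part, specifically ruling out that $j$ reaches or passes the starting position of $s_{k+1}$; everything else is direct. No new combinatorial or algorithmic idea appears to be required.
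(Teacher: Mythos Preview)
Your proposal is correct and matches the paper's proof essentially line for line: both parts are argued by contradiction, the first by observing that two middle blocks $s_{a+1}s_{a+2}$ would be a substring of $T$, and the second by showing that an LCS occurrence avoiding every block's first letter would be a proper substring of a single block and hence strictly shorter than that block. Your write-up is, if anything, slightly more explicit about the boundary case in the second part than the paper's.
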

\begin{proof}
  We prove the first claim by contradiction. If $(s_1, s_2, \dots, s_k)$ is a maximal block decomposition
  of $S$ with respect to $T$ and a fragment of $S$ that occurs in $T$ spans at least four consecutive blocks
  $s_i, s_{i+1}, s_{i+2}, \dots, s_j$, then $s_{i+1}s_{i+2}$ is a substring of $T$, a contradiction.
  
  As for the second claim, it is enough to observe that if an occurrence of an LCS in $S$ starts in some other than the first position of a block $b$, then it must contain the first letter of the next block, as otherwise its length would be smaller than the length of block $b$,
  which is a common substring of $S$ and $T$.
\end{proof}

We will show that an update in $S$ can be processed by considering a constant number of blocks in a
maximal block decomposition of $S$ with respect to $T$.
We first summarize the basic building block needed for efficiently maintaining such a maximal block decomposition.

\begin{lemma}\label{lem:rangemerging}
Let $T$ be a string of length at most $n$. In $\cO(n \log^2 n)$-time, we can construct an $\cO(n\log n)$-space
data structure that, given two fragments $U$ and $V$ of $T$, can compute a longest fragment of $T$ that is equal to a prefix of
$UV$ in $\cO(\log \log n)$ time.
\end{lemma}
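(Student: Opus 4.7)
The plan is to reduce the query, after appropriate suffix-tree preprocessing, to a 2D range predecessor/successor query on an $n\times n$ grid. First I would build the suffix tree $\T_T$ of $T$ and the suffix tree $\T_R$ of $T^R$, both preprocessed for weighted-ancestor queries in $\cO(\log\log n)$ time and LCA queries in $\cO(1)$ time within $\cO(n)$ space. The key observation is that the desired length equals $|U|+\ell^*$, where $\ell^*=\min(|V|, \max_i\lcp(T[i+|U|\dd],V))$ and $i$ ranges over occurrences of $U$ in $T$. Writing $j=i+|U|$, the condition ``$i$ is an occurrence of $U$ in $T$'' is equivalent to ``$i':=n-j+2$ is an occurrence of $U^R$ in $T^R$''; the latter positions form the SA-interval $[\ell',r']$ of the locus $\mu'$ of $U^R$ in $\T_R$, obtainable by a single weighted-ancestor query started from any $\T_R$-leaf representing a $T^R$-suffix with prefix $U^R$ (such a leaf is directly given by $U=T[u_1\dd u_2]$, since then $U^R=T^R[n-u_2+1\dd n-u_1+1]$). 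Writing $\nu$ for the locus of $V$ in $\T_T$ with SA-interval $[\ell_\nu,r_\nu]$, and $\pi,\pi'$ for the inverse SAs of $T$ and $T^R$, the quantity $\lcp(T[j\dd],V)$ equals the string-depth of $\text{LCA}_{\T_T}(\lambda_j,\nu)$, where $\lambda_j$ denotes the $\T_T$-leaf for $T[j\dd]$.

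As a function of $\pi(j)$, this string-depth equals $|V|$ whenever $\pi(j)\in[\ell_\nu,r_\nu]$ and is monotonically non-increasing as $\pi(j)$ moves away from this interval in either direction. Hence over $j\in Q:=\{n-i'+2:i'\in[\ell',r'],\, i'\ge 2\}$, the maximum is attained either inside the interval (yielding the final answer $|U|+|V|$) or at the $\pi$-predecessor of $\ell_\nu$ or the $\pi$-successor of $r_\nu$ among $\{\pi(j):j\in Q\}$. I would therefore, for every $i'\in[2,n]$, store the point $(\pi(n-i'+2),\pi'(i'))$ in a combined 2D range-emptiness and 2D range predecessor/successor structure on the $n\times n$ grid, supporting queries in $\cO(\log\log n)$ time within $\cO(n)$ space; such a structure can be constructed in $\cO(n\log^2 n)$ time, dominating the preprocessing.

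To answer a query $(U,V)$ I would compute $\mu'$, $[\ell',r']$, $\nu$, and $[\ell_\nu,r_\nu]$ via weighted-ancestor queries, test range-emptiness of $[\ell_\nu,r_\nu]\times[\ell',r']$ and output $|U|+|V|$ if this rectangle is non-empty, and otherwise retrieve the $x$-predecessor of $\ell_\nu$ and the $x$-successor of $r_\nu$ among points with $y\in[\ell',r']$, compute their $\T_T$-LCA depths with $\nu$, and output $|U|$ plus the larger of the two. The trivial cases $U=\varepsilon$ and $V=\varepsilon$ return $|V|$ and $|U|$ respectively (as both $U,V$ are themselves fragments of $T$), while the boundary $i'=1$ (corresponding to $U$ occurring at the very end of $T$) is correctly discarded by the restriction $i'\ge 2$. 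The main obstacle will be arranging for the 2D range predecessor/successor structure to fit within $\cO(n)$ space while answering queries in $\cO(\log\log n)$ time; the reduction itself and the unimodality characterization of LCA-depth in DFS order that underlies it are standard.
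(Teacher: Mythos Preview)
Your approach is correct but takes a genuinely different route from the paper. The paper preprocesses the suffix tree of $T$ for \emph{unrooted LCP queries} in the sense of Cole--Gottlieb--Lewenstein: given nodes $u$ and $v$, return the (implicit or explicit) node where the search for the path-label of $v$ starting from $u$ terminates. A query is then answered by locating the loci of $U$ and $V$ via two weighted-ancestor queries and issuing a single unrooted LCP query, which directly yields the locus of the longest prefix of $UV$ occurring in $T$. You instead work at the level of occurrences: you capture all right-extension points of $U$ as the SA interval of $U^R$ in the suffix tree of $T^R$, and then exploit the unimodality of LCA string-depth in SA order to pick the best extension point via a 2D range predecessor/successor query. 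Your route trades one specialised black box for more widely used primitives (two suffix trees, LCA, orthogonal range search) at the cost of more moving parts and a few extra boundary cases. Regarding the space obstacle you flag: the paper's own proof does not meet the lemma's stated $\cO(n)$ space either, since the Cole et~al.\ unrooted-LCP structure occupies $\cO(n\log n)$ words; with $\cO(n\log n)$ space your 2D range-successor component is likewise standard, so the two approaches are on equal footing and the $\cO(n)$ in the statement is best read as a minor inaccuracy rather than a constraint you must match exactly.
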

\begin{proof}
We build a weighted ancestor queries structure over the suffix tree of $T$. 
A weighted ancestor query $(\ell,u)$ on a (weighted) tree $\mathcal{T}$, asks for the deepest ancestor of $u$ with weight at most $\ell$.
Such queries can be answered in $\cO(\log \log n)$ time after an $\cO(n)$-time preprocessing of~$\mathcal{T}$ if all weights
are polynomial in $n$~\cite{FarachM96}, as is the case for suffix trees with the weight of each node being its string-depth.
We also build a data structure for answering unrooted LCP queries
over the suffix tree of $T$.
In our setting, such queries can be defined as follows: given nodes $u$ and~$v$ of the suffix tree of $T$,
we want to compute the (implicit or explicit) node where the search for the path-label of $v$ starting from node $u$ ends.
Cole, Gottlieb, and Lewenstein~\cite{DBLP:conf/stoc/ColeGL04} showed how to construct in $\cO(n\log^{2}n)$ time a data structure of size $\cO(n\log n)$
that answers unrooted LCP queries in $\cO(\log\log n)$ time.
With these data structures at hand, the longest prefix of $UV$ that is a fragment of $T$ can be computed as follows.
First, we retrieve the nodes of the suffix tree of $T$ corresponding to $U$ and $V$ using weighted ancestor queries in $\cO(\log\log n)$ time.
In more detail, if $U=T[i \dd j]$ then we access the leaf of the suffix tree corresponding to $T[i \dd ]$ and access its ancestor
at string-depth $|U|$, and similarly for $V$. Second, we ask an unrooted LCP query to obtain the node corresponding to the sought prefix of $UV$.
\end{proof}

\begin{lemma}\label{lem:blockupd}
A maximal block decomposition of a dynamic string $S$, with respect to a static string $T$,
can be maintained in $\cO(\log\log n)$ time per substitution operation with a data structure of size $\cO(n\log n)$ that can be
constructed in $\cO(n\log^2 n)$ time. 
\end{lemma}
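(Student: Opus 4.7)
The plan is to represent the maximal block decomposition as a doubly-linked list of blocks, each stored as a pair $(\ell,r)$ indicating that the block equals $T[\ell \dd r]$. In order to locate the block containing a given position of $S$ in $\cO(\log\log n)$ time, I would maintain a predecessor structure supporting $\cO(\log\log n)$ predecessor, insert and delete operations on subsets of $[n]$ (e.g.\ a van Emde Boas tree or a y-fast trie) keyed by the starting positions of the blocks in $S$. On top of this I build once the structure of \cref{lem:rangemerging}, which gives $\cO(n\log^{2}n)$-time and $\cO(n\log n)$-space preprocessing and matches the claimed resource bounds. The initial decomposition of the starting string can be produced in $\cO(n\log\log n)$ time by simulating substitutions from left to right with the very same primitive, so the total construction time remains $\cO(n\log^{2}n)$.

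The workhorse primitive is to decide, in $\cO(\log\log n)$ time, whether the concatenation $UV$ of two currently stored blocks is itself a fragment of $T$: by \cref{lem:rangemerging} we compute the longest fragment of $T$ equal to a prefix of $UV$ and check whether its length is $|U|+|V|$. Upon a substitution at position $p$, I first use the predecessor structure to find the block $b_i$ containing $p$ in $\cO(\log\log n)$ time, and split it into at most three sub-blocks $b_i^L$, $\{c_{\mathrm{new}}\}$, $b_i^R$ (some possibly empty if $p$ is at a boundary of $b_i$). Each of these is a fragment of $T$: the single letter $c_{\mathrm{new}}$ by the standing assumption that all letters of $S$ occur in $T$, and the other two because they are a prefix and a suffix of $b_i$. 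The decomposition of the updated string $S'$ then differs from the old one only inside the constant-size window $b_{i-1}, b_i^L, \{c_{\mathrm{new}}\}, b_i^R, b_{i+1}$, so I run a greedy merge within this window, repeatedly applying the primitive to adjacent blocks and replacing them by their concatenation whenever it is a fragment of $T$, until no two adjacent blocks in the window can be merged. Since the window shrinks by one block per merge, at most a constant number of primitive calls and linked-list/predecessor updates are performed, all of which take $\cO(\log\log n)$ time.

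The main thing to verify is that the resulting decomposition is maximal not only inside the window but also at the two interfaces with the unchanged parts $b_{i-2}$ on the left and $b_{i+2}$ on the right; this is where the choice of window is essential. Irrespective of how the greedy merges proceed, the leftmost block of the window still contains $b_{i-1}$ as a prefix (we only extend blocks by concatenation), so prepending $b_{i-2}$ would produce a string containing $b_{i-2}b_{i-1}$ as a substring; by the maximality of the previous decomposition the latter is not a fragment of $T$, and neither, therefore, is the full concatenation. A symmetric argument, using that the rightmost block of the window contains $b_{i+1}$ as a suffix, handles the right interface. Putting everything together, each substitution is processed in $\cO(\log\log n)$ time while preserving a maximal block decomposition, which matches the statement of the lemma.
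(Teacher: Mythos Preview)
Your proposal is correct and follows essentially the same approach as the paper: represent the decomposition as a doubly-linked list with a predecessor structure over block start positions, split the affected block into three pieces upon a substitution, and greedily merge within the window $(b_{i-1}, b_i^L, \{c_{\mathrm{new}}\}, b_i^R, b_{i+1})$ using \cref{lem:rangemerging}. Your explicit justification of boundary correctness (that the leftmost window block retains $b_{i-1}$ as a prefix, so its concatenation with $b_{i-2}$ contains the forbidden $b_{i-2}b_{i-1}$) is exactly the argument the paper leaves implicit.
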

\begin{proof}
We keep the blocks on a doubly-linked list and we store the starting positions of blocks in an $\cO(n)$-size
predecessor/successor data structure over $[n]$ that supports $\cO(\log \log n)$-time queries and updates~\cite{vanemdeboas1975}. 
This allows us to navigate in the structure of blocks, and in particular to be able to compute the block in which the edit occurred and its neighbors.

Suppose that we have a maximal block decomposition $B=(s_1, \dots, s_k)$ of $S$ with respect to~$T$.
Let us consider a substitution operation with letter $y$ at position $t$ of block $s_i$, and
  let $s_i = s_i^{\textrm{l}}x s_i^{\textrm{r}}$, with $|s_i^{\textrm{l}}x|=t$.
  Consider a block decomposition
  $B'=(s_1, s_2, \dots, s_{i-1}, s_i^{\textrm{l}}, y, s_i^{\textrm{r}}, s_{i+1}, \dots, s_k)$
  of string $S'$ after the update.
  Note that both $s_i^{\textrm{l}}$ and $s_i^{\textrm{r}}$ may be empty.
  This block decomposition does not need to be maximal. However, since $B$ is a maximal block decomposition
  of $S$, none of the strings $s_1s_2$, $s_2s_3$, $\ldots$, $s_{i-2}s_{i-1}$,
  $s_{i+1}s_{i+2}$, $s_{i+2}s_{i+3}$, $\ldots$, $s_{k-1}s_k$
  occurs in $T$. Thus, given $B'$, we repeatedly merge any two consecutive blocks from
  $(s_{i-1}, s_i^{\textrm{l}}, y, s_i^{\textrm{r}}, s_{i+1})$ whose concatenation is a substring of $T$
 into one, until this is no longer possible.
 We have at most four merges before obtaining a maximal block decomposition $B'$ of string $S'$. 
 Each merge is implemented with~\cref{lem:rangemerging} in $\cO(\log\log n)$ time.
\end{proof}

As for allowing substitutions of letters that do not occur in $T$, we simply allow blocks of length $1$ that are not substrings of $T$ in block decompositions, corresponding to such letters.
It is readily verified that all the statements above still hold.

Due to~\cref{lem:lcs_three_blocks}, for a maximal block decomposition $(s_1, s_2, \dots, s_k)$ of $S$ with respect to~$T$,
we know that any occurrence of an LCS of $S$ and $T$ in $S$ must contain the first letter of some block of the decomposition and cannot
span more than three blocks.
In other words, it is the concatenation of a potentially empty suffix of $s_{i-1}s_{i}$ and a potentially empty prefix of $s_{i+1}s_{i+2}$
for some $i \in [k]$ (for convenience we consider the non-existent $s_{i}$s to be equal to $\varepsilon$).
We call an LCS that can be decomposed in such way a candidate of $s_i$.
Our goal is to maintain the candidate proposed by each $s_i$ in a max-heap with the length as the key. We also store
a pointer to it from block $s_i$. The max-heap is implemented with an $\cO(n)$-size
predecessor/successor data structure over $[n]$ that supports $\cO(\log \log n)$-time queries and updates~\cite{vanemdeboas1975}.
We assume that each block $s_i$ stores a pointer to its candidate in the max-heap.

After an update, the candidate of each block $b$ that satisfies the following two conditions remains unchanged: (a) $b$
did not change and (b) neither of $b$'s neighbors at distance at most $2$ changed.
For the $\cO(1)$ blocks that changed, we proceed as follows.
First, in $\cO(\log\log n)$ time, we remove from the max-heap any candidates proposed by the deleted blocks or blocks whose
neighbors at distance at most $2$ have changed.
Then, for each new block and for each block whose neighbors at distance at most $2$ have changed, we compute
its candidate and insert it to the max-heap.
To compute the candidate of a block $s_i$, we proceed as follows.
We first compute the longest suffix $U$ of $s_{i-1}s_{i}$ and the longest prefix $V$ of $s_{i+1}s_{i+2}$ that occur in
$T$ in $\cO(\log \log n)$ time using~\cref{lem:rangemerging}.
Then, the problem in scope can be restated as follows: given two fragments $U$ and $V$ of $T$ compute the
longest fragment of $UV$ that occurs in $T$.
This problem can be reduced in $\cO(\log \log n)$ time to a constant number of HIA queries over the suffix trees of $T$ and $T^R$
as shown in~\cite{amir17,DBLP:conf/cpm/AbedinH0T18}.
Here, we briefly explain the reduction to HIA in the interests of self-containment and developing
intuition for the harder problem of maintaining an LCS of two dynamic strings.

\paragraph{Reduction to \textsc{Heaviest Induced Ancestors}.}
Let $\T_1$ and $\T_2$ be the suffix trees of $T \$$ and $T^R \#$, respectively, where $\$$ and $\#$ are sentinel letters not in the alphabet and lexicographically smaller than all other letters.
Note that each suffix of $T \$$ corresponds to a leaf in $\T_1$; similarly for $\T_2$. We label a leaf $v$ of $\T_1$ with the starting position of the suffix of $T\$$ that it represents. For $\T_2$, however, we label the leaf corresponding to $T^R[i \dd ]\#$ with $n-i+2$. Intuitively, if we consider a split
$T = T[ \dd i-1]T[i \dd ]$, the leaves corresponding to $T[i \dd ]\$$ in $\T_1$ and $T[\dd i-1]^R \#$ in $T_2$ get the same label.
Further, let the weight of each node in $\T_1$ and $\T_2$ be its string-depth.
Upon query, we first compute the node $p$ corresponding to $V$ in $\T_1$ and the node $q$ corresponding to $U^R$ in $\T_2$ using weighted
ancestor queries in $\cO(\log\log n)$ time. 
Then the length of the longest substring of $UV$ is exactly the sum of the weights of the nodes returned by a HIA query
for $p$ and $q$. Some technicalities arise when $p$ or $q$ are implicit nodes, which can be overcome straightforwardly.

Let $Q_{\operatorname{HIA}}(n)$, $S_{\operatorname{HIA}}(n)$, and $P_{\operatorname{HIA}}(n)$, respectively, denote upper bounds on the query-time, space, and preprocessing 
of a data structure for the 
\textsc{Heaviest Induced Ancestors} problem when the input is of size $\cO(n)$.
Combining the above discussion with~\cref{lem:blockupd}, we obtain the following result, which is a reduction of the partially dynamic LCS problem to the static HIA problem.

\begin{theorem}\label{thm:onesided}
We can maintain an LCS of a dynamic string $S$ and a static string $T$, each of length at most $n$, in $\cO(\log\log n)+Q_{\operatorname{HIA}}(n)$ time per substitution operation using $\cO(n \log n)+S_{\operatorname{HIA}}(n)$ space, after an $\cO(n\log^2 n)+P_{\operatorname{HIA}}(n)$-time preprocessing.
\end{theorem}

Using~\cref{lem:hia}, we obtain that an LCS can be maintained in $\cO(\frac{\log n}{\log\log n})$ time per update using $\cOtilde(n)$ space,
after an $\cOtilde(n)$ time preprocessing .
Clearly, the bottleneck in the update time
in this approach is in~\cref{lem:hia}. That is, it is the query-time of the data structure for answering HIA queries, as the additional time for each update is only $\cO(\log\log n)$.
We can thus obtain a faster data structure at the expense of slower preprocessing and more space using the following lemma, whose proof is deferred to~\cref{sec:hia2}.

\begin{restatable}{lemma}{lemhia}
\label{lem:hia2}
For any constant $\eps>0$, there is a structure for the \textsc{Heaviest Induced Ancestors} problem,
that can be built in $\cO(n^{1+\eps})$ time and answers queries in $\cO(1)$ time. 
\end{restatable}

By plugging in the data structures encapsulated by~\cref{lem:hia,lem:hia2} to~\cref{thm:onesided} we obtain the following results.

\begin{corollary}\label{cor:onesided}
We can maintain an LCS of a dynamic string $S$ and a static string $T$, each of length at most $n$,
\begin{enumerate}[label=(\alph*)]
\item in $\cO(\log n \log\log n)$ time per substitution operation using $\cO(n \log^2 n)$ space, after an $\cO(n\log^2 n)$-time preprocessing, or
\item in $\cO(\frac{\log n}{\log\log n})$ time per substitution operation using $\cOtilde(n)$ space, after an $\cOtilde(n)$-time preprocessing, or
\item in $\cO(\log\log n)$ time per substitution operation using $\cO(n^{1+\epsilon})$ space, after an $\cO(n^{1+\epsilon})$-time preprocessing, for any constant $\epsilon>0$.
\end{enumerate}
\end{corollary}

\subsection{Handling Insertions and Deletions}
Let us now show how to extend our solution to handle arbitrary edit operations, i.e., insertions and deletions of letters as well.
The sole challenge to this end is in maintaining the maximal block decomposition of $S$.
This is because insertions and deletions offset the indices and hence we cannot use a predecessor structure to identify the block
that contains the edit as in the proof of~\cref{lem:blockupd}.

First, note that we can perform the required task with an additive $\cO(\log n)$ factor overhead in the update time
by maintaining an augmented balanced binary search tree over the blocks.
This way, we appropriately offset their indices upon insertions and deletions of letters while locating the block that contains the edit.
In fact, as we show next, we can perform the required task in $\cO(\log n/ \log\log n)$ amortized time per update using
the dynamic array data structure of Dietz~\cite{DBLP:conf/wads/Dietz89}.\footnote{This problem is very similar to the \textsc{List Representation} problem defined in~\cref{sec:lb_ps}. In fact, the data structure of Dietz~\cite{DBLP:conf/wads/Dietz89} solves the latter problem within the same complexities and this is optimal due to a lower bound by Fredman and Saks~\cite{10.1145/73007.73040}. A reduction from \textsc{List Representation} to \textsc{List Indexing} is only possible when no duplicates are allowed in the ``represented list''.}
Formally, consider the following problem.

\Dynproblem{List Indexing}{A list $L$ of size at most $n$ over universe $[n]$.}{``Delete $y$'' or ``insert $y$ after $x$''.}{``Report the $i$-th element of $L$'' or ``return the position of element $x$ in $L$''.}

\begin{theorem}[\cite{DBLP:conf/wads/Dietz89}]
There is a data structure for \textsc{List Indexing} that, after an $\cO(n)$-time initialization,
can perform updates and answer queries in $\cO(\log n/ \log\log n)$ amortized time.
\end{theorem}

Let us now return to our problem.
In what follows, we will slightly abuse notation by referring to the $i$-th element of a list $L$ as $L[i]$.
We will maintain lists $L_S$, $L_B$, and $L_M$, defined as follows:\footnote{Note that \textsc{List Indexing} assumes that the elements in the list are pairwise distinct. We ensure this by giving pairwise distinct identifiers to the elements of each list.}
\begin{enumerate}
\item $L_S[i]=S[i]$,
\item $L_B[k]$ stores the $k$-th block $s_k$ in the decomposition,
represented as $(a,b)$, where $s_k=T[a \dd b]$,
\item $L_M$ is the concatenation of lists $L_B[k]L_S[i \dd j]$, where $s_k=S[i \dd j]$, in increasing order with respect to~$k$.
\end{enumerate}
We are only interested in $L_M$. We maintain $L_S$ and $L_B$ for technical reasons, for the sake of efficiency.

There is a natural bijection $f$ from the elements of $L_M$ to the elements of the concatenation of $L_S$ and $L_B$.
We will maintain bidirectional pointers according to this bijection.
Formally, between $L_B[k]$ and the delimiter of block $s_k$ in $L_M$,
and between $L_S[i]$ and $L_M[i+b]$, where $b$ is the rank of the block containing $S[i]$.
Such pointers can be implemented using $\cO(1)$ query operations,
provided that we replace each element $x\in \{a,f(a): a\in L_M\}$ by $(x,\textsf{id}(a))$,
where $\textsf{id}(a)$ is a unique identifier.

\begin{example}
Let $S=\texttt{abaabca}$ and $T=\texttt{bcaba}$. Example lists $L_S$, $L_B$, and $L_M$ are shown below with blocks $\texttt{aba}=T[3 \dd 5]$, $\texttt{a}=T[3]$, and $\texttt{bca}=T[1 \dd 3]$. The unique identifiers are highlighted in blue.
\medskip

\noindent $L_S \quad$\scalebox{0.93}{\begin{tabular} {!{\vrule width 1pt}c!{\vrule width 1pt}c!{\vrule width 1pt}c!{\vrule width 1pt}c!{\vrule width 1pt}c!{\vrule width 1pt}c!{\vrule width 1pt}c!{\vrule width 1pt}}
\noalign{\hrule height 1pt}
$(\texttt{a},\textcolor{blue}{5})$ & $(\texttt{b},\textcolor{blue}{1})$ & $(\texttt{a},\textcolor{blue}{7})$ & $(\texttt{a},\textcolor{blue}{10})$ & $(\texttt{b},\textcolor{blue}{2})$ & $(\texttt{c},\textcolor{blue}{8})$ & $(\texttt{a},\textcolor{blue}{4})$ \\ \noalign{\hrule height 1pt}
\end{tabular}}
\medskip

\noindent $L_B \;\;\;$\scalebox{0.93}{\begin{tabular} {!{\vrule width 1pt}c!{\vrule width 1pt}c!{\vrule width 1pt}c!{\vrule width 1pt}}
\noalign{\hrule height 1pt}
{\cellcolor{magenta!35!white}} $((3,5),\textcolor{blue}{6})$ & {\cellcolor{magenta!35!white}} $((3,3),\textcolor{blue}{3})$ & {\cellcolor{magenta!35!white}} $((1,3),\textcolor{blue}{9})$  \\ \noalign{\hrule height 1pt}
\end{tabular}}
\medskip

\noindent $L_M \;\;\,$\scalebox{0.93}{\begin{tabular} {!{\vrule width 1pt}c!{\vrule width 1pt}c!{\vrule width 1pt}c!{\vrule width 1pt}c!{\vrule width 1pt}c!{\vrule width 1pt}c!{\vrule width 1pt}c!{\vrule width 1pt}c!{\vrule width 1pt}c!{\vrule width 1pt}c!{\vrule width 1pt}}
\noalign{\hrule height 1pt}
{\cellcolor{magenta!35!white}} $((3,5),\textcolor{blue}{6})$ & $(\texttt{a},\textcolor{blue}{5})$ & $(\texttt{b},\textcolor{blue}{1})$ & $(\texttt{a},\textcolor{blue}{7})$ & {\cellcolor{magenta!35!white}} $((3,3),\textcolor{blue}{3})$ & $(\texttt{a},\textcolor{blue}{10})$ & {\cellcolor{magenta!35!white}} $((1,3),\textcolor{blue}{9})$ & $(\texttt{b},\textcolor{blue}{2})$ & $(\texttt{c},\textcolor{blue}{8})$ & $(\texttt{a},\textcolor{blue}{4})$ \\ \noalign{\hrule height 1pt}
\end{tabular}}

\end{example}

Let us now discuss how to perform a deletion in $S$. Substitutions and insertions can be handled analogously.
Upon the deletion of a letter $S[i]=a$, we follow the pointer from $L_S[i]$ to some element $L_M[j]$ of $L_M$.
We delete both $L_S[i]$ and $L_M[j]$.
It follows that $S[i]$ is on the $k$-th block in the decomposition, where $k=j-i$.
This block $s_k$ is stored in $L_B[k]$.
We split this block to at most two blocks $s_k^\textrm{l}$ and $s_k^\textrm{r}$ such that $s_k=s_k^\textrm{l} a s_k^\textrm{r}$ similarly to the proof of~\cref{lem:blockupd}, and then repeatedly try to merge consecutive blocks in $(s_{k-1},s_k^\textrm{l},s_k^\textrm{r},s_{k+1})$.

When we cannot merge any more blocks, we update lists $L_B$ and $L_M$ as follows. 
We obtain the set $D$ of delimiters of blocks $s_{k-1}$, $s_k$ and $s_{k+1}$ in $L_M$ using the pointers from $L_B[k-1]$, $L_B[k]$, and $L_B[k+1]$, respectively.
We delete any of these delimiters that should be deleted.
The indices where delimiters are to be inserted in $L_M$ (corresponding to new blocks) can be straightforwardly obtained from $D$ and the lengths of the new blocks.
In the end, we create bidirectional pointers between new blocks and their corresponding delimiters,
getting identifiers from a global max-heap of available identifiers.
We obtain the following result.

\begin{theorem}\label{thm:onesided_edit}
We can maintain an LCS of a dynamic string $S$ and a static string $T$, each of length at most $n$, in $\cO(\log n)+Q_{\operatorname{HIA}}(n)$ time or $\cO(\log n/\log \log n)+Q_{\operatorname{HIA}}(n)$ amortized time per edit operation using $\cO(n \log n)+S_{\operatorname{HIA}}(n)$ space, after an $\cO(n\log^2 n)+P_{\operatorname{HIA}}(n)$-time preprocessing.
\end{theorem}

We obtain the following results by plugging in the data structures encapsulated by~\cite{DBLP:conf/cccg/GagieGN13} or \cref{lem:hia} to~\cref{thm:onesided_edit}.

\begin{corollary}\label{cor:onesided_edit}
We can maintain an LCS of a dynamic string $S$ and a static string $T$, each of length at most $n$,
\begin{enumerate}[label=(\alph*)]
\item in $\cO(\log n \log\log n)$ time per edit operation using $\cO(n \log^2 n)$ space, after an $\cO(n\log^2 n)$-time preprocessing, or
\item in $\cO(\log n/ \log\log n)$ amortized time per edit operation using $\cOtilde(n)$ space, after an $\cOtilde(n)$-time preprocessing, for any constant $\epsilon>0$.
\end{enumerate}
\end{corollary}

\subsection{\textsc{Heaviest Induced Ancestors} in Constant Time (Proof of~\cref{lem:hia2})}\label{sec:hia2}

\lemhia*
\begin{proof}
Consider an instance of HIA on two trees $\T_1$ and $\T_2$ of total size $m$ containing
at most $\ell$ leaves, and let $b$ be a parameter to be chosen later. We will show how to
construct a structure of size $\cO(b^{2}m)$ that allows us to reduce in constant time a query
concerning two nodes $u \in \T_1$ and $v \in \T_2$ to two queries to instances of HIA on smaller trees.
In each of the smaller instances the number of leaves will shrink by a factor of $b$, and the
total size of all smaller instances will be $\cO(m)$. Let $b=n^{\delta}$, where $n$ is the total
size of the original trees. We recursively repeat the construction, always choosing $b$ according to the formula.
Because the depth of the recursion is at most $\log_{b}n=\cO(1)$, this results in a structure of total size
$\cO(n^{1+\eps})$ for $\eps=2\delta$ and allows us to answer any query in a constant number of steps, each
taking constant time.

We select $b$ evenly-spaced (in the order of in-order traversal) leaves of $\T_{1}$ and $\T_{2}$ and call them marked.
Consider a query concerning a pair of nodes $u\in\T_{1}$ and $v\in \T_{2}$. Let $u'',v''$
be the nearest ancestors of $u$ and $v$, respectively, that contain at least one marked leaf
in their subtrees.
$u''$ and $v''$ can be retrieved in constant time after an $\cO(m)$-time preprocessing.
We have three possibilities concerning the sought ancestors $u'$ and $v'$:
\begin{enumerate}
\item $u'$ is an ancestor of $u''$ and $v'$ is an ancestor of $v''$ (not necessarily strict),
\item $u'$ is a descendant of $u''$,
\item $v'$ is a descendant of $v''$.
\end{enumerate}

Before showing how to treat the first case, let us state the following claim, which was
(implicitly) shown in the work that introduced the HIA problem~\cite{DBLP:conf/cccg/GagieGN13}.

\begin{claim}[\cite{DBLP:conf/cccg/GagieGN13}, Section 2.1]
Given a pair of paths $p_1$ in $\T_{1}$ and $p_2$ in $\T_{2}$, we can construct in $\cO(m)$ time a data structure, and two sets $A_1 \subseteq \{w(y) : y \in p_1\}$ and $A_2 \subseteq \{w(y) : y \in p_2\}$,
so that a HIA query for a node $z_1$ in $p_1$ and a node $z_2$ in $p_2$ can be answered in $\cO(1)$ time, plus the time required to find the predecessor of $w(z_1)$ in $A_1$
and the predecessor of $w(z_2)$ in $A_2$.
\end{claim}

For every pair of marked leaves $x$ in $\T_{1}$ and $y$ in $\T_{2}$ we do the following.
Let $p_1$ be the $\text{root-to-}x$ path and $p_2$ be the $\text{root-to-}y$ path.
We employ the above claim for the pair of paths $(p_1,p_2)$, and also precompute the predecessor of each element of $p_1$ in $A_1$ and each element of $p_2$ in $A_2$.
This can be done in $\cO(m)$ time.
Over all pairs of marked leaves, the total preprocessing time and space are $\cO(b^{2}m)$.

For the first case, after retrieving a marked leaf in the subtree of each of $u''$ and $v''$, we employ the data structure we described above for that pair of marked leaves, hence answering the query in $\cO(1)$ time. Note that, for each node we can precompute a pointer to a marked leaf in its subtree in $\cO(m)$ total time.

The second and the third cases are symmetric, so let us focus on the second one. By removing
all marked leaves and their ancestors from $\T_{1}$ we obtain a collection of smaller
trees, each containing less than $n/b$ leaves. Because $u'$ is below $u''$, $u$ and $u'$
belong to the same smaller tree. For technical reasons, we want to work with $\cO(b)$
smaller trees, so we merge all smaller trees between two consecutive marked
leaves by adding the subtree induced by their roots in~$\T_{1}$. Now, let us consider the smaller
tree $\T_{1}^{i}$ containing $u$ (and, by assumption, also $u''$). We extract the subtree
of $\T_{2}$ induced by the leaves of $\T_{1}^{i}$, call it $\T_{2}^{i}$, and build a smaller
instance of HIA for $\T_{1}^{i}$ and $\T_{2}^{i}$. To query the smaller instance, we need
to replace $v$ by its nearest ancestor that belongs to $\T_{2}^{i}$. This can be preprocessed
for each $i$ and $v$ in $\cO(bm)$ time. By construction, $\T_{1}^{i}$ and $\T_{2}^{i}$ contain
less than $n/b$ leaves, and each node of $\T_{1}$ shows up in at most two trees $\T_{1}^{i}$.
Each node of $\T_{2}$ might appear in multiple trees $\T_{2}^{i}$, but the number of non-leaf
nodes in $\T_{2}^{i}$ is smaller than its number of leaves, so the overall number of non-leaf
nodes is smaller than $m$, and consequently the overall number of nodes is smaller than $2m$.
\end{proof}

\section{Fully Dynamic LCS}
\label{sec:fully}

In this section, we prove our main result.

\begin{theorem}\label{thm:fully}
We can maintain an LCS of two initially empty dynamic strings, each
of length at most $n$, in $\cO(\log^7 n)$ amortized time per edit operation with high probability, using $\cOtilde(n)$ space.\footnote{Our algorithm is Las Vegas randomized, that is, it always returns correct answers but its running time holds with probability $1-1/n^{\Omega(1)}$.}
\end{theorem}

We start with some intuition. Let us suppose that we can maintain a decomposition of each string in level-$k$ blocks of length roughly $2^k$ for each level $k=0,1, \ldots, \log n$ with the following property: any two equal fragments $U=S[i \dd j]$ and $V=T[i' \dd j']$ are ``aligned'' by a pair of equal blocks $B_1$ in $S$ and $B_2$ in $T$ at some level $k$ such that $2^k=\Theta(|U|)$.
In other words, the decomposition of $U$ (resp.~$V$) at level $k$ consists of a constant number of blocks, where the first and last blocks are potentially trimmed, including $B_1$ (resp.~$B_2$), and the distance of the starting position of $B_1$ from position $i$ in $S$ equals the distance of the starting position of $B_2$ from position $i'$ in $T$.
The idea is that we can use such blocks as anchors for the LCS. For each level, for each string $B$ appearing as a block in this level, we would like to design a data structure that:
\begin{enumerate}
\item supports insertions/deletions of strings corresponding to sequences of a constant number of level-$k$ blocks, each containing a specified block equal to $B$ and a Boolean variable indicating the string this sequence originates from ($S$ or $T$), and
\item can return the longest common substring among pairs of elements originating from different strings that is aligned by a pair of blocks equal to $B$.
\end{enumerate}
For each edit operation in either of the strings, we would only need to update $\cO(\log n)$ entries in our data structures -- a constant number of them per level.

Unfortunately, it is not clear how to maintain a decomposition with these properties.
We resort to the dynamic maintenance of a \emph{locally consistent parsing} of the two strings, due to Gawrychowski et al.~\cite{ods}.
We exploit the structure of this parsing in order to apply the high-level idea outlined above in a much more technically demanding setting.

\subsection{Locally Consistent Parsing}

The authors of~\cite{ods} settled the time complexity of maintaining a collection of strings $\mathcal{W}$ under the following operations:
\textsf{makestring}$(W)$ (insert a non-empty string $W$),
\textsf{concat}$(W_1,W_2)$ (insert $W_1 W_2$ to $\mathcal{W}$, for $W_1,W_2 \in \mathcal{W}$),
\textsf{split}$(W,i)$ (split the string $W$ at position $i$ and insert both resulting strings to $\mathcal{W}$, for $W \in \mathcal{W}$),
\lcp$(W_1,W_2)$ (return the length of the longest common prefix of $W_1$ and $W_2$, for $W_1,W_2 \in \mathcal{W}$).
Let us note that the collection is persistent in the sense that
operations \textsf{concat} and \textsf{split} do not remove their arguments from $\mathcal{W}$.
An edit operation can be implemented with a constant number of calls to such operations.

\begin{theorem}[Gawrychowski et al.~\cite{ods}]\label{thm:dynst}
A persistent dynamic collection $\mathcal{W}$ of strings of total length $n$ can be maintained under operations \textsf{makestring}$(W)$, \textsf{concat}$(W_1,W_2)$, \textsf{split}$(W,i)$, and \lcp$(W_1,W_2)$ with the operations requiring $\cO(\log n +|W|)$, $\cO(\log n)$, $\cO(\log n)$
worst-case time with high probability and $\cO(1)$ worst-case time, respectively.
\end{theorem}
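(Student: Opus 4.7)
The plan is to represent each string in $\mathcal{W}$ as a balanced parse tree of $\cO(\log n)$ depth produced by a locally consistent parsing, maintained on top of a global hash table that canonicalizes blocks via Karp--Rabin fingerprints. I would fix a two-phase parsing scheme (for example, alternating local minima of fingerprints to form level-$k$ blocks, followed by a pairing step to reduce the sequence length by a constant factor), giving $\cO(\log n)$ levels with high probability. The crucial invariant is that any two occurrences of the same fragment are parsed identically except for $\cO(1)$ boundary blocks at each level; this is what makes updates cheap and equality tests hash-based. Persistence is obtained by path-copying: each operation produces a fresh root while sharing every unaffected subtree.

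For \textsf{makestring}$(W)$, I would parse $W$ bottom-up, assigning each new block an identifier through the global hash table and reusing identifiers for blocks that have appeared before; the $\cO(|W|)$ work of the bottom-up parse dominates, and the $\cO(\log n)$ additive term covers inserting the new root into whatever global auxiliary structure supports \lcp. For \textsf{concat}$(W_1,W_2)$, I would clone the right spine of the tree for $W_1$ and the left spine of the tree for $W_2$ and then repair the seam level by level; by local consistency, the seam at each of the $\cO(\log n)$ levels stabilizes after examining a constant number of blocks, so a single hash-table lookup per level suffices. \textsf{split}$(W,i)$ is symmetric: descend to position $i$, split the $\cO(\log n)$ ancestors into left and right copies, and re-parse the two freshly exposed seams. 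The high-probability guarantees come from the randomness of the fingerprinting scheme, which ensures that no adversarial collisions force extra work.

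The step I expect to be the main obstacle is achieving worst-case $\cO(1)$ for \lcp$(W_1, W_2)$, since a na\"ive parallel descent through the two parse trees only yields $\cO(\log n)$. To shave this down, the plan is to additionally maintain, as strings enter and leave $\mathcal{W}$, a persistent dynamic index over the sequences of top-level block identifiers supporting constant-time longest-common-prefix queries---concretely a persistent trie on identifier sequences augmented with an $\cO(1)$ LCA structure that is rebuilt incrementally in the background during the $\cO(\log n)$ update budget of \textsf{concat}, \textsf{split}, and \textsf{makestring}. Local consistency ensures that the LCP of $W_1$ and $W_2$ differs from the LCP of their top-level identifier sequences by $\cO(1)$ positions, which can be corrected in constant time by inspecting the boundary blocks where divergence occurs and looking up a precomputed per-pair LCP value stored with each block. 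The delicate part will be verifying that this auxiliary index can be made fully persistent within $\cO(\log n)$ per update, and that the Karp--Rabin randomness can be reused across operations so that the high-probability bounds compose for any polynomially long sequence of updates.
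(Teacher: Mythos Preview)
This theorem is not proved in the present paper; it is quoted verbatim from Gawrychowski et al.~\cite{ods} and used as a black box. The paper does expose the locally consistent parsing and the RLSLP representation underlying the cited result---because the context-insensitive decompositions $\dUp(\cdot)$ and $\dDown(\cdot)$ are needed for the anchoring argument in \cref{lem:anchored}---but it gives no proof of the running-time bounds of \cref{thm:dynst} themselves. There is therefore nothing in this paper for your proposal to be compared against.

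For what it is worth, your sketch of the update operations is in the right spirit: the construction in~\cite{ods} does use a randomized locally consistent parsing of depth $\cO(\log n)$, and \concat/\spliit\ are indeed implemented by repairing a constant-width seam at each level along what the present paper calls $\CUp$ and $\CDown$. Where your plan becomes shaky is the $\cO(1)$ bound for \lcp. The proposed ``persistent trie on identifier sequences with an $\cO(1)$ LCA structure rebuilt in the background'' together with ``a precomputed per-pair LCP value stored with each block'' is not a workable mechanism as stated: there can be $\Omega(n)$ distinct block identifiers, so per-pair precomputation is quadratic, and maintaining a dynamic trie with constant-time LCA under leaf insertions within an $\cO(\log n)$ update budget is itself a nontrivial claim that you have not justified. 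The $\cO(1)$ \lcp\ bound is the technically delicate part of~\cite{ods}, and reproducing it requires the specific machinery developed there; if you need the argument, consult that paper directly rather than improvising one here.
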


The data structure underlying the above theorem is Las Vegas randomized:
the answers are correct, but the update times are guaranteed only with probability $1-1/m^{\Omega(1)}$, where $m$ is the total size of the input, including updates.
Note that the size of a $\makestring(W)$ operation is $|W|$, while the size of a \textsf{concat} or a \textsf{split} operation is $\cO(1)$.
We will use~\cref{thm:dynst} to maintain two dynamic strings of length at most $n$ under edit operations: each update to the collection will increase the total size of the collection by an $\cO(n)$ additive factor. By rebuilding the data structure every $\cO(n)$ updates, we can ensure that the update times are guaranteed with probability  $1-1/n^{\Omega(1)}$.

At the heart of~\cref{thm:dynst} lies a locally consistent parsing of the strings in the collection that can be maintained efficiently while the strings undergo updates.
It can be interpreted as a dynamic version of the recompression method of Je{\.{z}}~\cite{DBLP:journals/jacm/Jez16,DBLP:journals/talg/Jez15} (see also~\cite{DBLP:conf/cpm/I17}) for a static string~$T$.
As such, we first describe the parsing of~\cref{thm:dynst} for a static string $T$ and then extend the description to the dynamic variant for a collection of strings.

A \emph{run-length straight line program} (\emph{RLSLP}) is a context-free grammar which generates exactly one string
and contains two kinds of non-terminals: \emph{concatenations} with production rule of the form $A \to BC$ (for distinct symbols $B,C$)
and \emph{powers} with production rule of the form $A \to B^k$ (for a symbol $B$ and an integer $k\ge 2$), where a \emph{symbol} can be a non-terminal or a letter in $\Sigma$.
Every symbol $A$ generates a unique string denoted by $\gen(A)$.
We call a symbol $A$ that is a non-terminal a \emph{concatenation symbol} or a \emph{power symbol} depending on its production.

Let $T=T_0$. We can compute strings $T_1, \ldots, T_H$, where $H=\cO(\log n)$ and $|T_H|=1$ in $\cO(n)$ time using interleaved calls to the following two auxiliary procedures:
\begin{description}
    \item[\textsf{RunCompress}] applied if $h$ is even: for each $B^r$, $r>1$, replace all occurrences of $B^r$ as a run by a new letter $A$. There are no runs after an application of this procedure.\footnote{A fragment $T[i \dd j]=B^r$ is a run if it is a maximal fragment consisting of $B$s.}
    \item[\textsf{HalfCompress}] applied if $h$ is odd: first partition $\Sigma$ into $\Sigma_{\ell}$ and $\Sigma_r$; then, for each pair of letters $B \in \Sigma_{\ell}$ and $C \in \Sigma_r$ such that $BC$ occurs in $T_h$ replace all occurrences of $BC$ by a new letter $A$.
\end{description}

We can interpret strings $T=T_0, T_1, \ldots, T_H$ as an \emph{uncompressed parse tree} $\PT(T)$, by considering their letters as nodes, so that the parent of $T_h[i]$ is the letter of $T_{h+1}$ that either (a) corresponds to $T_h[i]$ or (b) replaced a fragment of $T_h$ containing $T_h[i]$.
We say that the node representing $T_h[i]$ is the node left (resp.~right) of the node representing $T_h[i+1]$ (resp.~$T_h[i-1]$).
Every node $v$ of $\PT(T)$ is labeled with the symbol it represents, denoted by $\Lv(v)$.
For a node~$v$ corresponding to a letter of $T_h$, we say that the level of $v$, denoted by $\lev(v)$, is $h$.
The \emph{value} $\val(v)$ of a node $v$ is defined as the fragment of $T$ corresponding to the leaf descendants of $v$ and it is an occurrence of $\gen(A)$ for $A=\Lv(v)$.

We define a \emph{layer} to be any sequence of nodes $v_1 v_2 \cdots v_r$ in $\PT(T)$ whose values are consecutive fragments of $T$, i.e., $\val(v_j)=T[r_{j-1}+1 \dd r_j]$ for some increasing sequence of $r_i$'s.
The value of a layer $C$ is the fragment obtained by concatenating the values of $C$'s elements and is denoted by $\val(C)$.
We similarly use $\gen(\cdot)$ for sequences of symbols, to denote the concatenation of the strings generated by them.
We call a layer $ v_1 v_2 \cdots  v_r$ an \emph{up-layer} when $\lev(v_i)\leq \lev(v_{i+1})$ for all $i$, and a \emph{down-layer} when $\lev(v_i)\geq \lev(v_{i+1})$ for all $i$.

In~\cite{ods}, the authors show how to maintain an RLSLP for each string in the collection, each with at most $c \log n$ levels for some global constant $c$;
when this condition is no longer satisfied, the data structure is reinitialized.
Let $T$ be a string in the collection.
For each fragment $U=T[a \dd b]$ of $T$, one can compute in $\cO(\log n)$ time a layer $\C(U)$ of nodes in $\PT(T)$ that has value $T[a \dd b]$ and satisfies the following property.
It can be decomposed into an up-layer $\CUp(U)$ and a down-layer $\CDown(U)$ such that:
\begin{itemize}
\item The sequence of the labels of the nodes in $\CUp(U)$ can be expressed as a sequence of at most $c \log n$ symbols and powers of symbols $\dUp(U)=A_0^{r_0} A_1^{r_1} \cdots  A_m^{r_m}$ such that, for all $i$, $A_i^{r_i}$ corresponds to $r_i$ consecutive nodes at level $i$ of $\PT(T)$; $r_i$ can be $0$ for $i< m$.
\item Similarly, the sequence of the labels of the nodes in $\CDown(U)$ can be expressed as a sequence of at most $c \log n$ symbols and powers of symbols $\dDown(U)=B_m^{t_m} B_{m-1}^{t_{m-1}} \cdots B_0^{t_0}$ such that, for all $i$, $B_i^{t_i}$ corresponds to $t_i$ consecutive nodes at level $i$ of $\PT(T)$; $t_i$ can be equal to $0$.
\end{itemize}
We denote by $\dUpDown(U)$ the concatenation of $\dUp(U)$ and $\dDown(U)$.
Note that $U=\gen(\dUpDown(U))=\gen(A_0)^{r_0} \cdots \gen(A_m)^{r_m} \gen(B_m)^{t_m} \cdots \gen(B_0)^{t_0}$.  See~\cref{fig:up-down-layer} for a visualization.
The parsing of the strings enjoys local consistency in the following way: $\dUpDown(U)=\dUpDown(V)$ for any fragment $V$ of any string in the collection such that $U=V$.
We will use $\dUpDown(\cdot)$ for substrings and not just for fragments.

\FIGURE{ht}{0.7}{up-down-layer}{
An example $\PT(T)$ for $T=T_0=abababaabbcdabababcd$. We omit the label of each node $v$ with a single child $u$; $\Lv(v)=\Lv(u)$. $T_3=kefhkh$ and $T_6=pq$. We denote the nodes $\CUp(T)$ by red (filled) squares and the nodes of $\CDown(T)$ with blue (unfilled) squares. $\dUp(T)=a b g^2 \ell$, $\dDown(T)=hg^3cd$ and hence $\dUpDown(T)=ab g^2 \ell h g^3 cd$.}

Let us consider any sequence of nodes corresponding, for some $j<m$, to $A_j^{r_j}$ with $r_j>1$ or $B_j^{t_j}$ with $t_j>1$. We note that $T_j$ must have been obtained from $T_{j-1}$ by an application of \textsf{HalfCompress}, since there are no runs after an application of procedure \textsf{RunCompress}.
Thus, at level $j+1$ in $\PT(T)$, i.e., the one corresponding to $T_{j+1}$, all of these nodes collapse to a single one: their parent in $\PT(T)$.
Hence, we have the following lemma.

\begin{lemma}\label{lem:layers}
Let us consider a fragment $U$ of $T$ with $\dUp(U)=A_0^{r_0} A_1^{r_1} \cdots A_m^{r_m}$ and $\dDown(U)=B_m^{t_m} B_{m-1}^{t_{m-1}} \cdots B_0^{t_0}$. Then we have the following:
\begin{itemize}
\item The value of $\CUp(U)$ is a suffix of the value of a layer of (at most) $c \log n + r_m-1$ level-$m$ nodes, such that the two layers have the same rightmost node. The last $r_m$ nodes of each of these two layers are consecutive siblings with label $A_m$.
\item The value of $\CDown(U)$ is a prefix of the value of the layer consisting of the subsequent (at most) $c \log n+\max(t_m-1,0)$ level-$m$ nodes. If $t_m \neq 0$, then the first $t_m$ nodes of both of these layers are consecutive siblings with label $B_m \neq A_m$.
\end{itemize}
\end{lemma}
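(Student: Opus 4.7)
The plan is to prove the part of the lemma concerning $\CUp(U)$; the $\CDown(U)$ case is handled by a mirrored argument. I will construct the claimed level-$m$ layer by \emph{lifting} each group $A_j^{r_j}$ of $\dUp(U)$ with $j<m$ to a single level-$m$ ancestor in $\PT(T)$, keeping the top group $A_m^{r_m}$ as is.

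First I would establish the sibling structure: for every $j$ with $r_j > 1$, the $r_j$ level-$j$ nodes labelled $A_j$ share a common parent at level $j+1$. This is precisely the argument already given in the paragraph preceding the lemma: a run $A_j^{r_j}$ in $T_j$ forces $T_j$ to be the output of \textsf{HalfCompress}, so the next step $T_j \to T_{j+1}$ is \textsf{RunCompress}, which collapses the run into a single letter of $T_{j+1}$. Specialising to $j=m$ is legitimate whenever $r_m > 1$ (since then $|T_m| > 1$, so level $m+1$ exists in $\PT(T)$), which immediately yields the final sentence of the first bullet.

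Next, for each $j < m$ with $r_j \ge 1$, I would iterate parent pointers $m-j-1$ times from the common level-$(j+1)$ parent $P_j$ of $A_j^{r_j}$ to obtain a single level-$m$ node $\widehat v_j$; the candidate layer is then the sequence of distinct $\widehat v_j$'s in left-to-right order followed by the $r_m$ nodes of $A_m^{r_m}$. The key geometric fact I need to prove is that $P_j$, and hence $\widehat v_j$, does not extend to the right of $A_j^{r_j}$ in $T$. Suppose for contradiction that some level-$j$ node $X$ lies in $P_j$'s subtree immediately to the right of $A_j^{r_j}$. Then $X$'s position in $T$ is inside the next, higher-level group $A_{j'}^{r_{j'}}$ of $\CUp(U)$ (with $j' > j$), so by the partition property of $\PT(T)$ at level $j$, $X$ is a descendant of $A_{j'}^{r_{j'}}$. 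Combined with $X$ being a descendant of $P_j$, the ancestor chain of $X$ forces either $P_j$ or $A_{j'}$ to be an ancestor of the other, which in turn forces $A_j^{r_j}$ to be a descendant of $A_{j'}^{r_{j'}}$ -- contradicting the fact that distinct nodes of $\CUp(U)$ are in no ancestor-descendant relation. A symmetric argument on the left, combined with the partition of $T$'s positions by the level-$m$ nodes, forces successive $\widehat v$'s to either coincide or occupy consecutive positions in $T_m$; the rightmost node of the layer coincides with the rightmost node of $A_m^{r_m}$, which is the rightmost node of $\CUp(U)$, so $\val(\CUp(U))$ appears as a suffix of the layer's value.

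For the size bound, $\dUp(U)$ contains at most $c\log n$ non-empty groups, one of which is $A_m^{r_m}$, so the lifted portion has at most $c\log n - 1$ distinct level-$m$ nodes, giving a layer of size at most $c\log n + r_m - 1$. The $\CDown(U)$ case is mirrored and yields the bound $c\log n + \max(t_m-1, 0)$ for the subsequent level-$m$ layer; the inequality $B_m \ne A_m$ when $t_m \ne 0$ holds because otherwise $A_m^{r_m} B_m^{t_m}$ would be a run in $T_m$ whose collapse in $T_{m+1}$ would make $m$ non-maximal as the top level of $\C(U)$. The main technical difficulty I foresee is making the right-/left-flush ancestor argument watertight: it requires careful use of the partition property of $\PT(T)$ at each level, together with the fact that parents in $\PT(T)$ arise only from applications of \textsf{HalfCompress} or \textsf{RunCompress}.
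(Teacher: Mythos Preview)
Your approach is exactly the paper's: lift each sub-top group $A_j^{r_j}$ to level~$m$ via the collapse observation in the paragraph preceding the lemma (which is essentially all the paper offers as proof), and count. The construction of the layer and the size bound are both fine.

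There is, however, a false intermediate claim. You assert that ``$P_j$, and hence $\widehat v_j$, does not extend to the right of $A_j^{r_j}$.'' The statement for $P_j$ is correct and your contradiction argument for it is sound, but it does \emph{not} propagate to $\widehat v_j$: once you climb above level $j'$ (the next non-empty level), the ancestor of $A_j^{r_j}$ may well merge with the ancestor of $A_{j'}^{r_{j'}}$, so $\val(\widehat v_j)$ can legitimately extend to the right into $\val(A_{j'}^{r_{j'}})$. Concretely, if $\widehat v_j = \widehat v_{j'}$ then $\widehat v_j$ certainly extends past $A_j^{r_j}$; you even allow this possibility two sentences later (``successive $\widehat v$'s \ldots\ either coincide or occupy consecutive positions''), which contradicts the claim you just tried to prove.

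The good news is that you do not need the right-flush property of $\widehat v_j$ at all. The fact that the distinct $\widehat v_j$'s together with $A_m^{r_m}$ form a contiguous block of level-$m$ nodes follows directly from the partition of $T$ by level-$m$ nodes: since $\val(A_j^{r_j})$ and $\val(A_{j'}^{r_{j'}})$ are adjacent fragments and each $\widehat v$ contains the corresponding group's value, consecutive $\widehat v$'s have values that either abut or overlap, hence are adjacent or equal as level-$m$ nodes. The same argument handles the join between the last sub-top $\widehat v$ and the leftmost $A_m$ (here equality is impossible by disjointness of values, so they are adjacent). Drop the right-flush claim for $\widehat v_j$, keep your argument for $P_j$ only to justify that each group below level~$m$ has a \emph{single} level-$m$ ancestor, and invoke the partition property directly for contiguity; then the proof goes through.
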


The parse trees of the strings in the collection are not maintained explicitly.
However, we have access to the following pointers and functions, among others, which allow us to efficiently navigate through them. First, we can get a pointer to the root of $\PT(T)$ for any string $T$ in the collection. Given a pointer $P$ to some node $v$ in $\PT(T)$ we can get $\deg(v)$ and pointers to the parent of $v$, the $k$-th child of $v$ and the nodes to the left/right of $v$.

Let us now briefly explain how the dynamic data structure of~\cite{ods} processes a substitution in~$T$ at some position $i$, that yields a string $T'$.
First, $\C(T[\dd i-1])$ and $\C(T[i+1 \dd ])$ are retrieved.
These, together with the new letter at position $i$ form a layer of $\PT(T')$.
The sequence of the labels of the nodes of this layer can be expressed as a sequence of $\cO(\log n)$ symbols and powers of symbols.
Then, only the portion of $\PT(T)$ that lies above this layer needs to be (implicitly) computed, and the authors of~\cite{ods} show how to do this in $\cO(\log n)$ time.
In total, we get $\PT(T')$ from $\PT(T)$ by performing the following update in each level $h$ of the parse tree:
a fragment of $T_h$ of the form $A^a U B^b$, where $A$ and $B$ are letters of $T_h$ and $U$ is a fragment of length $\cO(\log n)$
is replaced by a fragment of $T'_h$ of the same form.
In the parse tree view, over all levels, we have $\cO(\log^2 n)$ insertions and deletions of nodes and layers that consist of consecutive siblings.

Note that the described substitution process can be seen as the processing of a deletion followed by a processing of an insertion --- insertions and deletions of letters in $T$ can also be processed analogously in $\cO(\log n)$ time.
These also result, for each level $h$, in the replacement of a fragment of $T_h$ of the form $A^a U B^b$, where $A$ and $B$ are letters of $T_h$ and $U$ is a fragment of length $\cO(\log n)$
with a fragment of $T'_h$ of the same form.

\subsection{Anchoring the LCS}

We will rely on~\cref{lem:layers} in order to identify an LCS $S[i \dd j]=T[i' \dd j']$ at a pair of topmost nodes of $\C(S[i \dd j])$ and $\C(T[i' \dd j'])$ in $\PT(S)$ and $\PT(T)$, respectively.
In order to develop some intuition, let us first sketch a solution for the case where $\PT(S)$ and $\PT(T)$ do not contain any power symbols throughout the execution of our algorithm.
For each node $v$ in one of the parse trees, let $\ZL(v)$ be the value of the layer consisting of the (at most) $c \log n$ level-$\lev(v)$ nodes, with $v$ being the layer's rightmost node, and
$\ZR(v)$ be the value of the layer consisting of the (at most) $c \log n$ subsequent level-$\lev(v)$ nodes.
Now, consider a common substring $X$ of $S$ and $T$ and partition it into the prefix $X_\ell=\gen(\dUp(X))$ and the suffix $X_r=\gen(\dDown(X))$.
For any fragment $U$ of $S$ that equals $X$, $\CUp(U)$ is an up-layer of the form $v_1 \cdots v_m$.
Hence, by~\cref{lem:layers}, $X_\ell$ is a suffix of $\ZL(v_m)$. Similarly, $X_r$ is a prefix of $\ZR(v_m)$.
Thus, it suffices to maintain pairs $(\ZL(v),\ZR(v))$ for all nodes $v$ in $\PT(S)$ and $\PT(T)$, and, in particular, a pair of nodes $u \in \PT(S)$ and $v \in \PT(T)$ that 
maximizes $\lcp(\ZL(u)^R,\ZL(v)^R)+\lcp(\ZR(u),\ZR(v))$.
The existence of power symbols poses some technical challenges which we overcome below.

For each node of $\PT(T)$, we consider at most one pair consisting of an up-layer and a down-layer.
Crucially, for each maximal set of consecutive siblings, we consider a number of pairs proportional to the minimum of their number and $\log n$;
considering one pair for each of the siblings would be correct but inefficient.
This leads to a different treatment of nodes based on their parent.
We have two cases.

\begin{enumerate}
\item For each node $z$ with $\deg(z)=2$ such that $\Lv(z)$ is a concatenation symbol, we consider the following layers:
\begin{itemize}
\item The layer $\JUp(v)$ of the (at most) $c\log n$ level-$\lev(v)$ consecutive nodes of $\PT(T)$ with $v$ a rightmost node.
\item The layer $\JDown(v)$ of the (at most) $c\log n + \deg(w)$ subsequent level-$\lev(v)$ nodes of $\PT(T)$, where $w$ is the parent of the node to the right of $v$.
\end{itemize}

\item For each node $z$ of $\PT(T)$ whose label is a power symbol and has more than one child, we will consider $\cO(\log n)$ pairs of layers.
In particular, for each $v$, being one of the $c \log n+1$ leftmost or $c \log n+1$ rightmost children of $z$, we consider the following layers:
\begin{itemize}
\item The layer $\JUp(v)$ consisting of (a) the (at most) $c\log n$ level-$\lev(v)$ consecutive nodes of $\PT(T)$ preceding the leftmost child of $z$ and (b) all the children of $z$ that lie weakly to the left of $v$, that is, including $v$.
\item The layer $\JDown(v)$ consisting of the (at most) $c\log n$ subsequent level-$\lev(v)$ nodes of $\PT(T)$ -- with one exception.
If $v$ is the rightmost child of $z$ and the node to its right is a child of a node $w$ with more than two children,
then $\JDown(v)$ consists of the $c\log n+\deg(w)$ subsequent level-$\lev(v)$ nodes.
\end{itemize}
\end{enumerate}

In particular, we create at most one pair $(\JUp(v), \JDown(v))$ of layers for each node $v$ of $\PT(T)$.
Let $\YL(v)=\val(\JUp(v))$ and $\YR(v)=\val(\JDown(v))$.
Given a pointer to a node $z$ in $\PT(T)$, we can compute the indices of the fragments corresponding to those layers with a straightforward use of the pointers at hand in $\cO(\log n)$ time. 
With a constant number of split operations, we can then add the string $\YR(v)$ to our collection within $\cO(\log n)$ time.
Similarly, if we also maintain~$T^R$ in our collection of strings, we can add the reverse of $\YL(v)$ to the collection within $\cO(\log n)$ time.
We maintain pointers between $v$ and these strings.
Note that each node of $\PT(T)$ takes part in $\cO(\log n)$ pairs of layers and these pairs can be retrieved in $\cO(\log n)$ time.
Similarly, for each node whose label is a power symbol, subsets of its children appear in $\cO(\log n)$ pairs of layers; these can also be retrieved in $\cO(\log n)$ time.
These pairs of layers (or rather the pairs of their corresponding strings maintained in a dynamic collection) will be stored in an abstract structure presented in the next section.

Recall that each update on $T$ is processed in $\cO(\log n)$ time.
In each level of the parse tree, it deletes and then inserts $\cO(\log n)$ nodes and, possibly, two additional layers of consecutive siblings,
replacing a fragment of $T_h$.
In each level, the total number of affected pairs of layers is thus $\cO(\log n)$, for a total of $\cO(\log^2 n)$ pairs which can be computed in $\cO(\log^2 n)$ time.
The total time required to add/remove the affected pairs of layers is thus $\cO(\log^3 n)$.
In order to keep the space occupied by our data structure $\cOtilde(n)$, we split the updates into phases: after every $n$ updates to the collection, we delete our data structure and initialize a new instance of it for an empty collection, on which we call \textsf{makestring}$(S)$ and \textsf{makestring}$(T)$ and begin a new phase.
The cost of this reinitialization can be deamortized using standard techniques, by maintaining two copies of the structure, one of them being used to answer queries and the other one that is prepared in the background at appropriate rate, so it is ready when the next phase starts.
We summarize the above discussion in the following lemma.

\begin{lemma}\label{lem:Jup}
We can maintain pairs $(\YL(v)^R, \YR(v))$ for all $v$ in $\PT(T)$ and $\PT(S)$, with each string given as a handle from the dynamic collection,
in $\cO(\log^3 n)$ time with high probability per edit operation, using $\cOtilde(n)$ space. The number of deleted and inserted pairs for each edit operation is $\cO(\log^2 n)$.
\end{lemma}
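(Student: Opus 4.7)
The plan is to combine the dynamic string collection of \cref{thm:dynst} with the update mechanism for the locally consistent parsing from \cite{ods}, and to explicitly maintain, for every node $v$ for which a pair is defined by cases (1) or (2) above, handles to $\YL(v)^R$ and $\YR(v)$ in the collection, together with pointers linking $v$ to these handles. So that reverses are available as fragments of a stored string, I would also keep $S^R$ and $T^R$ in the collection and update them in parallel with $S$ and $T$.

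To construct a single pair on demand, I would walk out from $v$ at level $\lev(v)$ using the pointers to parent, to the $k$-th child, and to left/right siblings, identifying the $\cO(\log n)$ nodes that compose $\JUp(v)$ and $\JDown(v)$; reading off the indices of the corresponding fragments of $T$ therefore costs $\cO(\log n)$. A constant number of $\textsf{split}$ operations on the handle for $T$ then yields a handle for $\YR(v)$, and analogously for $\YL(v)^R$ from $T^R$, each operation costing $\cO(\log n)$ by \cref{thm:dynst}. Hence creating or destroying a single pair takes $\cO(\log n)$ time.

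Next I would bound the number of pairs that change under a substitution. The machinery of \cite{ods} realises each substitution as $\cO(\log^2 n)$ insertions and deletions of nodes and of layers of consecutive siblings in the affected parse tree. In case (1), the pair of a node $v$ depends only on the $\cO(\log n)$ same-level nodes within $\JUp(v)\cup\JDown(v)$ and on $\deg(w)$ for a single neighbour's parent $w$; in case (2), a power node $z$ only carries pairs for its $c\log n+1$ leftmost and rightmost children, so structural modifications at $z$ touch $\cO(\log n)$ pairs. Either way each elementary change from \cite{ods} affects the pairs of $\cO(\log n)$ nodes, giving $\cO(\log^3 n)$ inserted and deleted pairs in total and total work $\cO(\log^3 n)\cdot\cO(\log n)=\cO(\log^4 n)$.

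Finally, since \cref{thm:dynst} provides a \emph{persistent} collection that cannot truly free fragments, the occupied space would otherwise grow unboundedly. I would reset periodically: after every $n$ substitutions, discard the whole structure and reinitialize with $\textsf{makestring}$ calls for $S$, $T$, $S^R$, $T^R$, paying $\cO(n\log n)$ once and deamortizing this across the $n$ updates in the standard way. The step I expect to be the main technical obstacle is verifying the case analysis above in full detail: when a power node $z$ gains or loses a child, the shifting of its leftmost/rightmost windows of $c\log n+1$ children can simultaneously affect every pair associated with those children, and one must carefully account for these $\cO(\log n)$ pairs exactly once per elementary modification so that the bound does not pick up an extra logarithmic factor.
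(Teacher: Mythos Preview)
Your proposal is correct and follows essentially the same approach as the paper: maintain $T^R$ (and $S^R$) in the collection, construct each pair in $\cO(\log n)$ time via pointer navigation plus a constant number of \textsf{split} operations, use the $\cO(\log^2 n)$ elementary changes per substitution from \cite{ods} together with the observation that each such change touches $\cO(\log n)$ pairs, and periodically rebuild to bound the space. The technical obstacle you flag---that a power node $z$ only carries $\cO(\log n)$ pairs for its extremal children---is exactly what the paper relies on to keep the per-change count at $\cO(\log n)$.
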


The following lemma gives us an anchoring property, which is crucial for our approach.

\begin{lemma}\label{lem:anchored}
For any common substring $X$ of $S$ and $T$, there exists a partition $X=X_{\ell}X_r$ for which there exist nodes $u \in \PT(S)$ and $v \in \PT(T)$ such that:
\begin{enumerate}
\item $X_{\ell}$ is a suffix of $\YL(u)$ and $\YL(v)$, and
\item $X_{r}$ is a prefix of $\YR(u)$ and $\YR(v)$.
\end{enumerate}
\end{lemma}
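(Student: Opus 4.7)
The plan is to use the partition $X_\ell \cdot X_r$ from the context insensitive decomposition, $X_\ell = \gen(\dUp(X))$ and $X_r = \gen(\dDown(X))$, writing $\dUp(X) = A_0^{r_0}\cdots A_m^{r_m}$ and $\dDown(X) = B_m^{t_m}\cdots B_0^{t_0}$. Local consistency forces $\dUpDown(X)$ to depend only on $X$ as a string, so the layers $\CUp_S(X), \CDown_S(X)$ in $\PT(S)$ and $\CUp_T(X), \CDown_T(X)$ in $\PT(T)$ share identical levels, labels, and multiplicities; in particular both decompositions peak at a common level $m$ whose top $r_m$ nodes form a run of siblings labeled $A_m$. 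As anchors I take $u \in \PT(S)$ and $v \in \PT(T)$, the rightmost nodes of $\CUp_S(X)$ and $\CUp_T(X)$ respectively; both sit at level $m$ with label $A_m$.

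I proceed by case analysis on the type of parent of $u$ (equivalently, of $v$, up to bypassing degree-$1$ ancestors that preserve the value of the anchor). In the concatenation case, arising when $m$ is odd so that level $m + 1$ is built by \textsf{HalfCompress} from a partition $\Sigma = \Sigma_\ell \sqcup \Sigma_r$, the two children carry distinct labels and hence $r_m = 1$; Case~1 of the $(\JUp, \JDown)$ definition applies, and $\JUp(u)$ consists of up to $c\log n$ consecutive level-$m$ nodes ending at $u$. By \cref{lem:layers}, the extending layer of $\CUp(X)$ consists of at most $c\log n$ level-$m$ nodes ending at $u$, so it is a suffix-layer of $\JUp(u)$, yielding $X_\ell$ as a suffix of $\YL(u)$. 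Dually, $\JDown(u)$ has length $c\log n + \deg(w)$, exactly accommodating the at-most $c\log n + \max(t_m - 1, 0)$ nodes of the extending layer of $\CDown(X)$ -- with the extra $\deg(w)$ covering the $B_m^{t_m}$ siblings under $w$, the parent of the node right of $u$ -- so that $X_r$ is a prefix of $\YR(u)$. The analogous argument for $v$ closes Case 1.

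The main obstacle is the power-parent case, arising when $m$ is even and the parent $z$ of $u$ is a power of arity $\geq 2$: Case~2 provides a pair at $u$ only when $u$ lies in the leftmost or rightmost $c\log n + 1$ children of $z$. The strategy is to exploit the freedom of choosing an occurrence of $X$: since every child of $z$ shares label $A_m$ and value $\gen(A_m)$, one selects occurrences of $X$ in $S$ and in $T$ independently so that the topmost $\CUp$-anchor lands within the boundary window of its respective power parent; the partition $X_\ell \cdot X_r$ remains the same in both trees because local consistency forces identical multiplicities and labels. With such a choice, $\JUp(u)$ consists of all children of $z$ weakly left of $u$ together with the $c\log n$ preceding level-$m$ nodes, and thus contains the extending layer of $\CUp(X)$ as a suffix-layer, while $\JDown(u)$ -- extended by $\deg(w)$ via the rightmost-child exception when required -- contains the extending layer of $\CDown(X)$ as a prefix-layer; the desired suffix and prefix relations for $\YL(u), \YR(u)$ and symmetrically for $\YL(v), \YR(v)$ then follow.
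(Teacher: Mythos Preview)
Your framework matches the paper's: you take $X_\ell=\gen(\dUp(X))$, $X_r=\gen(\dDown(X))$, anchor at the rightmost node of $\CUp$, and split on whether the parent is a concatenation or a power symbol. The concatenation case is essentially right (modulo the climb through degree-$1$ ancestors, which the paper makes explicit and you only gesture at).

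The genuine gap is in the power-parent case. You write that one ``selects occurrences of $X$ in $S$ and in $T$ independently so that the topmost $\CUp$-anchor lands within the boundary window of its respective power parent,'' but you give no argument that such an occurrence exists. The paper does not simply assert this; it takes the \emph{rightmost} occurrence whose $\CUp$-top is a child of the same power node $z$ and then argues, by a shift of $|\gen(A_m)|$ positions, that the anchor must lie among the $c\log n+1$ leftmost or rightmost children. That shift argument only goes through under one of three hypotheses ($r_m=1$; or $t_m\neq 0$; or $\gen(\dUp(X))$ is a suffix of $\gen(A_m)^{c\log n+r_m}$). You supply none of this.

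More importantly, there is a case your single partition cannot handle at all: $r_m>1$, $t_m=0$, and $\gen(\dUp(X))$ is \emph{not} a suffix of $\gen(A_m)^{c\log n+r_m}$. Here the non-$A_m$ content at the left of $\dUp(X)$ pins the leftmost of the $r_m$ top siblings near the left boundary of $z$, but the rightmost one---your anchor---can sit arbitrarily deep inside $z$'s children (roughly at position $r_m$), outside both boundary windows, and no shift to another occurrence is available. The paper resolves this by \emph{changing the partition}: it moves the anchor to the $r_m$-th rightmost node of $\CUp$ (which is forced to be among the $c\log n$ leftmost children of $z$) and sets $X_\ell=\gen(A_0^{r_0}\cdots A_{m-1}^{r_{m-1}}A_m)$, $X_r=\gen(A_m)^{r_m-1}\gen(\dDown(X))$. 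Without this second claim your proof is incomplete.
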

\begin{proof}
Let $\dUp(X)=A_0^{r_0} A_1^{r_1} \cdots A_m^{r_m}$ and $\dDown(X)=B_m^{t_m} B_{m-1}^{t_{m-1}} \cdots B_0^{t_0}$.

\begin{claim}
Either $r_m>1$, $t_m=0$, and $\gen(\dUp(X))$ is not a suffix of $A_m^{c\log n+r_m}$, or there exists a node $v \in \PT(T)$ such that:
\begin{enumerate}
\item $\gen(\dUp(X))$ is a suffix of $\YL(v)$, and
\item $\gen(\dDown(X))$ is a prefix of $\YR(v)$.
\end{enumerate}
\end{claim}
\begin{proof}
We assume that $r_m=1$ or $\gen(\dUp(X))$ is a suffix of $A_m^{c\log n+r_m}$ or $t_m \neq 0$ and distinguish between the following cases.

\emph{Case 1.}
There exists an occurrence $Y$ of $X$ in $T$, where the label of the parent of the rightmost node $u$ of $\CUp(Y)$ is not a power symbol.
(In this case $r_m=1$.)
Recall here, that we did not construct any pairs of layers for nodes whose parent has a single child.
Let $v$ be the lowest ancestor of $u$ with label $A_m$.
If $u \neq v$ then all nodes that are descendants of $v$ and strict ancestors of $u$ have a single child, while the parent of $v$ does not.
In addition, the label of the parent of $v$ must be a concatenation symbol, since only new letters are introduced at each level and thus we cannot have new nodes with label $A_m$ appearing to the left/right of any strict ancestor of $u$.
Finally, note that a layer of $k$ level-$\lev(v)$ nodes with $v$ a leftmost (resp.~rightmost) node contains an ancestor of each of the nodes in a layer of $k$ level-$\lev(u)$ nodes with $u$ a leftmost (resp.~rightmost) node.
Thus, an application of~\cref{lem:layers} for $u$ straightforwardly implies our claim for $v$.

\emph{Case 2.}
There exists an occurrence $Y$ of $X$ in $T$, where the label of the parent $z$ of the rightmost node $u$ of $\CUp(Y)$ is a power symbol.
Let $W$ be the rightmost occurrence of $X$ in $T$ with the rightmost node $w$ of $\CUp(W)$ being a child of $z$.
We have three subcases.
\begin{enumerate}[a)]
\item We first consider the case $r_m=1$.
Let us assume towards a contradiction that $u$ is not one of the $c \log n+1$ leftmost or the $c \log n+1$ rightmost children of $z$.
Then, by~\cref{lem:layers} we have that $\gen(\dUp(X))$ is a suffix of $A_m^{c\log n}$ and $\gen(\dDown(X))$ is a prefix of $A_m^{c\log n}$.
Hence, there is another occurrence of $X$ $|\gen(A_m)|$ positions to the right of $Y$, contradicting our assumption that $Y$ is a rightmost occurrence.
\item In the case where $t_m \neq 0$, $u$ must be the rightmost child of $z$ since $A_m \neq B_m$.
\item In the remaining case where $\gen(\dUp(X))$ is a suffix of $A_m^{c\log n+r_m}$,
either $t_m>0$ and we are done, or $\gen(\CDown(Y))$ is a prefix of the value of the (at most) $c \log n$ level-$m$ nodes to the right of $u$. In the latter case, either $u$ is already among the rightmost $c \log n+1$ children of $z$ or there is another occurrence of $X$ $|\gen(A_m)|$ positions to the right of $Y$, contradicting our assumptions on $Y$.
Here, again, $v$ from the claim is the subsequent level-$\lev(u)$ node from $u$.
\qedhere
\end{enumerate}
\end{proof}

We have to treat a final case.

\begin{claim}
If $r_m>1$, $t_m=0$, and $\gen(\dUp(X))$ is not a suffix of $A_m^{c\log n+r_m}$ then there exists a node $v \in \PT(T)$ such that:
\begin{enumerate}
\item $\gen(A_0^{r_0} A_1^{r_1} \cdots A_{m-1}^{r_{m-1}}A_m)$ is a suffix of $\YL(v)$, and
\item $\gen(A_m^{r_m-1})\gen(\dDown(X))$ is a prefix of $\YR(v)$.
\end{enumerate}
\end{claim}
\begin{proof}
In any occurrence $Y$ of $X$ in $T$, the label of the parent $z$ of the rightmost node of $\CUp(Y)$ is a power symbol. Let $u$ be the $r_m$-th rightmost node of $\CUp(Y)$. By the assumption that $\gen(\dUp(X))$ is not a suffix of $A_m^{c\log n+r_m}$ and~\cref{lem:layers}, $u$ must be one of the $c \log n$ leftmost children of $z$
and $v$ being the subsequent level-$\lev(u)$ node from $u$ satisfies the claim.
\end{proof}

The combination of the two claims applied to both $S$ and $T$ yields the lemma.
\end{proof}

\subsection{A Problem on Dynamic Bicolored Trees}

Due to~\cref{lem:Jup,lem:anchored}, our task reduces to solving the problem defined below in polylogarithmic time per update,
as we can directly apply it to $\mathcal{R}=\{(\YL(u)^{R},\YR(u)): u \in \PT(S)\}$ and $\mathcal{B}=\{(\YL(v)^{R},\YR(v)): v \in \PT(T)\}$.
Note that $|\mathcal{R}|+|\mathcal{B}|=\cOtilde(n)$ throughout the execution of our algorithm.

\Dynproblem{LCP for Two Families of Pairs of Strings}{Two families $\mathcal{R}$ and $\mathcal{B}$, each
consisting of pairs of strings, where each string is given as a handle from a dynamic collection.}{Insertion or deletion of an element
in $\mathcal{R}$ or $\mathcal{B}$.}{Return $(P,Q) \in \mathcal{R}$ and $(P',Q') \in \mathcal{B}$ that maximize $\lcp(P,P')+\lcp(Q,Q')$.}

Each element of $\mathcal{B}$ and $\mathcal{R}$ is given a unique identifier.
We maintain two compacted tries $\T_{P}$ and $\T_{Q}$.
By appending unique letters, we can assume that no string is a prefix of another string.
$\T_{P}$ (resp.~$\T_Q$) stores the string $P$ (resp.~$Q$) for every $(P,Q)\in \mathcal{R}$, with the corresponding leaf colored red
and labeled by the identifier of the pair and the string $P'$ (resp.~$Q'$) for every $(P',Q')\in \mathcal{B}$, with the corresponding leaf colored blue and labeled by the identifier of the pair.
Then, the sought result corresponds to a pair of nodes $u\in \T_{P}$ and $v\in \T_{Q}$ returned by a query to a data structure for the \textsc{Dynamic Bicolored Trees Problem} defined below for $\T_1=\T_P$ and $\T_2=\T_Q$,
with node weights being their string-depths.

\Dynproblem{Dynamic Bicolored Trees Problem}{Two weighted trees $\T_1$ and $\T_2$ of total size at most $m$, whose leaves are bicolored and labeled, so that each label corresponds to exactly one leaf of each tree.}{Split an edge into two / attach a new leaf to a node / delete a leaf.}{Return a pair of nodes $u\in \T_1$ and $v\in \T_2$ with the maximum combined weight that have at least one red descendant with the same label, and at least one blue descendant with the same label.}

Here the input for the split operation is the~leaf $\ell$ and the string depth
$d$.
One can easily locate the actual edge to~split using a weighted ancestor
query from $\ell$.
After such split, the edge $(u, v)$ is decomposed into two edges $(u, w)$ and
$(w, v)$, such that the string depth of $w$ is $d$.

\begin{remark}
A static version of this problem has been used for approximate LCS under the Hamming distance, e.g.~in~\cite{DBLP:conf/cpm/Charalampopoulos18}.
\end{remark}

To complete the reduction, we have to show how to translate an update in $\mathcal{R}$ or $\mathcal{B}$ into updates in $\T_P$ and $\T_Q$.
Let us first explain how to represent $\T_{P}$ and $\T_{Q}$. For each edge, we store
a handle to a string from the dynamic collection, and indices for a fragment of this string which represents the edge's label. 
For each explicit node, we store edges
leading to its children in a dictionary structure indexed by the first letters of the edges' labels. For every leaf,
we store its label and color. An insert operation receives a string (given as a handle from a dynamic collection),
together with its label and color, and should create its corresponding leaf.
A delete operation does not actually remove a leaf, but simply removes its label.
However, in order to not increase the space complexity, we rebuild the whole data
structure from scratch after every $m$ updates. This rebuilding does not incur any extra cost asymptotically; 
the time required for it can be deamortized using standard techniques.

\begin{lemma}\label{lem:tries}
Each update in $\mathcal{R}$ or $\mathcal{B}$ implies $\cO(1)$ updates in $\T_{P}$ and $\T_{Q}$ that can be computed in $\cO(\log n)$ time. 
\end{lemma}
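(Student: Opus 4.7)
My plan for proving this lemma is as follows. I begin by observing that each update in $\mathcal{R}$ or $\mathcal{B}$ is either the insertion or the deletion of a pair $(P,Q)$, which translates into inserting or removing the corresponding leaf in each of $\T_P$ and $\T_Q$; so there are $\cO(1)$ leaf-level changes per update, and it suffices to handle each such change within a trie in $\cO(\log n)$ time with $\cO(1)$ structural modifications. Deletions are straightforward: by the convention adopted just before the lemma, they only remove the leaf's label and color, which costs $\cO(1)$ time and leaves the trie's shape unchanged. Hence all the remaining work is in implementing insertion.

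For inserting a string $W$ into a trie $\T$, I plan to avoid any top-down traversal of $\T$---whose depth can be as large as $\Theta(n)$---by maintaining, alongside $\T$, a balanced BST of the strings currently present as leaf path-labels, sorted lexicographically. Using the $\cO(1)$-time $\lcp$ operation of \cref{thm:dynst}, the sign of any string comparison is determined by the single character following the longest common prefix, so each BST operation runs in $\cO(\log n)$ time. Upon inserting $W$, I locate its lexicographic predecessor $W^-$ and successor $W^+$ in $\cO(\log n)$ time, compute $\ell^- = \lcp(W, W^-)$ and $\ell^+ = \lcp(W, W^+)$ in $\cO(1)$ time, and set $d = \max(\ell^-, \ell^+)$. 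Assuming WLOG that $d = \ell^-$, the new leaf for $W$ must branch off the root-to-leaf path for $W^-$ at string-depth exactly $d$, because the unique-letter padding used when building $\T_P$ and $\T_Q$ ensures distinct leaf path-labels and thus a well-defined branching point.

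The main obstacle is locating this branching point in $\T$ efficiently. To handle it, I plan to augment $\T$ with an auxiliary data structure supporting weighted-ancestor queries on a dynamic tree---for example, a link-cut tree in which each node is weighted by its string-depth---so that in $\cO(\log n)$ amortized time I can attach or detach a leaf, split or merge an edge, and, given a leaf and a target depth $d$, return the deepest explicit ancestor of string-depth at most $d$ (exploiting the monotonicity of string-depths along a root-to-leaf path). Applying this query to $W^-$ and $d$, I either find an explicit ancestor $u$ of string-depth exactly $d$ (and attach the new leaf as a new child of $u$), or an ancestor $u$ of string-depth strictly less than $d$ (and split the edge from $u$ to its child on the path toward $W^-$ at string-depth $d$, creating one new explicit internal node to which the new leaf is then attached). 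In either case, at most one new internal node and one new leaf are introduced, yielding $\cO(1)$ structural modifications per insertion and a total time of $\cO(\log n)$, as claimed.
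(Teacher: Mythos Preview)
Your argument is correct and reaches the same conclusion, but it takes a somewhat heavier route than the paper. The paper's proof also maintains a lexicographically sorted balanced BST and uses $\lcp$ queries from \cref{thm:dynst} to compare keys, but its BST stores the path-labels of \emph{all explicit nodes} of the trie, not just the leaves. This small change eliminates the need for your auxiliary weighted-ancestor machinery: once the branching depth $d$ is determined, the lexicographic successor in the BST of the depth-$d$ prefix $U'$ of the inserted string is already the path-label of the explicit child $v$ on the relevant edge, so the split point is simply $(\textsf{parent}(v),v)$ and no link--cut tree is required. Your approach is perfectly valid and arguably more modular (it cleanly separates ``locate the branching depth'' from ``locate the branching node''), at the cost of importing dynamic-tree data structures; the paper's approach trades a slightly larger BST (internal nodes included) for a lighter overall implementation. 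A minor remark: your $\cO(1)$ claim for deletion presumes a direct pointer from the pair's identifier to its leaf; the paper instead spends $\cO(\log n)$ to look the leaf up in the BST, which is still within the stated bound.
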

\begin{proof}
Inserting a new leaf, corresponding to string $U$, to $\T_{P}$ requires possibly splitting an edge into two by creating a new explicit
node, and then attaching a new leaf to an explicit node. To implement this efficiently, we maintain the set $C$ of path-labels of explicit nodes of $\T_P$ in a balanced search tree, sorted in lexicographic order.
Using $\lcp$ queries (cf.~\cref{thm:dynst}), we binary search for the longest prefix $U'$ of $U$ that equals the path-label of some implicit or explicit node of $\T_P$.
If this node is explicit, then we attach a leaf to it.
Otherwise, let the successor of~$U'$ in $C$ be the path-label of node $v$. We split the edge $(\textsf{parent}(v),v)$ appropriately and attach a leaf to the newly created node.
This allows us to maintain $\T_{P}$ after
each insert operation in $\cO(\log n)$ time.

For a delete operation, we can access the leaf corresponding
to the deleted string in $\cO(\log n)$ time using the balanced search tree.
\end{proof}

It thus suffices to show a solution for the \textsc{Dynamic Bicolored Trees Problem} that processes each update in polylogarithmic time.

We will maintain a heavy-light decomposition of both $\T_{1}$ and $\T_{2}$. This can be done by using
a standard method of rebuilding as used by Gabow~\cite{Gabow90}. Let $L(u)$ be the number of leaves in
the subtree of $u$, including the leaves without labels, when the subtree was last rebuilt.
Each internal node $u$ of a tree selects at most one child $v$ and the edge
$(u, v)$ is \emph{heavy}. All other edges are \emph{light}.
Maximal sequences of consecutive heavy edges are called \emph{heavy paths}.
The node $r(p)$ closest to the root of the tree is called the \emph{root} of the
heavy path $p$ and the node $e(p)$ furthest from the root of the tree is called
the \emph{end} of the heavy path. The following procedure receives a node $u$
of the tree and recursively rebuilds the heavy paths in its subtree.

\begin{algorithm}[ht]
  \begin{algorithmic}[1]
    \Function{\decompose}{$u$, $r$}
      \Comment{$r$ is the root of the heavy path containing $u$.}
      \State $S \gets \operatorname{children}(u)$
      \State $v \gets \operatorname{argmax}_{v \in S} L(v)$
      \If{$L(v) \ge \frac{5}{6} \cdot L(r)$}
        \State edge $(u, v)$ is heavy
        \State \decompose($v$, $r$)
        \State $S \gets S \setminus \{v\}$
      \EndIf
      \For{$v \in S$}
        \State \decompose($v$, $v$)
      \EndFor
    \EndFunction
  \end{algorithmic}
\end{algorithm}

Every root $u$ of a heavy path maintains the number of insertions
$I(u)$ in its subtree since it was last rebuilt. When $I(u) \ge \frac{1}{6} \cdot L(u)$, we recalculate the values of $L(v)$
for nodes $v$ in the subtree of $u$ and call $\decompose(u, u)$.
This maintains the property that $L(e(p)) \ge \frac{2}{3} L(r(p))$
for each heavy path $p$ and leads to the following.

\begin{proposition}
There are $\cO(\log m)$ heavy paths above any node.
\label{prop:pathlog}
\end{proposition}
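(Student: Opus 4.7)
My plan is to show that whenever the path from a node $x$ up to the root crosses a light edge, the value $L$ multiplies by at least a factor of $6/5$, and then combine this with the bound $L\le m$ to conclude.

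First, I would verify the local growth property at the moment of any rebuild. When $\decompose(u,r)$ sets an edge $(u,v)$ to be light, it is because either (i) $v$ is not the argmax of $L$ over the children of $u$, or (ii) $v$ is the argmax but $L(v) < \tfrac{5}{6}L(u)$. In case (i) there is a sibling $v^{*}$ with $L(v^{*})\ge L(v)$, so the additivity of $L$ at rebuild time gives $L(u)\ge L(v^{*})+L(v)\ge 2L(v)$; in case (ii) the threshold rearranges to $L(u) > \tfrac{6}{5}L(v)$. Thus $L(u)\ge \tfrac{6}{5}L(v)$ holds for every light edge immediately after $\decompose$.

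Second, I would argue that this local invariant is preserved between rebuilds. The trigger $I(r)\ge \tfrac{1}{6}L(r)$ ensures that before the subtree of any heavy-path root $r$ is rebuilt, the current number of leaves in its subtree stays within $(7/6)L(r)$, and the property $L(e(p))\ge \tfrac{2}{3}L(r(p))$ quoted above is maintained by exactly this rebuild trigger. Between rebuilds, edge-splits insert a new internal node into an existing edge without changing that edge's heavy/light status, leaf attachments introduce fresh singleton heavy paths (since a new leaf has $L=1$ while its parent's $L$ is at least $2$ after the split, making the attaching edge light in the sense required), and unlabellings do not remove nodes. So the growth condition $L(u)\ge \tfrac{6}{5}L(v)$ continues to hold on every light edge of the current decomposition, possibly with a slightly weakened constant.

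Finally, the growth invariant together with $L\le m$ implies that on any root-to-$x$ path there are at most $\log_{6/5} m = \cO(\log m)$ light edges. Because heavy paths contain only heavy edges, each transition from one heavy path to the next higher one on the root-to-$x$ path is a single light edge. Hence the number of heavy paths strictly above $x$ equals the number of light edges on the path from $x$ to the root, and adding one for the heavy path containing $x$ itself yields the claimed $\cO(\log m)$ bound.

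The main obstacle is the careful accounting for dynamic updates between rebuilds: one needs to verify that no sequence of splits, fresh-leaf attachments, or unlabellings can push the $L$-ratio on a currently-light edge below $\tfrac{6}{5}$ before the rebuild fires. The $\tfrac{1}{6}$ trigger is calibrated exactly for this: once $I(r)$ crosses that threshold the subtree of $r$ is fully rebuilt, restoring the $\tfrac{6}{5}$ ratio on every light edge in the refreshed portion of the tree, so that the bound on the number of heavy paths above any node persists throughout the sequence of updates.
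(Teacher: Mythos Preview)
The paper states this proposition without proof, deriving it from the sentence just before: ``This maintains the property that $L(e(p)) \ge \tfrac{2}{3}\,L(r(p))$ for each heavy path $p$.'' (That invariant, incidentally, is not literally true---a long caterpillar shows $L(e(p))$ can be a small constant while $L(r(p))$ is large---so the paper's one-line justification is itself imprecise.) Your light-edge ratio argument is the standard route and is the right way to go; it supplies more than the paper does.

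One point in your write-up deserves tightening. In Step~3 you telescope the inequalities $L(u)\ge\tfrac{6}{5}L(v)$ along the root-to-$x$ path, but the stored values $L(\cdot)$ on the two endpoints of a light edge $(u,v)$ were, in general, recorded at \emph{different} rebuild times: $L(u)$ at the last rebuild of the heavy path containing $u$, and $L(v)=L(r(p'))$ possibly at a strictly later rebuild of $p'$. So the stored ratio $L(u)/L(v)$ need not stay above $\tfrac{6}{5}$, and the direct telescoping is not justified. The fix is exactly the one you gesture at with ``a slightly weakened constant'': work with the \emph{current} leaf counts $N(\cdot)$ instead. For consecutive heavy-path roots $r(p_j)$ below $r(p_{j+1})$, the light-edge inequality at the last rebuild of $p_{j+1}$ gives $N^{(\text{then})}(r(p_j))\le\tfrac{5}{6}L(r(p_{j+1}))$, and the trigger bounds the insertions since then by $I(r(p_{j+1}))<\tfrac{1}{6}L(r(p_{j+1}))$. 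Combining,
\[
N(r(p_j))<L(r(p_{j+1}))\le N(r(p_{j+1}))<\tfrac{7}{6}\,L(r(p_{j+1})),
\]
so $N(r(p_j))<\tfrac{6}{7}\,N(r(p_{j+1}))$, and the number of heavy paths above $x$ is at most $\log_{7/6} m=\cO(\log m)$. With this adjustment your plan is complete and, in fact, more careful than what the paper offers.
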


As rebuilding a subtree of size $s$ takes $\cO(s)$ time, by a standard potential
argument, we get the following. (The bottleneck is in updating $I(u)$s.)

\begin{lemma}\label{lem:hl}
The heavy-light decompositions of $\T_1$ and $\T_2$ can be maintained in $\cO(\log m)$ amortized time per update.
\end{lemma}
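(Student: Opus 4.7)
The plan is a textbook amortized analysis via a potential function, combined with the bound from Proposition~\ref{prop:pathlog}. First I would observe that among the three kinds of updates, only attaching a new leaf can genuinely change $L$-values or trigger a rebuild: an edge split at $(u,v)$ introduces a new single-child node $w$ whose edge classifications are inherited from the old edge (and any $I$-counter is transferred from $v$ to $w$ in $\cO(1)$ time), while a leaf deletion only removes a label and leaves the underlying tree intact. Thus the analysis reduces to the case of leaf insertions.

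For an insertion of a leaf $\ell$, I would walk up from $\ell$ to the root, jumping between heavy paths via parent pointers stored at each heavy-path root, and increment $I(r)$ by one at each heavy-path root $r$ that is an ancestor of $\ell$; by Proposition~\ref{prop:pathlog} there are only $\cO(\log m)$ such roots, so the direct cost per insertion is $\cO(\log m)$. Whenever some $I(r)$ reaches the threshold $\tfrac{1}{6} L(r)$ along the way, the subtree of $r$ is rebuilt, and the cost of these rebuilds is charged separately through a potential.

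The potential I would use is
\[
\Phi \;=\; c \cdot \sum_{r} I(r),
\]
where $r$ ranges over the roots of heavy paths in $\T_1$ and $\T_2$ and $c$ is a sufficiently large constant. An insertion raises $\Phi$ by $\cO(c \log m)$ since it touches only the $\cO(\log m)$ heavy-path roots above $\ell$. A rebuild of the subtree of a heavy-path root $u$ fires with $I(u) \ge \tfrac{1}{6} L(u)$, so the current subtree size is at most $L(u) + I(u) \le 7\,I(u)$ and the rebuild takes $\cO(I(u))$ time; the call $\decompose(u,u)$ resets $I(u)$ to zero, freshly decomposes the subtree with all new heavy-path roots inside starting at $I=0$, and touches no counter outside the subtree, so $\Phi$ drops by at least $c \cdot I(u)$. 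Choosing $c$ large enough, the amortized cost of a rebuild is non-positive, which combined with the $\cO(\log m)$ direct cost per insertion yields $\cO(\log m)$ amortized time per update.

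The only delicate point in this plan is checking that the potential drop during a rebuild is genuinely $c \cdot I(u)$ — that is, confirming that $\decompose(u,u)$ modifies $L$-values and heavy-path roots only inside the rebuilt subtree and that no counter outside the subtree is disturbed. This is immediate from the pseudocode and from the definition of $L$, which freezes each node's value at the moment its subtree is last rebuilt.
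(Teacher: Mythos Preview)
Your proposal is correct and is precisely the ``standard potential argument'' the paper invokes without spelling out; the paper gives no further detail beyond that phrase, so your write-up simply supplies what is omitted. One cosmetic point: the bound ``current subtree size is at most $L(u)+I(u)$'' should really read $\cO(L(u)+I(u))$, since internal nodes must be counted too, but in the intended application (compacted tries, where every internal node has at least two children) this only costs a constant factor and the rest of the argument goes through unchanged.
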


Note that we used $I(u)$ as number of insertions, not the number of operations
and this does not impose any new problem in the analysis.
If we assign a~credit to~the node each time it is deleted, standard amortized
analysis via accounting method shows that we can rebuild a subtree without
exceeding the claimed time with usage of these additional credits.

The main ingredient of our data structure is a collection of additional structures, each storing
a dynamic set of points. Each such point structure sends its current result to a max-heap,
and after each update we return the largest element stored in the heap.
The problem each of these point structures are designed for is the following.

\Dynproblem{Dynamic Best Bichromatic Points}{A multiset of at most $m$ bicolored points from $[m]\times [m]$.}{Insertions and deletions of points from $[m]\times [m]$.}{Return a pair of points $R=(x,y)$ and $B=(x',y')$ such that $R$ is red, $B$ is blue, and $\min(x,x')+\min(y,y')$ is as large as possible.}

We call the pair of points sought in this problem the \emph{best bichromatic pair of points}.
In~\cref{sec:structure}, we show the following result by modifying 2D range trees.

\begin{lemma}\label{lem:ptstructure}
There is a data structure for \textsc{Dynamic Best Bichromatic Points} that processes each update in $\cO(\log^{2} m)$ amortized time.
\end{lemma}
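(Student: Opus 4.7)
The plan is to augment a range tree on the $x$-coordinate with color-aware secondary structures indexed by $y$, and to maintain a heap of ``cross-contributions'' over its nodes. Build the outer tree as a dyadic binary tree over $[m]$, so that every internal node $v$ corresponds to an $x$-interval $I_v$; at $v$ maintain an inner pair of balanced BSTs (one per color) keyed by $y$-coordinate, containing every current point with $x\in I_v$, and augment each subtree with the aggregates $\max x$ and $\max (x+y)$. Each point thus appears in the $\cO(\log m)$ inner BSTs of its root-to-leaf ancestors, and each BST update costs $\cO(\log m)$.

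For any bichromatic pair $(R,B)$, the leaves of the outer tree corresponding to $x_R$ and $x_B$ have a unique lowest common ancestor $v^\star$ at which the two points fall into distinct subtrees (or into the same leaf, handled as an easy base case by storing $\max y$ per color at each leaf). Hence the desired maximum equals $\max_v m(v)$, where $m(v)$ is the best bichromatic value over pairs split across the two children of $v$. We keep all values $m(v)$ in a max-heap so that the global optimum is reported in $\cO(1)$ time. To evaluate $m(v)$, first consider the case that $R$ red lies in the left subtree and $B$ blue in the right, so that $x_R\le x_B$ and $\min(x_R,x_B)+\min(y_R,y_B)=x_R+\min(y_R,Y)$, where $Y:=\max\{y_{B'}:B'\text{ blue in the right subtree}\}$ is read off the right child's blue root aggregate. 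Splitting on the sign of $y_R-Y$ yields
\[
  \max_{R\text{ red in left subtree}} \bigl(x_R+\min(y_R,Y)\bigr) \;=\; \max\Bigl(\max_{y_R\le Y}(x_R+y_R),\; Y+\max_{y_R>Y} x_R\Bigr),
\]
and each inner maximum is a $y$-range query on the red inner BST of the left child, answerable in $\cO(\log m)$ time using the two aggregates. The symmetric case (blue in the left subtree, red in the right) is handled identically.

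An insertion or deletion of a point touches only the $\cO(\log m)$ outer nodes on the path to its $x$-leaf. At each such ancestor $v$ we update its inner BSTs in $\cO(\log m)$ time, recompute $m(v)$ with a constant number of $\cO(\log m)$-time aggregate queries, and refresh the max-heap in $\cO(\log m)$ time. The total cost per update is therefore $\cO(\log^2 m)$, amortized. The main point to verify is that the two aggregates $\max x$ and $\max (x+y)$, maintained per color per outer node, indeed suffice to dispatch every case split at $v$ in $\cO(\log m)$; once this is confirmed, the rest is routine range-tree bookkeeping, and since the universe $[m]$ is fixed, the outer tree itself needs no rebalancing.
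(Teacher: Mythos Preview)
Your approach is correct and achieves the stated bound, but it is organized differently from the paper's proof. The paper builds a full 2D range tree in the sense of Willard--Lueker: a primary $BB(\alpha)$ tree on $x$, with a secondary balanced tree $\T_u$ on $y$ at each primary node $u$. The key device there is that every bichromatic pair is \emph{shattered} at a unique secondary node $v\in\T_u$ (the $y$-LCA inside the $x$-LCA's secondary structure), and each secondary node is augmented with per-color $\max x$, $\max y$, $\max(x+y)$ together with the best pair shattered below it; all these aggregates recombine in constant time per node, so the $\cO(\log^2 m)$ amortized bound falls out directly from the Willard--Lueker framework for augmented range trees.

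You instead fix a static dyadic outer tree on $[m]$ and use only the $x$-LCA split, compensating for the absent $y$-LCA refinement by spending $\cO(\log m)$ at each of the $\cO(\log m)$ touched outer nodes to recompute $m(v)$ via a threshold-split range query on the child's inner BST. This sidesteps both the shattering notion and the $BB(\alpha)$ rebalancing; since your outer tree is static and the inner BSTs and heap can be worst-case balanced, your argument in fact gives a worst-case $\cO(\log^2 m)$ bound rather than merely amortized. One small wording slip: you say $Y$ is ``read off the right child's blue root aggregate'', but your declared aggregates are only $\max x$ and $\max(x+y)$; of course $Y=\max y$ is simply the maximum key of the $y$-indexed BST, so this is harmless.
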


Conceptually, we maintain a point structure for every pair of heavy paths
from $\T_{P}$ and~$\T_{Q}$. However, the total number of points stored in all point structures
at any moment is only $\cO(m\log^{2}m)$ and the empty structures are not actually created.
Consider heavy paths $p$ of $\T_{1}$ and $q$ of $\T_{2}$.
Let $\ell$ be a label such that there are leaves $u$ in the subtree of $r(p)$ in $\T_{1}$
and $v$ in the subtree of $r(q)$ in $\T_{2}$ with the same color and both labeled by $\ell$.
Then, the point structure should contain a point $(x,y)$ with this color, where
$x$ and $y$ are the string-depths of the nodes of $p$ and $q$ containing $u$ and $v$
in their light subtrees, respectively. It can be verified that then the answer extracted from
the point structure is equal to the sought result, assuming that the corresponding pair
of nodes belongs to $p$ and $q$, respectively. It remains to explain how to maintain
this invariant when both trees undergo modifications.

Splitting an edge does not require any changes to the point structures.
Each label appears only once in $\T_{1}$ and $\T_{2}$, and hence by Proposition~\ref{prop:pathlog}
contributes to only $\cO(\log^{2}n)$ point structures. Furthermore, by navigating
the heavy path decompositions we can access these point structures efficiently. This allows
us to implement each deletion in $\cO(\log^4 n)$ amortized time, employing~\cref{lem:ptstructure}. 
To implement the insertions, we need to additionally
explain what to do after rebuilding a subtree of $u$. In this case, we first
remove all points corresponding to leaves in the subtree of $u$, then rebuild the subtree,
 and then proceed to insert points to existing and potentially new point structures. 
As each insertion affects $\cO(\log n)$ heavy paths, it affects $\cO(\log n)$ rebuilding instances.
By the same standard potential
argument as above, each insertion costs $\cO(\log^4 n)$ amortized time per such instance: we add a point, in $\cO(\log^{2} n)$ time, in $\cO(\log^{2} n)$ point structures.
Hence insertions require $\cO(\log^5 n)$ amortized time.

\paragraph{Wrap-up.}
\cref{lem:anchored} reduces our problem to the \textsc{LCP for Two Families of Pairs of Strings} problem for sets $\mathcal{R}$ and $\mathcal{B}$ of size $\cOtilde(n)$, 
so that each edit operation in $S$ or $T$ yields $\cO(\log^2 n)$ updates to $\mathcal{R}$ and $\mathcal{B}$, which can be performed in $\cO(\log^3 n)$ time due to~\cref{lem:Jup}.
The \textsc{LCP for Two Families of Pairs of Strings} problem is then reduced to the \textsc{Dynamic Bicolored Trees Problem} for trees $\T_1$ and $\T_2$ of size $\cOtilde(n)$, so that each update in $\mathcal{R}$ or $\mathcal{B}$ yields $\cO(1)$ updates to the trees, which can be computed in $\cO(\log n)$ time (\cref{lem:tries}).
We solve the latter problem by maintaining a heavy-light decomposition of each of the trees in $\cO(\log n)$ amortized time per update (\cref{lem:hl}),
and an instance of a data structure for the \textsc{Dynamic Best Bichromatic Points} problem for each pair of heavy paths.
For each update to the trees, we spend $\cO(\log^5 n)$ amortized time to update the point structures.
Thus, each update in one of the strings costs a total of $\cO(\log^7 n)$ amortized time.
In order to keep the space usage of our data structure $\cOtilde(n)$, we reinitialize it after every $\Theta(n)$ updates.

\subsection{Dynamic Best Bichromatic Points}
\label{sec:structure}

In this section we prove~\cref{lem:ptstructure}, i.e., design an efficient data structure for the \textsc{Dynamic Best Bichromatic Points} problem.

First, we show that we can assume that all $x$ and $y$ coordinates of points are distinct.
Let us replace identical points of the same colour with a single point, with which we store its multiplicity as satellite information.
Then, we perform the following standard perturbation. 
Namely, we can (implicitly) replace each red (resp.~blue) point $(x,y)$ with $((x|y|0),(y|x|0))$ (resp.~$((x|y|1),(y|x|1))$),
and use the lexicographic order to perform comparisons for each coordinate (cf~\cite[Section 5.5]{DBLP:books/lib/BergCKO08}). 
As the data structures that we use are comparison based, the above transformation does not affect the complexities.

We maintain an augmented dynamic 2D range tree~\cite{WillardL85} over the multiset of points.
This is a balanced search tree $\T$ (called primary) over the $x$ coordinates of all points
in the multiset in which every $x$ coordinate corresponds to a leaf and, more generally, every
node $u\in \T$ corresponds to a range of $x$ coordinates denoted by $x(u)$.
Additionally, every $u\in \T$ stores another balanced search tree $\T_{u}$ (called secondary)
over the $y$ coordinates of all points $(x,y)\in S$ such that $x\in x(u)$.
Thus, the leaves of $\T_{u}$ correspond to $y$ coordinates of such points,
and every $v\in \T_{u}$ corresponds to a range of $y$ coordinates denoted by $y(v)$.
We interpret every $v\in \T_{u}$ as the rectangular region of the plane $x(u) \times y(v)$,
and, in particular, each leaf $v\in \T_{u}$ corresponds to a single point in the multiset.
Each node $v\in \T_{u}$ will be augmented with some extra information that
can be computed in constant time from the extra information stored in its children. Similarly,
each node $u\in \T$ will be augmented with some extra information that
can be computed in constant time from the extra information stored in its children together
with the extra information stored in the root of the secondary tree $\T_{u}$.
The complexity of the described approach depends on the balanced tree that is used to represent the primary and secondary tree.
Irrespectively of what this extra information is, as explained by Willard and Lueker~\cite{WillardL85},
if we implement the primary tree as a $BB(\alpha)$ tree and each secondary tree as a balanced search
tree, each insertion and deletion can be implemented in $\cO(\log ^{2}m)$
amortized time.

Before we explain what is the extra information, we need the following notion.
Consider a non-leaf node $u\in \T$ and let $u_{\ell},u_{r}\in \T$ be its children.
Let $v\in \T_{u}$ be a non-leaf node with children $v_{\ell},v_{r}\in \T_{u}$.
The regions $A=x(u_{\ell})\times y(v_{\ell})$, $B=x(u_{\ell})\times y(v_{r})$,
$C=x(u_{r})\times y(v_{\ell})$ and $D=x(u_{r})\times y(v_{r})$ partition $x(u)\times y(v)$
into four parts. We say that two points $p=(x,y)$ and $q=(x',y')$ with $x<x'$
are shattered by $v\in \T_{u}$ if and only if $p\in A$ and $q\in D$ or $p\in B$ and $q\in C$
(note that the former is only possible when $y<y'$ while the latter can only hold when $y>y'$).

\begin{proposition}
Any pair of points in the multiset is shattered by a unique $v\in \T_{u}$ (for a unique $u$).
\end{proposition}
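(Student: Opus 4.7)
My plan is to show existence and uniqueness separately, using the intuition that a node $u \in \T$ shattering a pair must be the lowest common ancestor (LCA) in the primary tree of the $x$-coordinates, and likewise $v$ must be the LCA in $\T_u$ of the $y$-coordinates.

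For existence, fix a pair of points $p=(x,y)$ and $q=(x',y')$ with $x<x'$. Let $u$ be the LCA in $\T$ of the two leaves of $\T$ corresponding to $x$ and $x'$. Since $u$ is the LCA and $x<x'$, the coordinate $x$ lies in $x(u_\ell)$ and $x'$ lies in $x(u_r)$. Because both points have $x$-coordinate in $x(u)$, their $y$-coordinates both appear as leaves of the secondary tree $\T_u$. Let $v \in \T_u$ be the LCA in $\T_u$ of these two leaves. Then $y$ and $y'$ end up in different children of $v$: if $y \in y(v_\ell)$ and $y' \in y(v_r)$, then $p \in A$ and $q \in D$; in the opposite case, $p \in B$ and $q \in C$. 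Either way, $v$ shatters $p$ and $q$ as required.

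For uniqueness, I would exploit the rectangular partition directly. If $v \in \T_{u}$ shatters $\{p,q\}$, then $p$ and $q$ must lie in different children of $u$ along the $x$-axis (by inspection of the four regions $A,B,C,D$). This forces $u$ to be a common ancestor in $\T$ of the $x$-leaves that places them in different subtrees, and among all ancestors with this property only the LCA works: a strict ancestor $u$ of the LCA would contain both $x$-leaves in the same child subtree, violating the shattering condition. An identical argument in the secondary tree shows $v$ must be the LCA in $\T_u$ of the two $y$-leaves, so once $u$ is fixed, $v$ is fixed too.

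The only mildly subtle point is verifying that the secondary-tree argument is legitimate, that is, that both $y$-coordinates are actually stored as leaves of $\T_u$ (so the LCA in $\T_u$ is well-defined). This is immediate from the definition of the range tree: $\T_u$ contains the $y$-coordinate of every point whose $x$-coordinate lies in $x(u)$, and both $p$ and $q$ satisfy this since $x(u_\ell), x(u_r) \subseteq x(u)$. Apart from this bookkeeping, the proof reduces to the standard LCA-based canonical-node argument for 2D range trees; no additional machinery is required.
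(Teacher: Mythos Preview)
Your argument is correct and is precisely the standard LCA-based justification one would give. The paper itself states this proposition without proof, treating it as a routine fact about 2D range trees; your write-up simply spells out what the paper leaves implicit, namely that the shattering node is forced to be the LCA of the two $x$-leaves in $\T$ paired with the LCA of the two $y$-leaves in the corresponding secondary tree.
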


\FIGURE{t}{0.9}{primary-secondary}{
  Left: A 2D range tree.
  Right: Node representing regions $A$, $B$, $C$, $D$.
  The best pair for each case is denoted by a~small square.
}

Now we are ready to describe the extra information. Each node $u\in \T$ stores
the best bichromatic pair with $x$ coordinates from $x(u)$. Each node $v\in \T_{u}$
stores the best bichromatic pair shattered by one of its descendants $v'\in \T_{u}$ (possibly $v$ itself).
Additionally, each node $v\in \T_{u}$ stores the following information about points
of each color in its region:
\begin{enumerate}
  \item the point with the maximum $x$,
  \item the point with the maximum $y$,
  \item a point with the maximum $x+y$.
\end{enumerate}
We need to verify that such extra information can be indeed computed in constant time
from the extra information stored in the children.

\begin{lemma}
Let $v\in \T_{u}$ be a non-leaf node, and $v_{\ell},v_{r}$ be its children. The extra
information of $v$ can be computed in constant time given the extra information
stored in $v_{\ell}$ and $v_{r}$.
\end{lemma}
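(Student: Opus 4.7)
The plan is to first slightly refine the augmented information stored at each $v \in \T_u$ and then to give constant-time aggregation and combination formulas. Besides the three per-color maxima (in $x$, in $y$, and in $x+y$) over $v$'s entire region $x(u) \times y(v)$, I would store, for each color, the same three maxima separately over the \emph{left half} $x(u_\ell) \times y(v)$ and the \emph{right half} $x(u_r) \times y(v)$ of that region. Inside the secondary tree $\T_u$ the primary split $x(u) = x(u_\ell) \sqcup x(u_r)$ is fixed, so these two halves are well-defined, and the original full-region information is recovered as the coordinate-wise $\max$ of the two halves. The per-half information at $v$ is then obtained in constant time from its children: the left half of $v$'s region is exactly $A \cup B$, where $A$ is the left half of $v_\ell$'s region and $B$ is the left half of $v_r$'s region, so each left-half maximum at $v$ is a single $\max$ of the corresponding quantities already stored at $v_\ell$ and $v_r$; symmetrically for the right half $C \cup D$.

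For the other stored value---the best bichromatic pair shattered by some descendant of $v$---I take the maximum (in the objective) of three candidates: the best pair stored at $v_\ell$, the best pair stored at $v_r$, and the best pair shattered by $v$ itself. Every pair shattered by $v$ has its two points lying on exactly one of the diagonals $A$--$D$ or $B$--$C$, and I handle the four (diagonal, coloring) sub-cases in turn. On the $A$--$D$ diagonal, the point in $A$ realizes both $\min x$ and $\min y$ of the pair, so the objective $\min(x,x') + \min(y,y')$ collapses to that point's $x+y$; hence the best pair with a red point in $A$ and a blue point in $D$ has value equal to the maximum of $x+y$ over red points in $A$ (available from $v_\ell$'s left-half information), provided at least one blue point lies in $D$ (detectable from $v_r$'s right-half information), and symmetrically for the swapped coloring. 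On the $B$--$C$ diagonal the objective decomposes as the upper-left point's $x$ plus the lower-right point's $y$, a sum over independent coordinates, so the best pair with a red point in $B$ and a blue point in $C$ has value equal to the maximum $x$ over red points in $B$ plus the maximum $y$ over blue points in $C$, each read directly from the appropriate half-region information at $v_\ell$ or $v_r$; symmetric formulas handle the swapped coloring. Taking the maximum over the four sub-cases yields the best pair shattered by $v$ itself in $\cO(1)$.

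The main obstacle is precisely the need for these quadrant-level extrema: the three maxima over $v$'s \emph{entire} region as written in the lemma statement do not suffice to evaluate the shattering cases, which distinguish $A$, $B$, $C$, and $D$, so the refinement into a left/right split at every $v \in \T_u$ is essential. Once this refinement is in place, the two $\min$-based objectives collapse either to a single point's $x+y$ (on the $A$--$D$ diagonal) or to two independent coordinate extrema (on the $B$--$C$ diagonal), and every required aggregation reduces to a constant number of comparisons, giving the claimed constant-time recomputation of all augmented information at $v$ from that of $v_\ell$ and $v_r$.
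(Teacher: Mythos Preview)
Your approach follows the same case analysis as the paper: recursively combine the best-pair values from $v_\ell$ and $v_r$, and then handle pairs shattered by $v$ itself via the $A$--$D$ and $B$--$C$ diagonals, observing that on $A$--$D$ the objective collapses to the lower-left point's $x+y$ while on $B$--$C$ it decomposes into independent $x$- and $y$-maxima. The substantive difference is that you store, for each color, the three maxima separately over the \emph{left half} $x(u_\ell)\times y(v)$ and the \emph{right half} $x(u_r)\times y(v)$, not only over the full region $x(u)\times y(v)$ as the paper lists.

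You are right that this refinement is needed. The children $v_\ell,v_r$ partition $v$'s region into $A\cup C$ and $B\cup D$, not into the individual quadrants $A,B,C,D$, so from the full-region maxima stored at $v_\ell$ and $v_r$ one cannot isolate, say, the maximum $x+y$ over red points in $A$ alone (it could be attained by a point in $C$). The paper's proof states the correct target quantities (``$p\in A$ with maximum $x+y$'', ``$p\in B$ with maximum $x$'', etc.) and then asserts that the children's extra information suffices, but it does not explain how to separate $A$ from $C$ or $B$ from $D$; your left/right split supplies exactly that missing bookkeeping. Your aggregation formulas for the half-region data are correct (the left half of $v$ is the union of the left halves of $v_\ell$ and $v_r$, so each maximum is a single $\max$), and your handling of the four (diagonal, coloring) sub-cases is sound. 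Overall your argument is a faithful and correct execution of the paper's strategy, with the quadrant-level information made explicit where the paper glossed over it.
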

\begin{proof}
This is clear for the maximum $x$, $y$ and $x+y$ of each color, as we can take the maximum
of the corresponding values stored in the children. For the best bichromatic pair shattered
by a descendant $v'$ of $v$, we start with considering the best bichromatic pair shattered
by a descendant $v'_{\ell}$ of $v_{\ell}$ and $v'_{r}$ of $v_{r}$. The remaining case is
that the best bichromatic pair is shattered by $v$ itself. Let $A,B,C,D$ be as in the definition
of shattering. Without losing generality we assume that the sought pair is $p=(x,y)$ and $q=(x',y')$
with $x<x'$, red $p$ and blue $q$. We consider two cases:
\begin{enumerate}
\item $p\in A$ and $q\in D$: the best such pair is obtained by taking $p$ with the maximum $x+y$ and any $q$,
\item $p\in B$ and $q\in C$: the best such pair is obtained by taking $p$ with the maximum $x$
and $q$ with the maximum $y$.
\end{enumerate}
In both cases, we are able to compute the best bichromatic pair shattered by $v$ using the extra
information stored at the children of $v$. See Figure~\ref{fig:primary-secondary}.
\end{proof}

\begin{lemma}
Let $u\in \T$ be a non-leaf node, and $u_{\ell},u_{r}$ be its children. The extra
information of $v$ can be computed in constant time given the extra information
stored in $v_{\ell}$, $v_{r}$ and the root of $\T_{u}$.
\end{lemma}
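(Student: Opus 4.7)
The extra information that needs to be computed for $u\in \T$ is the best bichromatic pair with $x$-coordinates in $x(u)$. The plan is to show that any such pair falls into one of exactly three cases, each of which is directly encoded in the information stored at $u_\ell$, $u_r$, or the root of $\T_u$.

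First, I would partition the candidate pairs according to the positions of their two $x$-coordinates with respect to the split $x(u)=x(u_\ell)\cup x(u_r)$. If both $x$-coordinates lie in $x(u_\ell)$, the best such pair is exactly the extra information stored at $u_\ell$; symmetrically for $u_r$. The remaining case is that one $x$-coordinate lies in $x(u_\ell)$ and the other in $x(u_r)$. The key observation here is that by the shattering proposition stated earlier (any pair is shattered by a unique secondary node), such a ``crossing'' pair is shattered by some $v\in \T_u$: since the $x$-coordinates separate at $u$, they cannot be shattered lower in the primary tree, and thus the shattering secondary node lives in $\T_u$ and is therefore a descendant of the root of $\T_u$. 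Consequently, the best such pair is exactly the ``best bichromatic pair shattered by a descendant of $v'$'' stored at $v'=\operatorname{root}(\T_u)$ by the previous lemma.

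Taking the maximum (by $\min(x,x')+\min(y,y')$) of the three candidates above yields the best bichromatic pair with $x$-coordinates in $x(u)$, in $\cO(1)$ time. The main thing to verify carefully is the third case: that the shattering of a crossing pair indeed occurs at a node of $\T_u$ and not elsewhere in the range tree; this is precisely where the design of the secondary information (the best pair shattered at or below each $v\in \T_u$) pays off, so that the recursion closes without having to consult any deeper structures. Combining this with the analogous lemma for secondary nodes, the augmentation is maintainable within the $\cO(\log^2 m)$ amortized bound of the Willard--Lueker dynamic 2D range tree, completing the proof of~\cref{lem:ptstructure}.
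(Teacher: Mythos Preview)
Your proposal is correct and follows essentially the same argument as the paper: case-split on whether both $x$-coordinates lie in $x(u_\ell)$, both in $x(u_r)$, or one in each, and in the crossing case appeal to the shattering proposition to conclude that the pair is shattered by some node of $\T_u$ and is therefore captured by the extra information at the root of $\T_u$. Your explicit justification of why the shattering node must live in $\T_u$ is a bit more detailed than the paper's, but the structure of the argument is identical.
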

\begin{proof}
We seek the best bichromatic pair with $x$ coordinates from $x(u)$. If the $x$ coordinates
are in fact from $x(u_{\ell})$ or $x(u_{r})$, we obtain the pair from the children of $u$.
Otherwise, the pair must be shattered by some $v\in \T_{u}$ that is a descendant
of the root of $\T_{u}$, so we obtain the pair from the root of $\T_{u}$.
\end{proof}

\section{Lower Bounds}
\label{sec:lowerbound}

We show lower bounds in the cell-probe model of computation, perhaps the strongest model for proving lower bounds, which only accounts for memory accesses and neglects the actual computation time.
We make the standard assumption that the size of each machine word is $\Theta(\log n)$, where $n$ is an upper bound on the size of the input at all times.
Note that this assumption is necessary if we want to be able to specify memory locations storing parts of the input in $\cO(1)$ time.

\subsection{A Lower Bound for Partially Dynamic LCS under Edit Operations}\label{sec:lb_ps}

\newcommand{\report}{\textsf{report}}
\newcommand{\parity}{\textsf{parity}}

In this subsection, we will show a lower bound for the partially dynamic LCS problem, when edit operations are allowed, by a reduction from a following variant of the dynamic partial sums problem.
Fredman and Saks~\cite{10.1145/73007.73040} showed that any data structure for the \textsc{Dynamic Partial Sums Modulo 2} (DPSM2) problem, specified below, cannot handle both updates and queries in $o(\log n/ \log\log n)$ time under our assumption on the word-size. 

\Dynproblem{Dynamic Partial Sums Modulo 2}{An integer array $A$ of size $n$.}{$A[k]:=A[k]+\delta$.}{Given $i$, return $\left(\sum_{j\leq i} A[j]\right) \bmod 2$.}

Later, Husfeldt and Rauhe~\cite{DBLP:journals/siamcomp/HusfeldtR03} considered different variants of the above problem, including the following \emph{promise variant}.

\Dynproblem{Promise DPSM2}{An integer array $A$ of size $n$, with all entries in $\{-1,0,1\}$.}{$A[k]:=x$ for $x\in \{-1,0,1\}$.}{$\parity_A(i,s)$: Given $i$ and $s$, with the promise that $|\sum_{j\leq i} A[j] - s|\leq 1$, return $\left(\sum_{j\leq i} A[j]\right) \bmod 2$.}

In particular, Husfeldt and Rauhe showed that any data structure for the \textsc{Promise DPSM2} problem cannot process both updates and queries in $o(\log n/ \log\log n)$ time.

Further, DPSM2 reduces to the following problem~\cite{10.1145/73007.73040}.

\Dynproblem{List Representation}{A list $L$ of size at most $n$ over universe $[n]$.}{``Delete the $k$-th element of $L$'' or ``insert $a \in [n]$ after the $k$-th element of $L$''.}{Report the $i$-th element of $L$.}

We will slightly abuse notation, and refer to the $i$-th element of $L$ as $L[i]$.
In particular, we consider a verification variant of this problem, where the allowed queries are of the following type: check whether the $i$-th element of $L$ is $x$; we denote such a query by $\report_L(i,x)$. 
In our proof below, we provide a reduction from \textsc{Promise DPSM2} to the verification variant of list representation; the reduction for the standard variants was omitted in~\cite{10.1145/73007.73040}.

Further note, that by interpreting list $L$ as a string, one can see that any data structure, for any problem on dynamic strings under edit operations, that enables one to retrieve any letter of the string, specified by its current position, with a constant number of queries and updates, cannot process both queries and updates in $o(\log n/ \log\log n)$ worst-case time deterministically.
We show that this is indeed the case for partially dynamic LCS under edit operations.

\begin{theorem}\label{lb:indels}
Any data structure for maintaining an LCS of  a static string $T$ and a dynamic string $S$ that undergoes edit operations,
each of length at most $n$, requires $\Omega(\log n/ \log \log n)$ time per update operation.
\end{theorem}
\begin{proof}
Consider an instance of the \textsc{Promise DPSM2} problem over an array $A$ of size $n$.

We define an instance of the verification variant of the list representation problem, of size $m=10n^2$, with list $L$ initially consisting in $n$ copies of $1^{5n} 2 3 \cdots (5n)(5n+1)$.
The only update operations that we will perform on $L$ are deletions and insertions of $1$s.
More specifically, we will maintain the following invariant: the $k$-th $2$ in $L$ is preceded by
\begin{enumerate}
\item a (maximal) block of $5n-1$ $1$s if $A[k] = -1$;
\item a (maximal) block of $5n$ $1$s if $A[k] = 0$;
\item a (maximal) block of $5n+1$ $1$s if $A[k] = 1$.
\end{enumerate}
This is ensured for the initial array $A$ by changing the initial list $L$.
As the number of $1$s in each block can differ by at most $2$, depending on value of $A_k$, thanks to having at least $5n-1$ $1$s in each block, positions of the boundaries of these blocks can be easily approximated.

To maintain the invariant, upon an update of $A[k]$, we make at most two insertions of $1$s or at most two deletions of $1$s in the block of $1$s that precedes the $k$-th $2$ in $L$.
By the invariant, at any point our list can be obtained by our initial list by at most $n$ insertions and deletions of $1$s.
Consequently
$L[10kn+3n-1]L[10kn+3n]$ is always part of the sought block of $1$s and hence we can perform our insertions or deletions there.

We now show that a $\parity_A(i,s)$ query reduces to a constant number of $\report_L(\cdot,\cdot)$ queries.
Similarly to before, the element $L[10in+8n-1]$ lies always strictly between the $i$-th and the $(i+1)$-st blocks of $1$s.
Note that when $L$ was initialized, we had $L[10in+8n-1]=3n$. Every deletion of a $1$ to its left increments $L[10in+8n-1]$, while every insertion of a $1$ to its left decrements $L[10in+8n-1]$, and updates to its right do not change its value.
The promise $s$ is an additive 1-approximation of the following quantity: the total length of the first $i$ blocks of~$1$s minus $5in$.
It then suffices to refine this approximation, as this directly yields the parity of the sought prefix sums.
We can achieve this by performing queries $\report_L(10in+8n-1,3n+x)$ for all $x \in \{s-1,s,s+1\}$.

We have reduced \textsc{Promise DPSM2} problem over an array $A$ of size $n$ to
the verification variant of the list representation problem of size $m$, with each query and updates in the original instance
being translated to a constant number of queries and updates in the new instance.
By the lower bound of Husfeld and Rauhe~\cite{DBLP:journals/siamcomp/HusfeldtR03}, any data structure for the former problem
cannot process both queries and updates in $o(\log n/\log\log n)$ time.
Consequently, any data structure for the latter problem cannot process both queries and updates
in $o(\log n/\log\log n)=o(\log m/\log\log m)$ time.

We now proceed to reduce the verification variant of the list representation problem of size $m$
to the partially dynamic LCS problem under edit operations for strings of total length at most $6m$.
When combined with the above reduction, we will be able to conclude that any data structure for the latter
cannot process queries and updates in $o(\log m/\log\log m)$ time.

Our static string is \[T=11\#22\# \cdots m \#.\]
We will maintain dynamic string \[S=L[1]\$\$L[2]\$\$ \cdots L[|L|] \$\$.\]
Each update in $L$ naturally translates to at most three updates in $S$.

We process a query $\report_L(i,x)$ as follows. 
We issue update $S[3i+2]:=x$ to our LCS data structure.
Then, it is readily verified that the LCS of $S$ and $T$ is of length $2$ if and only if $L[i]=x$.
We then revert our changes, setting $S[3i+2]:=\$$.
\end{proof}

\subsection{Lower Bounds for Dynamic LCS under Substitution Operations}

We now proceed to show that the studied problems are hard even when the only allowed operations are substitutions.
The lower bounds will be obtained by a series of reductions. We need a few auxiliary definitions.

We call $(b,n)$-normal the tree obtained from a complete $b$-ary tree $\T$ of depth $d=\log_b n$ by attaching at most
$b^{d+1-d'}$ extra leaves to every node at depth $d'$, for $d'=1,2,\ldots,d$. (The root is at depth $0$.)
The size of such a tree is $\sum_{d'=1}^{d}b^{d'}\cdot b^{d+1-d'} = \cO(n\log n)$.
To avoid clutter, we will assume that $\log_{b}n$ is an integer.

\DSproblem{\textsc{Restricted HIA}}{Two $(b,n)$-normal trees with some pairs of extra leaves attached to
nodes at depths $d'$ and $d+1-d'$,
for some $d'=1,2,\ldots,d$, having the same unique label.}{Given two leaves at depth $d$, is the total weight of their HIA $d+1$?}

\noindent
Observe that in an instance of \textsc{Restricted HIA} built for $\T_{1}$ and $\T_{2}$ there is no need to attach
more than one pair of extra leaves to the same pair of nodes. Additionally, a query concerning a leaf $u\in \T_{1}$
and $v\in \T_{2}$ can be reformulated as seeking an ancestor $u'$ of $u$ at depth $d'$ and an ancestor $v'$ of $v$
at depth $d+1-d'$ such that there is a pair of extra leaves with the same unique label attached to both $u'$ and $v'$,
for some $d'=1,2,\ldots,d$.

The butterfly graph of degree $b$ and depth $d$ has vertex set $V=\{(i,\vec{r}):i=0,\ldots,d,\vec{r} \in [b]^d \}$.
A vertex $(i,\vec{r})$, $i=0,1,\ldots,d-1$, has outgoing edges to all vertices $(i+1,\vec{r'})$ such that $\vec{r}$ and~$\vec{r'}$ differ
only on the $(i+1)$-th coordinate -- there are $b$ such vertices.
We call vertices $(0,\vec{r})$ sources and vertices $(d,\vec{r})$ sinks. We actually manipulate
the vertex labels as numbers, but it is conceptually easier to think of them as vectors. Note that each source can reach each
sink by a unique path corresponding to transforming the source's vector into the sink's vector coordinate by coordinate.

\DSproblem{\textsc{Butterfly Reachability}}{A subgraph $H$ of a butterfly graph $G$.}{Is sink $v$ reachable from from source $u$?}

\begin{theorem}[\cite{DBLP:journals/siamcomp/Patrascu11}]
Any structure of size $\cOtilde(n)$ for \textsc{Butterfly Reachability}, where $n$ is the size of the butterfly graph, requires
query time $\Omega(\frac{\log n}{\log\log n})$.
\label{thm:butterfly}
\end{theorem}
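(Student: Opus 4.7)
The strategy is the cell-probe reduction to lopsided set disjointness (LSD). In the Miltersen--Nisan--Safra--Wigderson framework, a data structure of size $s=\tilde{\cO}(n)$ with word size $w=\Theta(\log n)$ answering queries in $t$ probes gives an asymmetric communication protocol in which the querier Alice sends $\cO(t\log s)=\cO(t\log n)$ bits (cell addresses) and the data holder Bob sends $\cO(tw)=\cO(t\log n)$ bits (cell contents). So it suffices to produce a protocol lower bound of $\Omega((\log^{2}n)/\log\log n)$ on Alice's communication in a regime where Bob is restricted to $\tilde{\cO}(n)$ bits. The intermediate hard problem is LSD: Alice holds $A\subseteq[N]$ with $|A|=k$, Bob holds $B\subseteq[N]$, and they must decide whether $A\cap B=\emptyset$. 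The P{\v a}tra{\c s}cu--Thorup theorem (sharpening MNSW via richness/direct-sum) says that any protocol in which Bob sends $o(k\cdot N^{1-\delta})$ bits forces Alice to send $\Omega(k\log(N/k))$ bits, even with public coins and constant error.

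Next I would embed LSD into \textsc{Butterfly Reachability}. Take a butterfly of degree $b$ and depth $d$ on $\Theta(b^{d}\cdot d)$ vertices with edge set of size $N=\Theta(b^{d+1}\cdot d)$. Identify $[N]$ with these edges; let Bob's set $B$ consist of the edges \emph{present} in $H$ (negating LSD is harmless). Alice's query is a source--sink pair $(u,v)$: in the full butterfly there is a unique $u$-to-$v$ path of exactly $d$ edges (flipping one coordinate per layer), and she lets $A$ be the $d$ edges along it. Then $v$ is reachable from $u$ in $H$ iff $A\subseteq B$, i.e., iff $A$ and $\overline{B}$ are disjoint, an LSD instance with $k=d$. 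Hence a \textsc{Butterfly Reachability} structure solves LSD within the stated bit budgets.

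For parameters I would pick $b=\log n$, giving $d=\log n/\log\log n$, $k=d$, $N=\tilde\Theta(n)$ and $N/k=\tilde\Theta(n)$. The LSD bound then demands $\Omega(k\log(N/k))=\Omega((\log^{2}n)/\log\log n)$ bits from Alice, provided Bob stays within $\tilde{\cO}(n)\subseteq o(k\cdot N^{1-\delta})$ bits, which a size-$\tilde{\cO}(n)$ structure trivially satisfies. Combining with the $\cO(t\log n)$ bits available to Alice in the cell-probe simulation yields $t=\Omega(\log n/\log\log n)$.

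The main obstacle is matching the LSD asymmetry regime to the butterfly parameters: one must verify that with $b=\log n$ the resulting universe $N$ is polynomial in $n$ and the product $k\log(N/k)$ really is $\Theta((\log^{2}n)/\log\log n)$, which requires a careful choice of $\delta$ in the LSD theorem so that $\tilde{\cO}(n)$ falls below $k\cdot N^{1-\delta}$. The extension to randomized data structures is then essentially free, since the LSD bound already holds against public-coin randomized protocols with constant error; for a static query problem like \textsc{Butterfly Reachability} there is nothing to amortize across queries, so the worst-case bit bound translates directly into a worst-case cell-probe bound.
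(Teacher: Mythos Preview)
There is a genuine gap in your reduction. With a \emph{single} reachability query, Alice's entire input is the source--sink pair $(u,v)$, which carries only $2\log(b^{d})=2\log n$ bits. Hence there is a trivial protocol in which Alice sends those $2\log n$ bits and Bob replies with one bit; no lower bound stronger than $\cO(\log n)$ on Alice's communication can possibly hold for the single-query problem, yet your argument needs $\Omega(\log^{2}n/\log\log n)$, which is asymptotically larger. Phrased in LSD terms: when you let $A$ be the $d$ edges on the unique $u$-to-$v$ path, you are not facing general LSD. Only $b^{2d}=n^{2}$ of the $\binom{N}{d}$ possible $d$-subsets arise as source--sink paths, and the LSD lower bound of \cref{thm:lsdlb} does not transfer to this tiny restricted family (indeed it cannot, by the trivial protocol above).

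P{\v a}tra{\c s}cu's actual argument---the one the paper cites here and whose randomized extension is spelled out in the proof of \cref{thm:butterflyrand}---reduces a structured LSD instance with $|A|=n$ (not $|A|=d$) to $n/d$ reachability queries asked \emph{in parallel}: each non-sink vertex of the butterfly contributes one element of $A$, and the elements along one source-to-sink walk constitute one query. The decisive gain then comes from the batched cell-probe simulation: for $k=n/d$ simultaneous queries Alice sends only $t\cdot\log\binom{s}{k}$ bits per round rather than $k\cdot t\log s$, and with $s=\cOtilde(n)$ this is $\cO\bigl(n\cdot(t/d)\log\log n\bigr)$. Comparing against the LSD bound $\Omega(|A|\log b)=\Omega(n\log\log n)$ forces $t=\Omega(d)=\Omega(\log n/\log\log n)$. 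Without batching over $\Theta(n/d)$ queries---and the accompanying $\log\binom{s}{k}$ savings---the argument cannot get off the ground.
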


The reduction in the proof of the following lemma closely resembles the reduction from \textsc{Butterfly Reachability} to
range stabbing in 2D from~\cite{DBLP:journals/siamcomp/Patrascu11}, but we provide the details for completeness.

\begin{lemma}
If there is a structure of size $\cOtilde(n)$ for \textsc{Restricted HIA} that answers queries in $t$ time, then there is
a structure of size $\cOtilde(n)$ for \textsc{Butterfly Reachability} that answers queries in time $\cO(t)$.
\label{lem:restricted}
\end{lemma}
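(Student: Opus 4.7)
The plan is to reduce Butterfly Reachability to Restricted HIA via the complement $\bar{H} := G \setminus H$: I will construct an instance of Restricted HIA so that the oracle returns YES on the leaves representing source $u$ and sink $v$ iff some edge on the unique $u$-to-$v$ path is missing from $H$. Reachability is then obtained by negating the oracle's answer, so a single Restricted HIA query of cost $t$ suffices.

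For the construction, let $d=\log_b n$ and take $\T_1$ and $\T_2$ to be $(b,n)$-normal, based on complete $b$-ary trees whose depth-$d$ leaves are identified with sources and sinks, respectively. I will encode vectors so that the ancestor at depth $d'$ of the leaf representing source $(0,\vec{r})$ in $\T_1$ corresponds to the suffix $\vec{r}[d-d'+1\dd d]$, while the ancestor at depth $d''$ of the leaf representing sink $(d,\vec{s})$ in $\T_2$ corresponds to the prefix $\vec{s}[1\dd d'']$. For each edge of $\bar{H}$ going from level $k-1$ to level $k$, which is uniquely determined by a source-suffix $\sigma \in [b]^{d-k+1}$ together with a sink-prefix $\pi \in [b]^k$, I attach a fresh pair of extra leaves sharing a unique label: one to the node of $\T_1$ at depth $d-k+1$ that represents $\sigma$, the other to the node of $\T_2$ at depth $k$ that represents $\pi$. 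These depths sum to $d+1$ as required, and each depth-$(d-k+1)$ node of $\T_1$ receives at most $b^{k}=b^{d+1-(d-k+1)}$ extras (one per possible sink-prefix), so the $(b,n)$-normal bound on extra leaves is respected; the situation for $\T_2$ is symmetric. Since $|E(G)|=\cOtilde(n)$, the whole instance has size $\cOtilde(n)$.

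Answering a reachability query for $u,v$ is immediate: I locate in $\cO(1)$ time the leaves $u_L\in\T_1$ and $v_L\in\T_2$ corresponding to $u$ and $v$ and invoke the Restricted HIA oracle. A pair of common-labeled ancestors at depths $d'$ and $d+1-d'$ corresponds to an edge of $\bar{H}$ at the transition from level $k-1$ to level $k$, for $k=d+1-d'$, whose source-suffix is a suffix of $\vec{r}$ (since $u_L$ descends from the labeled ancestor in $\T_1$) and whose sink-prefix is a prefix of $\vec{s}$ (by the symmetric argument in $\T_2$). This is exactly the edge on the unique $u$-to-$v$ path at that level, so the oracle returns YES iff some edge of the path lies in $\bar{H}$, iff $v$ is unreachable from $u$ in $H$; I return the negation.

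The crux to verify is the bijection between (level, source-suffix, sink-prefix) triples and edges of the butterfly, dovetailing with the Restricted HIA requirement that labeled extra leaves sit at depths summing to $d+1$. This follows from the standard observation that the intermediate vertex at level $i$ on the path from $(0,\vec{r})$ to $(d,\vec{s})$ is $(i,(\vec{s}[1\dd i],\vec{r}[i+1\dd d]))$, which cleanly splits the sink and source information between the two trees at complementary depths; the matching count $b^{k}$ of admissible edges at a depth-$(d-k+1)$ node is exactly what makes the construction fit within the $(b,n)$-normal envelope and hence within $\cOtilde(n)$ space.
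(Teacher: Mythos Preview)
Your proof is correct and follows essentially the same approach as the paper: both encode the complement $\bar H$ by attaching, for each missing edge at level transition $i\to i{+}1$, a labeled leaf pair at depths $d-i$ and $i+1$ in the two $b$-ary trees (your $k$ is the paper's $i+1$), and both answer a reachability query by negating a single Restricted HIA query. Your verification of the $(b,n)$-normal leaf bound and the ``suffix in $\T_1$, prefix in $\T_2$'' bijection with path edges matches the paper's argument.
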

\begin{proof}
We will show how to reduce answering a source-to-sink reachability query on a subgraph of a butterfly graph of degree $b$, depth $d$ and size $n$
to answering a HIA query on two $(b,n)$-normal trees $\T_{1}$ and $\T_{2}$.

Let $\T_1$ (resp.~$\T_2$) be the complete $b$-ary tree over vertices of level $0$ (resp.~$d$), sorted
lexicographically (resp.~sorted by the lexicographic order of their reverses). In $\T_1$, the ancestor of a leaf $(0,\vec{r})$
at depth $\delta$ is an ancestor of any leaf $(0,(*,\ldots,*,r_{d-\delta+1},\ldots ,r_d))$, where $*$ indicates an arbitrary
value. In $\T_2$, the ancestor of a leaf $(d,\vec{r})$ at depth $\delta$ is an ancestor of any leaf $(d,(r_1,\ldots,r_{\delta},*,\ldots ,*))$.

In the butterfly graph, the edge from vertex $(i,\vec{r})$ to vertex $(i+1,\vec{p})$ in $G$ is on the unique path from
each the source $(0,(*,\ldots,*,r_{i+1},\ldots ,r_d))$ to each sink $(d,(r_1,\ldots,r_{i},p_{i+1},*,\ldots ,*))$.
For each missing edge $((i,\vec{r}),(i+1,\vec{p}))$, we attach a pair of leaves with the same unique identifier:
\begin{enumerate}
	\item in $\T_1$ at the ancestor of $(0,\vec{r})$ at depth $d-i$,
	\item in $\T_2$ at the ancestor of $(d,\vec{p})$ at depth $i+1$.
\end{enumerate}
Note that the sum of the depths of these nodes is $d+1$. Consequently, sink $v$ is not reachable from source $u$
if and only if the total weight of the HIA of $u\in \T_{1}$ and $v\in \T_{2}$ is $d+1$.
Both constructed trees are indeed $(b,n)$-normal, as the number of leaves attached to a node at depth $d'$
is at most $b^{d+1-d'}$, for every $d'=1,2,\ldots,d$ (and there are no leaves attached to the roots).
\end{proof}

\subsubsection{Partially Dynamic LCS}

\begin{lemma}
\label{lem:fragments}
Given an instance of \textsc{Restricted HIA} on $(b,n)$-normal trees, we can construct
$\cO(\sqrt{n})$ instances of \textsc{Restricted HIA} on $(b,\sqrt{n})$-normal trees such that
any query to the original instance can be reduced in constant time to two queries to the
smaller instances using a preprocessed table of size $\cO(n)$.
\end{lemma}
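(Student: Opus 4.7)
My plan is to split both $(b,n)$-normal trees horizontally at depth $d/2$, where $d=\log_b n$, and build one family of $\sqrt n$ small instances for each of the two ways a witness can straddle the cut. For this, I decompose $\T_1$ into the complete $b$-ary subtree $\T_1^\top$ induced by levels $0,\ldots,d/2$ together with the $\sqrt n$ complete $b$-ary subtrees $\T_1^1,\ldots,\T_1^{\sqrt n}$ rooted at the depth-$(d/2)$ nodes (each carrying all extra leaves attached at those levels); $\T_2$ is decomposed analogously. Any witness pair of ancestors of a query has depths summing to $d+1$, so its $\T_1$-ancestor is either at depth $d'\le d/2$ (Case~A) or at depth $d'>d/2$ (Case~B).

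For Case~A, for each $i=1,\ldots,\sqrt n$ I would build a smaller instance $I_i^A$ whose first tree is $\T_1^\top$ equipped only with the extra leaves whose paired leaves reside inside $\T_2^i$, and whose second tree is $\T_2^i$ with all its extra leaves (depths re-measured from its root). An extra-leaf pair originally attached at depths $(\delta, d+1-\delta)$ then becomes a pair at depths $(\delta, d/2+1-\delta)$ in $I_i^A$, whose depths sum to $d/2+1$, as required by \textsc{Restricted HIA} on $(b,\sqrt n)$-normal trees. Case~B is built symmetrically, producing another $\sqrt n$ instances, for a total of $\cO(\sqrt n)$ instances. To dispatch a query I would precompute a table of size $\cO(n)$ that, for each depth-$d$ leaf of $\T_1$ and $\T_2$, stores the index of its depth-$(d/2)$ ancestor and its local coordinate inside the corresponding subtree. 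Given a query $(u,v)$, the Case~A query is then $(\tilde u, v')$ in $I_i^A$, where $i$ is $v$'s depth-$(d/2)$ ancestor, $\tilde u$ is $u$'s depth-$(d/2)$ ancestor (now a leaf of $\T_1^\top$), and $v'$ is $v$'s image inside $\T_2^i$; the Case~B query is symmetric. The original answer is ``yes'' iff at least one of the two small queries is ``yes''.

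The step that I expect to require the most care is verifying that $\T_1^\top$, restricted to extra leaves whose pair lies in $\T_2^i$, remains $(b,\sqrt n)$-normal, i.e., that a node at depth $\delta$ carries at most $b^{d/2+1-\delta}$ extra leaves after restriction. This is where I would invoke the observation stated immediately after the definition of \textsc{Restricted HIA}, that at most one pair of matching labels is attached to any ordered pair of nodes: the restricted extra leaves at $u'$ of depth $\delta$ in $\T_1^\top$ are then at most as many as the nodes at depth $d+1-\delta$ in $\T_2$ that descend from the depth-$(d/2)$ root of $\T_2^i$, and this count is exactly $b^{(d+1-\delta)-d/2}=b^{d/2+1-\delta}$. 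The only remaining technicality is that $d$ should be even, which I would handle by the same rounding the excerpt already adopts when assuming $\log_b n$ is an integer; symmetric arguments handle $\T_2^\top$ in the Case~B instances.
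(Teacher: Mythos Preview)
Your proposal is correct and follows essentially the same approach as the paper: split both trees at depth $d/2$, pair the top fragment of one tree with each of the $\sqrt{n}$ bottom fragments of the other (and symmetrically), and verify $(b,\sqrt n)$-normality via the ``at most one label per pair of nodes'' observation, reducing a query to two smaller queries using the precomputed depth-$(d/2)$ ancestor table. The only cosmetic difference is that the paper only keeps in each small instance the extra leaves whose partner lies in the paired fragment, whereas you keep \emph{all} extra leaves in the bottom subtree; this is harmless since the unmatched ones carry labels absent from the other tree and the normality bound on the bottom subtree is inherited directly from the original tree.
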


\begin{proof}
Recall that $\T_{1}$ and $\T_{2}$ in the \textsc{Restricted HIA} problem are complete $b$-ary trees of depth $d=\log_{b}n$ with some extra leaves.
We want to partition both complete $b$-ary trees into smaller edge-disjoint trees of depth $d/2$ called fragments.
This is done by selecting all nodes at depth $d/2$ in both trees. For every selected node $w$, we create
two copies of it, $w'$ and $w''$, such that $w'$ inherits the ingoing edge from $w$'s parent and becomes a leaf
of a fragment, while $w''$ inherits the outgoing edges to children of $w$ and becomes the root of some fragment.
We will say that the new leaf $w'$ corresponds to the original $w$, and every node that was not split corresponds to itself.
Thus, we partition each tree into the top fragment and $\sqrt{n}$ bottom fragments.
We create an instance of \textsc{Restricted HIA} on $(b,\sqrt{n})$-normal trees for the following pairs of fragments:
the top fragment of $\T_{1}$ together with each bottom fragment of $\T_{2}$, and similarly
each bottom fragment of $\T_{1}$ together with the top fragment $\T_{2}$.
We need to specify how to attach the extra leaves in each instance.

Consider the top fragment $A$ of $\T_{1}$ and a bottom fragment $B$ of $\T_{2}$. 
Take a pair of leaves $u$ and $v$ with the same unique label attached to the nodes corresponding
to $u'\in A$ and $v'\in B$ in the original instance. We attach a pair of such leaves to $u'$ and $v'$ in $(A,B)$. We claim
that $(A,B)$ is a valid instance of \textsc{Restricted HIA} over $(b,\sqrt{n})$-normal trees.
This requires checking that the number of leaves attached to a node at depth $d'$ in $A$ or $B$ is at most $b^{d/2+1-d'}$.
In the original instance, a pair of leaves is attached at depths $d'$ and $d+1-d'$, for some $d'=1,2,\ldots,d/2$.
In $(A,B)$ the corresponding pair of leaves is attached at depths $d'$ and $d/2+1-d'$. 
For any node $u$ at depth $d'$ in $A$, we have at most $b^{d/2+1-d'}$ nodes at depth $d/2+1-d'$ in $B$,
and hence attach that many leaves to $u$. Similarly, for any node $v$ at depth $d/2+1-d'$ in $B$,
we have at most~$b^{d'}$ nodes at depth $d'$ in $A$, and hence attach that many leaves to $v$.
Consequently, $(A,B)$ is a valid instance of \textsc{Restricted HIA} over $(b,\sqrt{n})$-normal trees.
A symmetric argument applies for the case when $A$ is a bottom fragment of $\T_{1}$ while $B$ is the top
fragment of $\T_{2}$.

We have obtained $\cO(\sqrt{n})$ instances of \textsc{Restricted HIA} over $(b,\sqrt{n})$-normal trees.
Consider a query that seeks an ancestor $u'$ at depth $d'$ of a leaf $u\in \T_{1}$ and an ancestor $v'$ at
depth $d+1-d'$ of a leaf $v\in \T_{2}$ such that there is a pair of leaves with the same unique label attached
to both $u'$ and $v'$, for some $d'=1,2,\ldots,d$. For $d' \leq d/2$ this can be retrieved by querying
the instance $(A,B)$, where $A$ is the top fragment of $\T_{1}$ and $B$ is a bottom fragment of $\T_{2}$
containing the leaf corresponding to $v$. To translate the query, we need to retrieve the first ancestor
of $u$ for which the corresponding node belongs to the top fragment, but this can be simply preprocessed
and stored for every leaf. Similarly, for $d' > d/2$ we should query the instance $(A,B)$, where $A$ is
a bottom fragment of $\T_{1}$ and $B$ is the top fragment of $\T_{2}$. Thus, a query to the original
instance reduces to two queries to the smaller instances, and the translation takes constant time
after $\cO(n)$-space preprocessing.
\end{proof}

\begin{lemma}
If there exists a structure of size $\cOtilde(n)$ for maintaining the LCS of a dynamic string $S$ and a static string $T$,
each of length $\cO(n\log^{2}n)$, requiring $t$ time per update, then there exists a structure of size $\cOtilde(n)$
for \textsc{Restricted HIA} on $(b,n)$-normal trees that answers queries in time $\cO(t)$.
\label{lem:lcs2}
\end{lemma}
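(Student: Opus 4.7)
The plan is to encode the Restricted HIA instance as the static string $T$ and translate each query $(u,v)$ into a short update of the dynamic string $S$ so that the resulting LCS reveals the HIA answer. For each node $w$ of $\T_1$ let $P_1(w)\in[b]^{\operatorname{depth}(w)}$ record the sequence of edge indices on the root-to-$w$ path, and define $P_2$ analogously for $\T_2$; thus $w'$ is the depth-$d'$ ancestor of $w$ iff $P_1(w')$ is the length-$d'$ prefix of $P_1(w)$. Build $T$ as the concatenation, over all labeled pairs $(u',v')$ (whose extra leaves are attached at complementary depths $d'$ and $d+1-d'$), of the length-$(d+1)$ fragment $F_{u',v'}:=P_1(u')^R\cdot P_2(v')$, with a fresh sentinel $\#\notin[b]$ inserted between consecutive $F$s. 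Since a $(b,n)$-normal tree has $\cO(bn)$ labeled pairs and $d+1=\cO(\log_b n)$, for $b$ constant this yields $|T|=\cO(n\log^2 n)$; pad $S$ with a sentinel absent from $T$ so both strings reach the claimed length.

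For a query on leaves $u,v$ at depth $d$, set the meaningful prefix of $S$ to $P_1(u)^R\cdot P_2(v)$ of length $2d$. The correctness claim is that $\operatorname{LCS}(S,T)=d+1$ iff the HIA answer is ``yes''. The separators $\#$ appear in $T$ but not in $S$, so every common substring lies inside a single fragment $F_{u',v'}$, giving $\operatorname{LCS}(S,T)\le d+1$. A length-$(d+1)$ common substring is therefore a full copy of some $F_{u',v'}$ inside $S$; because $|P_1(u')^R|=d'$ while $|P_2(v')|=d+1-d'$, it is forced to align at position $p=d-d'+1$. This alignment occurs exactly when the length-$d'$ suffix of $P_1(u)^R$ equals $P_1(u')^R$ (so $u'$ is the depth-$d'$ ancestor of $u$) and simultaneously the length-$(d+1-d')$ prefix of $P_2(v)$ equals $P_2(v')$ (so $v'$ is the depth-$(d+1-d')$ ancestor of $v$), which is exactly the HIA ``yes'' condition.

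To obtain amortized $\cO(t)$ per HIA query, rather than rebuilding $S$ from scratch between consecutive queries I would process the query sequence in a Gray-code-like order on the combined index of $(u,v)$, so consecutive queries differ in only a single position of $S$. After an $\cO(\log n)$ warm-up, each subsequent query costs one character substitution in $S$ plus a constant-time read of the current LCS value, matching the lemma's claim in the amortized sense used by the cell-probe argument downstream. The main technical obstacle is the positional-alignment argument for correctness: one must rule out any length-$(d+1)$ common substring that does not arise from a full $F_{u',v'}$ aligned at the forced offset $p=d-d'+1$. This hinges both on the $\#$-separators (forbidding cross-fragment matches) and on the combinatorial rigidity that $|F_{u',v'}|=d+1$ with its split point determined by $d'$, so any length-$(d+1)$ occurrence is automatically a certificate of the HIA answer.
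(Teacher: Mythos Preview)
Your proposal has two genuine gaps that the paper's proof is specifically engineered to avoid.

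\textbf{Correctness.} You assert that a length-$(d{+}1)$ occurrence of $F_{u',v'}=P_1(u')^R\cdot P_2(v')$ inside $S=P_1(u)^R\cdot P_2(v)$ is ``forced to align at position $p=d-d'+1$'', but nothing enforces this: both halves of $F_{u',v'}$ and both halves of $S$ live over the same alphabet $[b]$, with no marker at the split. Concretely, take $d=3$, $P_1(u)=123$, $P_2(v)=213$ (so $S=321213$) and a labeled pair $(u',v')$ at depths $(1,3)$ with $P_1(u')=2$, $P_2(v')=121$. Then $F_{u',v'}=2121=S[2\dd 5]$, yet $u'$ is not the depth-$1$ ancestor of $u$ and $v'$ is not an ancestor of $v$; your reduction reports a false positive. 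The paper prevents this by inserting a separator in the \emph{middle} of every gadget: $T$-gadgets have the form $\pathtree(u')\,\$_i\,\pathtree^r(v')$ and the $S$-gadget has the form $s\,\#\,t$; a length-$(d{+}2)$ match must align the unique $\$_i$ in each, which pins the split point and makes the ancestor interpretation forced.

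\textbf{Query cost.} Writing $P_1(u)^R\cdot P_2(v)$ into $S$ costs $\Theta(d)$ substitutions per query, and your Gray-code idea cannot rescue this: HIA queries arrive online in adversarial order, so you are not free to reorder them (and even offline, $k$ arbitrary strings of length $2d$ over $[b]$ need not admit an ordering with $\cO(k)$ total Hamming changes). The paper instead precomputes $S$ to contain a gadget $s\,\#\,t$ for \emph{every} pair of length-$d$ strings $s,t$, so that a query is exactly one substitution $\#\mapsto\$_i$ followed by a revert. Keeping $|S|=\cOtilde(n)$ under this scheme requires the number of leaf pairs per instance to be $\cO(n)$ rather than $n^2$; this is precisely why the paper first applies \cref{lem:fragments} to pass to $\cO(\sqrt n)$ sub-instances on $(b,\sqrt n)$-normal trees, with the instance-specific letter $\$_i$ selecting the correct sub-instance at query time.
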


\begin{proof}
We first apply Lemma~\ref{lem:fragments} to obtain $\cO(\sqrt{n})$ instances of \textsc{Restricted HIA}
on $(b,\sqrt{n})$-normal trees. In the $i$-th instance we are working with two $(b,\sqrt{n})$-normal trees
$\T_{1}^{i}$ and $\T_{2}^{i}$ obtained from complete $b$-ary trees of depth $d=\log_{b}\sqrt{n}$.
The edges outgoing from a node in a complete $b$-ary tree
can be naturally identified with numbers $1,2,\ldots,b$. This allows us to uniquely represent every node $w$ of such
a tree by the concatenation of the numbers corresponding to the edges on the path from $w$ to the root, denoted by
$\pathtree(w)$. Similarly, $\pathtree^{r}(w)$ is the concatenation of the numbers corresponding to the edges on the path from
the root to $w$.

For every $i$, for every $d'=1,2,\ldots,d$ we consider every pair of leaves $u\in \T_{1}^{i}$ and $v\in \T_{2}^{i}$ with the same label
attached to $u'\in \T_{1}^{i}$ and $v'\in \T_{2}^{i}$
at depths $d'$ and $d+1-d'$, respectively, and construct the following gadget:
\[ \pathtree(u') \mathtt{\$_{i}}  \pathtree^{r}(v'). \]
We concatenate all such gadgets, separated by $\texttt{.}$ characters, to obtain $T$. The length of $T$ is $\cO(n\log^{2}n)$.

For each pair of strings $s$ and $t$ of length $d$ over $\{1,2,\ldots,b\}$ we construct the following gadget:
\[ s \texttt{\#} t \]
We concatenate all such gadgets, separated by $\texttt{,}$ characters, to obtain $S$. The length of $S$ is $\cO(n\log n)$.

A query to the original instance of \textsc{Restricted HIA} can be translated in constant time into two queries to the smaller instances.
To answer a query concerning a pair of leaves $u\in \T_{1}^{i}$ and $v\in \T_{2}^{i}$, we locate the gadget corresponding
to $\pathtree(u) \# \pathtree^{r}(v)$ in $S$. This can be done in constant time by enumerating the strings
$s$ and $t$ in the natural lexicographical order and storing the rank of each leaf. Then, we temporarily
replace $\#$ by $\$_{i}$, find the length $L$ of the LCS, and then restore the gadget.
We claim that $L=d+2$ if and only if the total weight of HIA is $d+1$. Recall that the total weight of HIA is $d+1$
if and only if there exists an ancestor $u'$ of $u$ at depth $d'$ and $v'$ of $v$ at depth $d+1-d'$ with
a pair of leaves having the same unique label. Since gadgets are separated with different characters in $S$ and $T$,
any $\pathtree(w)$ is of length at most $d$, $L=d+2$ must correspond to a substring $\pathtree(u') \$_{i} \pathtree(v')$,
for some ancestor $u'$ of~$u$ at depth $d'$ and $v'$ of $v$ at depth $d+1-d'$. But such a substring occurs in $T$
if and only if a pair of leaves connected to $u'$ and $v'$ has been activated, so our answer is indeed correct.
\end{proof}

\begin{theorem}\label{thm:lbpartially}
Any structure of $\cOtilde(n)$ size for maintaining an LCS of a dynamic string $S$ and a static string $T$,
each of length at most $n$, requires $\Omega(\log n/ \log \log n)$ time per update operation.
\end{theorem}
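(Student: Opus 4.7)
My plan is to chain the three preceding tools into a single reduction: a hypothetical efficient data structure for dynamic LCS yields, via~\cref{lem:lcs2}, an efficient data structure for \textsc{Restricted HIA} on $(b,m)$-normal trees; then, via~\cref{lem:restricted}, one for \textsc{Butterfly Reachability}; and this would contradict~\cref{thm:butterfly}.

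Concretely, I would assume for contradiction a structure of size $\cOtilde(n)$ that maintains an LCS of a dynamic string $S$ and a static string $T$ of length at most $n$ in amortized time $t(n)$ per update. Given an instance of \textsc{Restricted HIA} on two $(b,m)$-normal trees,~\cref{lem:lcs2} builds strings $S,T$ of length $N=\cO(m\log^{2}m)$ such that every HIA query is simulated by $\cO(1)$ substitutions in $S$ (temporarily rewriting a $\texttt{\#}$ as $\texttt{\$}_{i}$ and then reverting). Instantiating the assumed LCS structure with parameter $N$ therefore yields an $\cOtilde(N)=\cOtilde(m)$-space structure for \textsc{Restricted HIA} with query time $\cO(t(N))$. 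Feeding this into~\cref{lem:restricted} produces an $\cOtilde(m)$-space structure for \textsc{Butterfly Reachability} on butterflies of size $m$, still with query time $\cO(t(N))$. Then~\cref{thm:butterfly} forces $t(N)=\Omega(\log m/\log\log m)$, and since $N=\cO(m\log^{2}m)$ implies $\log m=\Theta(\log N)$ and $\log\log m=\Theta(\log\log N)$, we obtain $t(N)=\Omega(\log N/\log\log N)$, as claimed.

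The routine bookkeeping aside, the step that will require the most care is transferring amortization (and, for the strengthened version advertised in the introduction, Las Vegas randomization) through the chain. Each simulated HIA query spawns $\cO(1)$ substitutions in the LCS instance, so an amortized update bound for the LCS structure entails an amortized bound on \textsc{Restricted HIA} queries, and in turn on \textsc{Butterfly Reachability} queries; because the simulated queries read their answer deterministically from the LCS output, randomized correctness is preserved as well. The one ingredient I would still need beyond what is already proved is a strengthening of~\cref{thm:butterfly} that remains valid against amortized and randomized query algorithms; the introduction signals that this is obtainable via a mild generalization of P{\v a}tra{\c s}cu's reduction to the information-theoretic lopsided set disjointness problem, and I would invoke that strengthened version here.
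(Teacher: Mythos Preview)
Your chain \cref{lem:lcs2} $\Rightarrow$ \cref{lem:restricted} $\Rightarrow$ \cref{thm:butterfly} is exactly the paper's (implicit) proof of the worst-case statement, and your parameter bookkeeping ($N=\cO(m\log^{2}m)$, hence $\log m=\Theta(\log N)$) is correct.

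Where you diverge from the paper is in the treatment of amortization. You propose to carry an amortized bound all the way down to \textsc{Butterfly Reachability} and then invoke a strengthened version of \cref{thm:butterfly} valid against amortized queries. The paper does \emph{not} do this; instead it deamortizes at the LCS level. The key observation is that the updates produced by \cref{lem:lcs2} come in revert-pairs (replace a $\texttt{\#}$ by $\texttt{\$}_{i}$, then undo). Starting from a blank structure, one first spends $\cO(n)$ updates to build $S$, and then repeatedly executes any revert-pair whose total cost is at least $3t$; a simple credit argument shows at most $\cO(n)$ such expensive pairs can exist before every remaining pair costs $\cO(t)$ worst-case. Because each pair restores the state, one can record and replay the cheap version, obtaining a genuine worst-case $\cO(t)$ bound per HIA query and hence applying the unmodified \cref{thm:butterfly}.

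Your route is not obviously wrong, but it hides a real obstacle: once the LCS structure has absorbed $\Theta(n)$ initialization updates, the amortized guarantee only bounds the total cost of the next $2k$ updates by $(n+2k)t$, not $2kt$, so a naive ``amortized query bound'' on \textsc{Restricted HIA} could be as bad as $\Theta(nt/k)$ for small $k$. You would have to address this budget issue anyway, and the paper's revert-pair deamortization is precisely the clean way to do so, without touching P{\v a}tra{\c s}cu's theorem.
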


\paragraph{Lower bound for amortized update time.}
We next show that the above theorem also holds when amortization is allowed.
To this end, let us suppose that there exists an $\cOtilde(n)$-size data structure that processes each update in amortized time upper bounded by $t=o(\log n / \log \log n)$.
We will show that this implies the existence of an $\cOtilde(n)$-size data structure with $\cO(t)$ worst-case update time, contradicting~\cref{thm:lbpartially}.

First, let us observe that the proof of~\cref{lem:lcs2} actually considers a restricted variant of the partially dynamic LCS problem, where updates come in pairs:
we temporarily replace some $\#$ by $\$_{i}$ and then revert this change.
We initialize the data structure that achieves the claimed space and amortized update time complexities for the static string $T$ and a dynamic string of length $|S|$.
Given any sequence of $m$ updates, the total time that this ``blank'' data structure requires to process them is at most $m \cdot t$.
We first perform at most $n$ updates to transform the dynamic string to $S$.
At this point, the budget towards future (expensive) updates can be no more than $n \cdot t$.
While we can, we consider a pair of updates of the desired type, i.e., replace some $\#$ by $\$_{i}$ and then revert this change, with total cost at least $3t$.
This can be done at most $n$ times, else the amortized cost over all our updates would exceed $t$.
After having reached a point where such a pair does not exist, every (pair of) update(s) we might perform costs $\cO(t)$ worst-case time.
From then on, when executing a pair of updates we store the modified memory locations together with their original content.
This allows us to restore the structure to the state before the current pair of updates, and so any possible pair of updates
will take only $\cO(t)$ worst-case time.
To bound the size of the structure, recall that we have assumed that the structure takes $\cOtilde(n)$ space.
Additionally, we perform $\cO(n)$ updates after the after the initialization in order to obtain the version of
the data structure in which all pairs of updates are cheap, which adds $\cO(n\cdot t)$ to the size of the structure,
which remains to be $\cOtilde(n)$.

\paragraph{Allowing Las Vegas randomization.}
Let us first note that~\cref{thm:butterfly} holds even when Monte Carlo randomization is allowed.
This is shown in~\cite{DBLP:journals/siamcomp/Patrascu11} through a reduction from the \textsc{Lopsided Set Disjointness} (LSD) problem, which we define next.

\genproblem{Lopsided Set Disjointness}{Alice and Bob receive sets $A$ and $B$, respectively, over a universe $U$ such that $|U|/|A|=b$.}{Is $A\cap B=\emptyset$?}

\begin{theorem}[\cite{DBLP:journals/siamcomp/Patrascu11}]\label{thm:lsdlb}
Fix $\delta>0$. If a protocol for \textsc{Lopsided Set Disjointness} has error less than $\frac{1}{9999}$, then either Alice sends at least $\delta |A| \log b$ bits or Bob sends at least $|A|\cdot b^{1-\cO(\delta)}$ bits.
\end{theorem}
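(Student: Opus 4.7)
The plan is to prove this lower bound by embedding \textsc{Lopsided Set Disjointness} into a direct product of $|A|$ independent single-coordinate asymmetric membership instances, and then invoking a standard lower bound for asymmetric communication complexity of membership/indexing combined with an information-complexity direct sum argument.

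First, I would set up the hard distribution. Partition the universe $U$ into $|A|$ disjoint blocks $U_1,\dots,U_{|A|}$ of size $b$ each. In the hard distribution, Alice's set $A$ is formed by choosing one element $a_i \in U_i$ independently and uniformly at random from each block, and Bob's set $B$ is formed by including each element of $U$ independently with probability $1/2$. Then $A\cap B=\emptyset$ iff for every $i$, $a_i \notin B\cap U_i$; so LSD is the AND of $|A|$ independent single-block instances, each of which is the asymmetric membership problem where Alice holds an index $a_i\in[b]$ and Bob holds a subset $B_i \subseteq [b]$, and the goal is to decide $a_i \in B_i$.

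Second, I would use the classical asymmetric communication complexity bound for membership (Miltersen, Nisan, Safra, Wigderson): any bounded-error protocol for a single instance in which Alice sends at most $\delta \log b$ bits forces Bob to send at least $b^{1-\cO(\delta)}$ bits. Rather than invoking this only for communication, I would rephrase it in the language of information complexity over the uniform distribution on $(a_i, B_i)$: for some constant $c$, if Alice's external information is at most $\delta\log b$, then Bob's external information is at least $b^{1-c\delta}$. The translation from communication to information complexity here is standard, since messages upper bound information.

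Third, I would apply a direct sum across the $|A|$ coordinates. Because the coordinates are independent under the hard distribution and the overall protocol must compute the AND, a standard coordinate-wise splitting of the transcript's information content shows that Alice's total external information is at least $\sum_i I_{\text{A}}^i$ and Bob's at least $\sum_i I_{\text{B}}^i$, where $I_{\text{A}}^i,I_{\text{B}}^i$ are the information costs restricted to coordinate $i$. If Alice's total information is below $\delta|A|\log b$, then on average $I_{\text{A}}^i \le \delta \log b$, so the per-coordinate asymmetric bound forces $I_{\text{B}}^i \ge b^{1-\cO(\delta)}$ for most $i$, giving a total Bob information of $|A|\cdot b^{1-\cO(\delta)}$. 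Converting back, communication bounds information, yielding the claimed tradeoff.

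The main obstacle is the direct sum step: to carry out the AND-of-coordinates reduction while preserving the asymmetric tradeoff, I need a direct sum that works per-party (not just for the total number of bits). This requires showing that an error-$\tfrac{1}{9999}$ protocol for LSD implies a protocol for a single-coordinate instance whose error on a hard single-coordinate distribution is strictly bounded away from $1/2$; the cleanest way is to have the reduction fix all other coordinates by public randomness so that the AND becomes the single-coordinate answer, then argue by an averaging/Markov step that Alice and Bob's per-coordinate information in the reduced protocol matches the averages $I_{\text{A}}^i$ and $I_{\text{B}}^i$. Tracking constants through this averaging and through the asymmetric single-coordinate lower bound is the delicate part; everything else is a routine direct-sum/information-complexity calculation.
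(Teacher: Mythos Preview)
The paper does not prove this theorem at all; it is quoted verbatim from P\u{a}tra\c{s}cu's paper and used as a black box. So there is no ``paper's own proof'' to compare against, and your task was really to reconstruct P\u{a}tra\c{s}cu's argument.

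That said, your proposal has a genuine gap. Under your hard distribution, Bob includes each element of $U$ independently with probability $1/2$, so $\Pr[A\cap B=\emptyset]=2^{-|A|}$. A protocol that always outputs ``not disjoint'' then has error $2^{-|A|}\ll \tfrac{1}{9999}$, and hence your distribution cannot witness any nontrivial lower bound for error $\tfrac{1}{9999}$. You partly acknowledge this when you say the reduction should ``fix all other coordinates by public randomness so that the AND becomes the single-coordinate answer,'' i.e.\ condition the other blocks on being disjoint; but then the inputs are no longer drawn from the product distribution you set up, and the additivity of external information you invoke (the decomposition $\sum_i I_{\mathrm A}^i$, $\sum_i I_{\mathrm B}^i$) no longer holds as stated. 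Making an information-complexity direct sum work for an AND of coordinates with this kind of conditioning is exactly the hard part, and your sketch does not address it.

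For context, P\u{a}tra\c{s}cu's actual proof does not go through information complexity. It uses the \emph{richness} (round-elimination/rectangle) technique: one shows that the communication matrix of Blocked-LSD is $[\alpha,\beta]$-rich and that every large $1$-monochromatic rectangle must have dimensions bounded in a way that forces the asymmetric tradeoff ``Alice sends $\ge \delta|A|\log b$ or Bob sends $\ge |A|\cdot b^{1-\cO(\delta)}$.'' The tradeoff form of the statement is in fact the hallmark of a richness argument rather than an information-cost one. If you want to rebuild the proof, that is the route to take.
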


Let us suppose that there exists a data structure for partially dynamic LCS that processes each update in expected amortized time $t$.
Formally, the expected total time to process the first $n$ operations should be at most $n\cdot t$, for every $n$.
Our aim is to perform some updates on a blank data structure to ensure that the next update takes $\cO(t)$ expected worst-case time.

As in our current deamortization proof for this problem, we first initialize the structure using $\cO(n)$ updates. Then, while this is possible, we perform an expensive (pair of) update(s). 
However, now expensive means that the expected time is larger than $C\cdot t$ for some constant $C$.
Note that in order to compute the expected time we consider all possible sequences of
random bits. For each such sequence, we feed the structure with the whole sequence of updates chosen so far, and
consider the time taken by the next update. By taking the average over all sequences of random bits we get the expected time
of an update.
By the same argument as above, there are $\cO(n)$ such updates.
Hence, the next update will require $\cO(t)$ expected worst-case time.

Then, by combining~\cref{lem:restricted,lem:lcs2} we get a distribution over $\cOtilde(n)$-size data structures for
the \textsc{Butterfly Reachability} problem such that any query takes $\cO(t)$ expected worst-case time.
Note that we can assume that the query itself is deterministic once we have fixed the data structure.
The rest of this section is devoted to proving the following result, which is an extension of~\cref{thm:butterfly}.

\begin{theorem}
Suppose that there exists a probability distribution over $\cOtilde(n)$-size data structures for the \textsc{Butterfly Reachability} problem,
so that the expected (over the distribution) time to answer each reachability query is at most $t$.
Then $t=\Omega(\frac{\log n}{\log\log n})$.
\label{thm:butterflyrand}
\end{theorem}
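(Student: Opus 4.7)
I would derive \cref{thm:butterflyrand} by combining a standard Markov-style reduction from Las Vegas to Monte Carlo with a minor adaptation of Pătraşcu's reduction from \textsc{Lopsided Set Disjointness} (LSD) to \textsc{Butterfly Reachability}, finally invoking \cref{thm:lsdlb} for the communication lower bound.

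\emph{Step 1 (Las Vegas to Monte Carlo).} Given the hypothetical distribution $\mathcal{D}$ of $\cOtilde(n)$-size data structures with expected query time at most $t$, Markov's inequality yields, for any fixed query $\sigma$ and constant $C\ge 2$, $\Pr_{D\sim\mathcal{D}}[T(D,\sigma)>Ct]\le 1/C$. Truncating each query at $Ct$ steps and reporting an arbitrary answer on timeout produces a Monte Carlo data structure of size $\cOtilde(n)$, worst-case query time $\cO(Ct)$, and per-query error probability at most $1/C$.

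\emph{Step 2 (Reduction from LSD).} I would then follow Pătraşcu's reduction: fix a butterfly graph of size $n$, degree $b=\polylog n$, and depth $d=\Theta(\log n/\log\log n)$, and encode an LSD instance so that $A\cap B=\emptyset$ iff each of $|B|$ reachability queries answers ``no''. Simulating the Monte Carlo data structure as a two-party protocol, Alice (holding $A$, which determines the subgraph) builds the data structure using her private randomness and responds with probed cell contents, while Bob (holding $B$) sends cell addresses; each probe costs $\cO(\log n)$ bits. To push the overall protocol error below $1/9999$, I would amplify each query by $\cO(\log |B|)$ independent repetitions (each using a fresh draw from $\mathcal{D}$) and take the majority answer. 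A union bound over the $|B|$ queries then gives overall error below $1/9999$, while each player transmits only $\cOtilde(|B|\cdot t)$ bits in total.

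\emph{Step 3 (Contradiction).} Applying \cref{thm:lsdlb} with a sufficiently small constant $\delta>0$, one of the two communicators must send either $\delta|A|\log b = \Omega(|A|\log\log n)$ bits or $|A|\cdot b^{1-\cO(\delta)}$ bits. With $|A|,|B|=\Theta(n/b)$ as in Pătraşcu's parameter choice, matching $\cOtilde(|B|\cdot t)=\cOtilde((n/b)\cdot t)$ against the LSD lower bound forces $t=\Omega(\log n/\log\log n)$, contradicting the hypothesis that $t=o(\log n/\log\log n)$.

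The main obstacle will be the amplification step: one has to verify that re-sampling fresh data structures from $\mathcal{D}$ per repetition is admissible inside Pătraşcu's protocol skeleton, and that the additional $\polylog n$ factor in the communication does not erase the separation between the upper bound and the LSD lower bound. This is precisely the ``slight generalisation'' of Pătraşcu's reduction advertised in the introduction; the Markov-style conversion of Step 1 is otherwise routine, and the combinatorial content of Steps 2--3 is entirely inherited from the deterministic/Monte Carlo version of \cref{thm:butterfly}.
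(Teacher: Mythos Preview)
Your plan follows the paper's skeleton, but the amplification in Step~2 is a genuine gap, and it is exactly the obstacle you flag without resolving: the extra $\polylog n$ factor \emph{does} erase the separation. In P\u{a}tra\c{s}cu's reduction the querier sends roughly $t\cdot k\log(s/k)$ bits, where $k=|A|/d$ is the number of queries and $s=\cOtilde(n)$; with $|A|=n$, $b=\polylog n$ and $d=\Theta(\log n/\log\log n)$ this is $t\cdot(n/d)\cdot\cO(\log\log n)$, and matching it against the LSD bound $\delta|A|\log b=\Theta(n\log\log n)$ forces $t/d=\Omega(1)$, which is precisely $t=\Omega(\log n/\log\log n)$. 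There is no slack here: multiplying the protocol by the $\Theta(\log k)=\Theta(\log n)$ repetitions your union bound requires degrades the conclusion to $t=\Omega(d/\log n)=\Omega(1/\log\log n)$, which is vacuous. Hiding this inside a $\cOtilde$ on the communication (as in your ``$\cOtilde(|B|\cdot t)$ bits'') just moves the lost factor out of sight.

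The paper avoids amplification altogether. Bob samples a \emph{single} structure from $\mathcal{D}$; Alice truncates each of her $k=|A|/d$ queries at $c\cdot t$ probes, so each times out with probability at most $1/c$ by Markov. A second application of Markov (no independence needed) gives that with probability at least $1-1/9999$ at most $9999k/c$ queries time out. For those few failures Alice sends the source--sink pair explicitly ($2\log s$ bits each) and Bob, who holds the graph, answers with one bit. Choosing $c$ a large enough constant makes this fallback term a small fraction of the main communication, so the LSD bound still yields $t=\Omega(d)$. Two smaller corrections: in P\u{a}tra\c{s}cu's reduction it is Bob's input that determines the subgraph and Alice who issues the $|A|/d$ queries (so Bob builds the structure, Alice probes), and $|A|=n$ rather than $\Theta(n/b)$; your role and parameter assignments would need to be fixed even after replacing amplification by the fallback mechanism.
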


\begin{proof}
We will show that a probability distribution over $\cOtilde(n)$-size data structure for \textsc{Butterfly Reachability} with 
the expected query time being $o(\frac{\log n}{\log\log n})$ implies a protocol for LSD that contradicts~\cref{thm:lsdlb}.

Let us first briefly recall how a lower bound for a communication problem can yield a lower bound for a data-structure problem (as in~\cite{DBLP:journals/siamcomp/Patrascu11}).
Bob constructs a data structure over his set $B$, while Alice simulates a batch of $k$ queries, asking Bob for each of the memory locations of the data structure accessed by those queries.
If the data structure of Bob has size $s$, for each cell-probe, Alice sends $\log \binom{s}{k}$ and Bob responds with $k\cdot w$ bits.
Then, if the data structure answers queries in time $t$, we get a protocol for LSD where Alice sends $t\cdot \log \binom{s}{k}$ bits and Bob sends $t\cdot k\cdot w$ bits, where $w=\Theta(\log s)$ is the word-size.

Our proof closely follows the proof of \cite[Reduction 12]{DBLP:journals/siamcomp/Patrascu11}.
This proof shows how a structured version of LSD (for which~\cref{thm:lsdlb} still holds), reduces to the problem of answering in parallel $|A|/d$ reachability queries for a subgraph of a degree-$b$ butterfly graph with $|A|=n$ non-sink vertices, $n\cdot b$ edges, and depth $d=\Theta(\log n / \log b)$.
We set $b=2\log^{2}n$, so $d=\Theta(\log n/\log\log n)$, the size of our structure constructed for the graph is $s=\cOtilde(n)$, and $w=\Theta(\log n)$.

Bob chooses from the probability distribution a structure of size $s$ constructed for the graph,
and Alice simulates the queries, allowing $c\cdot t$ time for each of them, where $c$ is a constant that will be specified later.
By Markov's inequality, each query requires more than $c\cdot t$ time with probability at most $1/c$.
Hence, in expectation, $\frac{|A|}{dc}$ of Alice's $|A|/d$ queries will remain unanswered.
Then, we can apply again Markov's inequality, to show that the probability of having more than $9999\cdot \frac{|A|}{dc}$ such queries is at most $1/9999$.
In this case, the protocol fails, which is allowed by~\cref{thm:lsdlb}.

For each query that still remains unanswered (out of the allowed $9999\cdot \frac{|A|}{dc}$), Alice explicitly sends the source and sink to Bob, using $2\log s$ bits, 
while Bob checks reachability and responds with 1 bit.
Thus, overall, Alice sends at most 
\begin{align*}
c t\cdot \log \binom{s}{|A|/d} + 9999\cdot 2\log s \cdot \frac{|A|}{dc} & =  c\cdot|A|\cdot \cO \left( \frac{t}{d}\cdot \log \left( \frac{s\cdot d}{|A|} \right) \right) + \frac{|A|}{c} \cdot \cO \left( \frac{\log s}{d} \right) \\
& =  c\cdot n\cdot \cO \left( \frac{t}{d} \cdot \log\log n\right) + \frac{n}{c} \cdot \cO(\log \log n)
\end{align*}
bits and Bob sends at most $c\cdot t \cdot w \cdot  \frac{|A|}{d}+9999\cdot \frac{|A|}{dc} = c\cdot n \cdot \frac{t}{d}\cdot\log n+\cO(\frac{n}{cd})$ bits.

We employ~\cref{thm:lsdlb} with $\delta$ small enough as to make the $b^{1-\cO(\delta)}$ term in its statement at least $\sqrt{b}$.
Then, either Alice sends at least $\delta n \log b$ bits or Bob sends at least $n \sqrt{b}$ bits.
\begin{itemize}
\item In the former case, by setting $c$ to be sufficiently large so that the second term in the expression for the bits sent by Alice is at most $\delta/2\cdot n \log b$, we have $\frac{t}{d} \cdot \log\log n = \Omega(\log b)$, so $\frac{t}{d}=\Omega(1)$.
\item In the latter case, again by setting $c$ to be sufficiently large so that the second term in the expression for the bits sent by Bob
is at most $\frac{1}{2} \cdot n\sqrt{b}$, we have $\frac{t}{d} \cdot \log n = \Omega(\sqrt{b})$, so $\frac{t}{d}=\Omega(1)$.
\end{itemize}
In either case, we obtain $t=\Omega(d)=\Omega(\log n / \log \log n)$.
\end{proof}

\subsubsection{Fully Dynamic LCS}

We now proceed to dynamic LCS. This requires an additional definition.

\Dynproblem{Activation HIA}{Two unlabeled $(b,n)$-normal trees.}{Activate a pair of leaves, one from each tree, attached at depths $d'$ and $d+1-d'$,
for some $d'=1,2,\ldots,d$, by giving them a unique label.}{Given two leaves at depth $d$, is the total weight of their HIA $d+1$?}

\begin{lemma}
If, for some constant $c$, there exists a structure of size $\cO(n^c)$ for \textsc{Activation HIA}
that processes updates in time $t_{u}$ and answers queries in time $t_{q}$, then there exists a structure of size $\cOtilde(n\cdot t_{u})$ for \textsc{Restricted HIA}, which answers queries in time $\cO(t_{q})$.
\label{lem:activation}
\end{lemma}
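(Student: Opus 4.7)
The plan is to obtain a static Restricted HIA structure by simulating Activation HIA offline: I take the two unlabeled $(b,n)$-normal trees underlying the Restricted HIA instance, hand them to the hypothetical Activation HIA preprocessing, and then feed in the $n$ attached pairs of labeled leaves as $n$ activation updates. After processing all activations, any Restricted HIA query is literally an Activation HIA query on the resulting state, answerable in $\cO(t_q)$ time, so no query overhead is incurred.

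The non-trivial part is realizing size $\cOtilde(n\cdot t_u)$ rather than $\cO(n^{c})$. I work in the cell-probe model and only materialize cells whose contents differ from the pre-activation, post-preprocessing state. Since each of the $n$ activation updates writes to at most $t_u$ cells, the set $M$ of modified cells satisfies $|M|\le n\cdot t_u$. I would store $M$ together with its cell contents in a hash table of size $\cOtilde(n\cdot t_u)$; separately, I keep the two $(b,n)$-normal trees explicitly in $\cOtilde(n)$ space, using the standard trick of rebuilding this dictionary periodically to keep the amortized insertion cost under control.

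To answer a query, I simulate the Activation HIA query algorithm; for each probe to address $a$, I first look up $a$ in the hash table. On a hit, I return the stored value; on a miss, I return the value the Activation HIA preprocessing would have written into cell $a$ from the unlabeled trees, which is a deterministic function of the input evaluable on demand from the stored trees. In the cell-probe model this recomputation is not charged, so the total query cost remains $\cO(t_q)$ probes, and the simulated execution is indistinguishable from running Activation HIA on the explicit state.

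The main obstacle will be making the \emph{virtual initial state} argument precise: one must argue that the Activation HIA algorithm's $t_u$-cell-per-update guarantee is measured against a fixed memory configuration (the one produced by preprocessing on the unlabeled trees), so that the set of modified cells after the $n$ activations is well defined and of size at most $n\cdot t_u$, and that the query algorithm behaves identically whether it reads the truly stored initial cells or a faithfully recomputed copy. This is standard in the cell-probe setting, but it is the step that actually buys the gap between the $\cO(n^{c})$ native size of Activation HIA and the $\cOtilde(n\cdot t_u)$ bound claimed for Restricted HIA, and must be spelled out so that the subsequent lower-bound arguments via \cref{lem:restricted} and \cref{thm:butterflyrand} go through.
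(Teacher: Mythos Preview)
There is a genuine gap in your ``virtual initial state'' argument. You claim that on a hash-table miss the query algorithm can recompute the value $D_0[a]$ that \textsc{Activation HIA} preprocessing would have placed in cell $a$, and that ``in the cell-probe model this recomputation is not charged''. This conflates free \emph{computation} with free \emph{memory access}: in the cell-probe model arithmetic and branching are free, but every read from the stored data structure is a probe. Since the blank state $D_0$ depends on the unlabeled trees, which you explicitly store in $\cOtilde(n)$ cells, recovering $D_0[a]$ may require reading many of those cells; nothing in the hypothesis guarantees that a single cell of the $\cO(n^{c})$-size preprocessed state is determined by $\cO(1)$ cells of the raw input. So each of your $t_q$ virtual probes could blow up into $\cOtilde(n)$ actual probes, destroying the query bound. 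And if instead you simply materialize $D_0$, your space becomes $\cO(n^{c})$, not $\cOtilde(n\cdot t_u)$.

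The paper avoids exactly this obstacle by first shrinking the trees. It cuts each depth-$d$ tree into edge-disjoint fragments of depth $D=d/c$ (after a small depth shift of $\T_2$), so that the relevant pairs of fragments form instances of \textsc{Restricted HIA} on $(b,m)$-normal trees with $m=n^{1/c}$. A single blank \textsc{Activation HIA} structure on these smaller trees has size $\cO(m^{c})=\cO(n)$ and can therefore be stored explicitly once. For each fragment pair one replays its activations against this common blank structure, records only the modified cells in a hash table, and resets; the total number of activations over all pairs is $\cOtilde(n)$, so all diffs together occupy $\cOtilde(n\cdot t_u)$ space. An original query reduces to $c=\cO(1)$ queries on smaller instances, each simulated against the stored blank structure plus the appropriate diff. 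The key idea you are missing is precisely this use of the constant $c$ to drive the blank-structure size down from $\cO(n^{c})$ to $\cO(n)$ via the fragment decomposition; without it the claimed space bound cannot be achieved.
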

\begin{proof}
Recall that $\T_{1}$ and $\T_{2}$ in the \textsc{Restricted HIA} problem are complete $b$-ary trees of depth $d=\log_{b}n$ with some extra leaves.
Generalizing the proof of Lemma~\ref{lem:fragments},
we want to partition both complete $b$-ary trees into smaller edge-disjoint trees of depth $D=d/c$ called fragments.
Let us first modify $\T_2$ by attaching a path consisting of $p$ edges, where $p <D$ and $p\equiv -d \pmod{D}$, to $\T_2$'s root and then rooting $\T_2$ at the new endpoint of this path. 
After this modification, the sum of the depths of the nodes where two leaves with the same label are attached, with $d'$ being the depth of the node at $\T_1$, is congruent to $d'+(d+1-d')+p\equiv 1 \pmod{D}$.
Any reference to $\T_2$ in the remainder of this proof refers to $\T_2$ after this modification.

Let us now describe how to partition $\T_{1}$ and $\T_{2}$ into fragments.
We select all nodes at depth~$\delta$ in both $\T_{1}$ and $\T_2$, such that $\delta \equiv 0 \pmod{D}$.
For every selected node $w$, we create two copies of it, $w'$ and $w''$, such that $w'$ inherits the ingoing edge from $w$'s parent and becomes a leaf of a fragment, while $w''$ inherits the outgoing edges to children of $w$ and becomes the root of some fragment.
We will say that the new leaf $w'$ corresponds to the original $w$, and every node that was not split corresponds to itself.
We conceptually add extra nodes to the fragments with depth smaller than $D$ so as to make all of them identical. (One can think of this as having an embedding of each fragment to a complete $b$-ary tree of depth $D$.)

Consider a fragment $A$ of $\T_{1}$ and a fragment $B$ of $\T_{2}$. We define the corresponding instance of \textsc{Restricted HIA},
denoted $(A,B)$, as follows. Consider a pair of leaves $u$ and $v$ with the same unique label attached to the nodes corresponding
to $u'\in A$ and $v'\in B$ in the original instance. We attach a pair of such leaves to $u'$ and $v'$ in $(A,B)$. We claim
that $(A,B)$ is a valid instance of \textsc{Restricted HIA} over $(b,m)$-normal
trees, where $m=n^{1/c}$.
This requires checking that the number of leaves attached to a node at depth $d'$ in $A$ or $B$ is at most $b^{D+1-d'}$.
In the original instance, a pair of leaves is attached at depths $d'$ and $d+1-d'$, for some $d'=1,2,\ldots,d$.
In $(A,B)$ the corresponding pair of leaves is attached at depths $x$ and $y$ with $x+y \equiv 1 \pmod{D}$ as argued above. 
Moreover, recall that the roots of fragments do not correspond to nodes of the original trees and hence $0 <x \leq D$ and $0 <y \leq D$.
Consequently, $x+y=D+1$.
For any $d'=1,2,\ldots,D$ and any node $u$ at depth $d'$ in $A$, we have at most $b^{D+1-d'}$
nodes at depth $D+1-d'$ in $B$, and hence attach at most that many leaves to $u$, so indeed
the number of leaves attached to any node of $A$ is as required.
A symmetric argument can be used to bound the number of leaves attached to any node of $B$, so $(A,B)$ is indeed a valid instance.

We now claim that answering a query in the original instance reduces to $c$ queries in the smaller instances.
In the original instance, we seek an ancestor $u'$ of a leaf $u\in \T_{1}$ and an ancestor $v'$ of
a leaf $v\in \T_{2}$ such that there is a pair of leaves with the same unique label attached to both $u'$ and $v'$.
We iterate over every fragment $A$ above $u$, locating the unique fragment $B$
above $v$ for which it is possible
that $A$ contains a node corresponding to node $g'$ in~$\T_1$ and $B$ contains
a node corresponding to node $f'$ in $\T_2$ such that there is a pair of leaves with the same unique label attached to both $g'$ and $f'$, and querying
the smaller instance $(A,B)$.
Note that such $B$ is unique by the choice of selected nodes and the fact that if $g'$ is at depth $d'$ in $\T_{1}$ then $f'$ must be at depth $d+1-d'+p$ in $\T_{2}$ and $d+1-d'+p=1-d' \pmod{D}$. 

We cannot afford to build a separate structure for every smaller instance $(A,B)$. However, we can afford
to build a single instance of \textsc{Activation HIA} corresponding to a pair of $(b,m)$-normal trees. By our assumption,
such a blank structure takes $\cO(m^c)=\cO(n)$ space. We first gather, for every smaller instance $(A,B)$,
all pairs of leaves that should be attached to $A$'s and $B$'s nodes. This can be done efficiently by iterating over the at most $d\cdot b^{d+1}$ pairs of
leaves with the same unique label in the original instance and maintaining a hash table with pointers to lists corresponding
to the smaller instances. Then, for every smaller instance, we proceed as follows. We first issue updates to the blank
structure to make it correspond to the current smaller instance. During the updates, we save the modified memory
locations, and after the final update we prepare a hash table mapping the modified memory location to its final content.
Then, we reset the blank structure to its initial state. This takes $\cOtilde(n\cdot t_{u})$ time and space overall
and allows us to later simulate a query on any smaller instance using the stored blank structure and the appropriate
hash table describing the modified memory locations in $\cO(t_{q})$ time.
\end{proof}

\begin{lemma}
If there exists a polynomial-size structure for maintaining the LCS of two dynamic strings of length $\cO(n^{2}\log^{2} n)$, requiring
$t$ time per update, then there exists a polynomial-size structure for \textsc{Activation HIA}
which processes updates and answers queries in $\cO(t)$ time.
\label{lem:lcs}
\end{lemma}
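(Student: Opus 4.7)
The plan is to adapt the reduction in \cref{lem:lcs2} to the fully dynamic setting where both strings may be modified. Given an instance of Activation HIA on two $(b,n)$-normal trees $\T_1,\T_2$ of depth $d=\log_b n$, I will enumerate in a preprocessing stage \emph{all} pairs $(u',v')$ of nodes that could ever receive activated leaves, namely every $u'$ at depth $d'$ in $\T_1$ and every $v'$ at depth $d+1-d'$ in $\T_2$, for every $d'=1,\ldots,d$. For each such pair I place the gadget $\pathtree(u')\mathtt{?}\pathtree^r(v')$ in $T$, with consecutive gadgets separated by a fresh character $\mathtt{.}$, where $\mathtt{?}$ is a placeholder that appears nowhere else. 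The string $S$ is constructed exactly as in \cref{lem:lcs2}, concatenating the gadgets $s\#t$ for all $s,t\in\{1,\ldots,b\}^d$, separated by a fresh character $\mathtt{,}$. There are $\cO(bn\log_b n)$ potential pairs, each contributing $\cO(\log_b n)$ characters, so $|T|=\cO(bn\log_b^2 n)=\cO(n^2\log^2 n)$ for $b\le n$, and $|S|=\cO(n^2\log n)$; precomputed rank tables allow constant-time access to the position of any gadget in $T$ or $S$.

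Activations and queries are then simulated by substitutions. An activation of a pair attached to $u'$ and $v'$ is implemented by a single substitution in $T$ that turns the placeholder $\mathtt{?}$ inside the $(u',v')$-gadget into a fixed separator $\$$. A query for two depth-$d$ leaves $u,v$ is, as in \cref{lem:lcs2}, a pair of substitutions in $S$: replace $\#$ by $\$$ in the $(u,v)$-gadget, read the current LCS length $L$ from the dynamic LCS data structure, and then restore $\#$. Each Activation-HIA operation thus triggers $\cO(1)$ updates of the dynamic LCS structure and runs in $\cO(t)$ time; the polynomial-space assumption on the LCS structure immediately gives a polynomial-space structure for Activation HIA.

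For correctness I would argue that $L=d+2$ iff the queried HIA has total weight $d+1$. Because the separators $\mathtt{.}$, $\mathtt{,}$, and the placeholder $\mathtt{?}$ each appear in only one of $S,T$, any common substring longer than $d$ must contain $\$$. The only $\$$ in $S$ during the query is the one just substituted in, and the only $\$$s in $T$ sit inside activated gadgets $\pathtree(u'_a)\$\pathtree^r(v'_a)$, each surrounded by $\mathtt{.}$. The match centred on any such pair of $\$$s has length equal to the length of the common suffix of $\pathtree(u)$ and $\pathtree(u'_a)$, plus one, plus the length of the common prefix of $\pathtree^r(v)$ and $\pathtree^r(v'_a)$, which attains $d+2$ exactly when $u'_a$ is an ancestor of $u$ and $v'_a$ is an ancestor of $v$ --- that is, exactly the HIA condition.

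The main obstacle, and the reason the reduction needs an idea beyond \cref{lem:lcs2}, is that an activation must be implementable in $\cO(1)$ substitutions even though the pair of leaves to activate is chosen on the fly; this is resolved by pre-laying out all potentially activatable gadgets as ``dormant'' in $T$ and then simply flipping the placeholder of the appropriate one to the live separator $\$$. A minor point is that, unlike \cref{lem:lcs2}, a single distinguished character $\$$ suffices here because distinct gadgets are bounded by $\mathtt{.}$ and so cannot be glued together into longer spurious matches, while the unique-label requirement of Activation HIA is fulfilled implicitly by the distinct positions of the gadgets in $T$.
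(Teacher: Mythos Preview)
Your proof is correct and is essentially the same construction as the paper's: both pre-lay out in $T$ a gadget $\pathtree(u')\,c_{\text{dormant}}\,\pathtree^r(v')$ for every compatible pair $(u',v')$, implement an activation by a single substitution that flips the dormant separator to the live one, and implement a query by two substitutions in the $(u,v)$-gadget of $S$; the only differences are the names of the special characters (your $\mathtt{?}/\$/\#$ versus the paper's $\$/\#/\&$) and that you enumerate attachment \emph{nodes} rather than extra \emph{leaves}, which is immaterial since the gadget depends only on $(u',v')$. Your size bounds and the $L=d+2$ correctness argument match the paper's.
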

\begin{proof}
Recall that in \textsc{Activation HIA} we are working with two $(b,n)$-normal trees $\T_{1}$ and $\T_{2}$ obtained
from complete $b$-ary trees of depth $d=\log_{b}n$. 
For every $d'=1,2,\ldots,d$ we consider every pair of leaves $u\in \T_{1}$ and $v\in \T_{2}$ with the same label
attached to $u'\in \T_{1}$ and $v'\in \T_{2}$
at depths $d'$ and $d+1-d'$, respectively, and construct the following gadget:
\[ \pathtree(u') \texttt{\$}  \pathtree^{r}(v'). \]
We concatenate all such gadgets, separated by $\texttt{.}$ characters, to obtain $T$. The length of $T$ is $\cO(n^{2}\log^{2}n)$.

For each pair of leaves $u\in \T_{1}$ and $v\in \T_{2}$ at depth $d$, we create a similar gadget:
\[ \pathtree(u)  \texttt{\&}  \pathtree^{r}(v). \]
We concatenate all such gadgets, separated by $\texttt{,}$ characters, to obtain $S$. The length of $S$ is $\cO(n^{2}\log n)$.

Activating a pair of leaves $u\in \T_{1}$ and $v\in \T_{2}$ is implemented with replacing $\$$ by $\#$ in the corresponding
gadget. To answer a query concerning a pair of leaves $u\in \T_{1}$ and $v\in \T_{2}$, we temporarily replace
$\$$ by $\#$ in the corresponding gadget, find the length $L$ of the LCS, and then restore the gadget.
We claim that $L=d+2$ if and only if the total weight of HIA is $d+1$. Recall that the total weight of HIA is $d+1$
if and only if there exists an ancestor $u'$ of $u$ at depth $d'$ and $v'$ of $v$ at depth $d+1-d'$ with
a pair of leaves having the same unique label. Since gadgets are separated with different characters in $T$ and $S$,
any $\pathtree(w)$ is of length at most $d$, $L=d+2$ must correspond to a substring $\pathtree(u') \# \pathtree(v')$,
for some ancestor $u'$ of $u$ at depth~$d'$ and $v'$ of $v$ at depth $d+1-d'$. But such a substring occurs in $T$
if and only if a pair of leaves connected to $u'$ and $v'$ has been activated, so our answer is indeed correct.
\end{proof}

We are now ready to prove the main result of this subsection.

\begin{theorem}\label{thm:lb}
Any polynomial-size structure for maintaining an LCS of two dynamic strings, each of length at most $n$, requires $\Omega(\log n/ \log \log n)$ time per update operation.
\end{theorem}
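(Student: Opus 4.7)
The plan is to chain together the reductions \cref{lem:lcs}, \cref{lem:activation}, and \cref{lem:restricted}, and then invoke the polynomial-size version of the lower bound for \textsc{Butterfly Reachability}. Assume toward contradiction a polynomial-size structure for fully dynamic LCS on strings of length at most $n$ with update time $t$; I will produce from it a polynomial-size structure for \textsc{Butterfly Reachability} on a graph of size $\mathrm{poly}(n)$ whose query time is $\cO(t)$, after which $t=\Omega(\log n/\log\log n)$ follows immediately.

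Concretely, I would first set $n'=\Theta(\sqrt{n}/\log n)$ so that $\cO(n'^{2}\log^{2}n')\le n$ as required by \cref{lem:lcs}; this hypothesis then yields a polynomial-size structure for \textsc{Activation HIA} on $(b,n')$-normal trees supporting both updates and queries in $\cO(t)$ time. Feeding this into \cref{lem:activation}, I obtain a structure for \textsc{Restricted HIA} on $(b,m)$-normal trees, where $m=(n')^{1/c}$ for the exponent $c$ in the polynomial bound on the size of the Activation HIA structure; the resulting structure has size $\cOtilde(n'\cdot t)=\mathrm{poly}(m)$ and answers queries in $\cO(t)$ time. The reduction in the proof of \cref{lem:restricted} carries over verbatim to polynomial-size structures, yielding a polynomial-size structure for \textsc{Butterfly Reachability} on a graph of size $m$ with query time $\cO(t)$. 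Because $\log m=\Theta(\log n)$, the polynomial-size strengthening of \cref{thm:butterfly} (also due to Pătraşcu) gives $t=\Omega(\log m/\log\log m)=\Omega(\log n/\log\log n)$, as required.

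The main obstacle is extending this conclusion to amortized update time and Las Vegas randomization, which I would handle by imitating the arguments already developed for \cref{thm:lbpartially}. For deamortization, I would start from the blank data structure and execute expensive query-simulating pairs of updates (replacing $\$$ by $\#$ in a gadget and immediately reverting) until none of them cost more than $\Theta(t)$; the amortized-time budget caps the number of such expensive pairs at a polynomial, after which the standard memory-modification trick makes every subsequent query worst-case cheap. A subtlety specific to the fully dynamic setting is that activations in \textsc{Activation HIA} are single, irreversible updates rather than pairs, so this deamortization must be interleaved with the activation sequence rather than performed once at the outset; a charging argument against the overall amortized budget of $\cO((\#\text{operations})\cdot t)$ still suffices. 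For randomization, I would reuse Markov's inequality and the reduction from \textsc{Lopsided Set Disjointness} exactly as in the proof of \cref{thm:butterflyrand}; the only dependence on the structure size $s$ is through $\log s=\Theta(\log n)$, so the argument goes through unchanged for polynomial $s$.
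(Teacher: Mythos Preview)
Your overall chain (\cref{lem:lcs} $\to$ \cref{lem:activation} $\to$ \cref{lem:restricted} $\to$ \cref{thm:butterfly}) is exactly the paper's route, but there is a genuine gap in how you apply \cref{lem:activation}, and it leads you to invoke a lower bound that does not hold.

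The output of \cref{lem:activation} is a \textsc{Restricted HIA} structure on $(b,n')$-normal trees of size $\cOtilde(n'\cdot t_u)$, not on $(b,(n')^{1/c})$-normal trees of polynomial size. The whole purpose of that lemma is to absorb the polynomial blow-up: it takes a size-$\cO(K^c)$ \textsc{Activation HIA} structure on $(b,K)$-normal trees and, by decomposing the depth-$d$ trees into depth-$d/c$ fragments and replaying each smaller instance on a single blank structure (saving only the modified cells), produces a \emph{near-linear}-size \textsc{Restricted HIA} structure on the original parameter. With $t_u=\cO(t)=o(\log n/\log\log n)$ this size is $\cOtilde(n')$, so \cref{lem:restricted} and the near-linear-size bound of \cref{thm:butterfly} apply directly.

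The ``polynomial-size strengthening of \cref{thm:butterfly}'' you appeal to cannot exist: with $\cO(n^{2})$ space one simply tabulates all source--sink reachability bits and answers each query in $\cO(1)$ time. So the argument as written does not close; the fix is precisely to read \cref{lem:activation} correctly and use the $\cOtilde(n')$-size conclusion. After that correction your proof coincides with the paper's. (The amortization and Las Vegas paragraphs you sketch are treated separately in the paper, after the proof of the theorem; your outline for those is along the same lines.)
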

\begin{proof}
Theorem~\ref{thm:butterfly} and Lemma~\ref{lem:restricted} imply that any structure of size $\cOtilde(n)$
for \textsc{Restricted HIA} requires query time $\Omega(\log n/\log\log n)$. By combining this with Lemma~\ref{lem:activation}
we get that any polynomial-size structure for \textsc{Activation HIA} with polylogarithmic query time requires update
time $\Omega(\log n/\log\log n)$. By combining this with Lemma~\ref{lem:lcs} we get the claimed result.
\end{proof}

\paragraph{Lower bound for amortized update time.}
Let us assume towards a contradiction that there exists an $\cO(n^k)$-size data structure, for constant $k$,  with $t=o(\log n / \log \log n)$ amortized update time 
for the fully dynamic LCS problem over two strings of length at most $n$.

Let $m=n^{1/2k}$.
We initialize the aforementioned fully dynamic LCS data structure for two strings $S$ and $T$, each of length $\cOtilde(m^2)$.
We follow the proof of~\cref{lem:lcs}, performing $\cOtilde(m^2)$ updates to $S$ and $T$ in order to encode 
the \textsc{Activation HIA} problem for two unlabeled $(b,m)$-normal trees.
These operations cost $\cOtilde(m^2 \cdot t)$ time and leave us with a budget of no more than $\cOtilde(m^2\cdot t)=\cOtilde(n)$ towards future updates.
Let us note at this point that the reduction underlying~\cref{lem:lcs} allows us to consider a slightly more general problem than \textsc{Activation HIA}, 
where deactivating a pair of leaves is also allowed.

Recall that in the proof of~\cref{lem:activation} we wish to perform several sequences $L_1, \ldots, L_r$ of activations in the blank \textsc{Activation HIA} structure,
of total length $\cOtilde(n)$, so that sequence $L_j$ creates an instance $\mathcal{I}_j$ of the \textsc{Restricted HIA} problem.
We process each of the sequences as follows.
We first perform the sequence $L_j$ of activations. This adds at most $|L_j| \cdot t$ credits to our budget.
Then, while there exists an expensive HIA query that requires at least $3t$ time to be answered, we perform it.
(Recall that each HIA query reduces to an update in one of the strings that is then reverted.)
When this is no longer possible, all HIA queries cost $\cO(t)$ worst-case time.
We store a snapshot of the current data structure for instance $\mathcal{I}_j$ by storing the changes in the memory from the blank structure using a hash table.
Finally, we deactivate all leaves that were activated by $L_j$. This, again, adds at most $|L_j| \cdot t$ credits to our budget.
In total, our budget starts with $\cOtilde(n)$ credits, and over processing all sequences $L_j$ receives $\cOtilde(n \cdot t)$ extra credits.
As each expensive HIA query uses at least $t$ credits, we ask at most $\cOtilde(n)$ of them, and thus the extra space is $\cOtilde(n)$.
Hence, we obtain an $\cOtilde(n)$-size data structure with $\cO(t)$ worst-case query time for the \textsc{Restricted HIA} problem, a contradiction.

\paragraph{Allowing Las Vegas randomization.}
Let us suppose that there exists data structure for fully dynamic LCS that processes each update in expected amortized time $t$.
Then, by combining our series of reductions with the above credits-based deamortization technique (with queries being expensive or cheap based on their expected cost), we obtain a distribution over $\cOtilde(n)$-size data structures for the \textsc{Butterfly Reachability} with expected query time $\cO(t)$  (expected here is meant in the sense of~\cref{thm:butterflyrand}). By~\cref{thm:butterflyrand}, we have $t=\Omega(\log n/ \log \log n)$.

\bibliographystyle{alphaurl}
\bibliography{biblio}

\end{document}